\documentclass[lettersize,journal]{IEEEtran}
\usepackage{amsmath,amssymb,amsfonts,amsthm}
\usepackage{algorithmic}
\usepackage{algorithm}
\usepackage{array}
\usepackage[caption=false,font=normalsize,labelfont=sf,textfont=sf]{subfig}
\usepackage{textcomp}
\usepackage{stfloats}
\usepackage{url}
\usepackage{verbatim}
\usepackage{graphicx}
\usepackage{cite}
\usepackage{xcolor}
\usepackage{booktabs}
\usepackage{multirow}
\usepackage{microtype}
\usepackage{bm}
\usepackage{enumitem}

\newtheorem{theorem}{Theorem}
\newtheorem{proposition}{Proposition}
\newtheorem{corollary}{Corollary}
\newtheorem{definition}{Definition}
\newtheorem{remark}{Remark}
\newtheorem{example}{Example}

\def\BibTeX{{\rm B\kern-.05em{\sc i\kern-.025em b}\kern-.08em
    T\kern-.1667em\lower.7ex\hbox{E}\kern-.125emX}}

\begin{document}

\title{Post-Hoc Trajectory-Risk Certification for Modular LLM-Based Security Agents}

\author{Zhenpeng~Li%
\IEEEcompsocitemizethanks{%
\IEEEcompsocthanksitem Zhenpeng Li is with Guangzhou Health Science College,
No.~248 Guangyuan Middle Road, Guangzhou, Guangdong 510405, China
(e-mail: 2025301001@gzws.edu.cn). Corresponding author: Zhenpeng Li.}}

\maketitle

\begin{abstract}
Autonomous security agents increasingly operate as staged decision
pipelines, e.g., classifying network traffic and then attributing
detected attacks to a specific technique. Split conformal prediction
gives each stage a finite-sample coverage guarantee, but deployment
requires a trajectory-level guarantee across the whole chain, and the
two do not compose for free---especially for stages that are already
independently trained and calibrated and cannot be jointly
recalibrated. Bonferroni allocation gives a valid distribution-free
trajectory bound but is conservative when stage errors are correlated.
We show a natural pairwise-correlation extension of this bound to
three or more stages is invalid---a lower, not upper, bound---and give
a provably valid spanning-tree alternative. We then separate two
questions routinely conflated in practice: whether stages are
dependent at all, and whether a finite audit sample is large enough to
certify that dependence, giving matching upper and information-theoretic
lower sample-complexity bounds for both. We further prove that a
common design pattern---using a coarse category to select a
fine-grained label space---mechanically manufactures near-perfect
measured stage correlation with no learned dependence behind it.

On a two-stage intrusion-detection pipeline (traffic classification,
then attack-variant attribution) across 6 open LLMs and 2
datasets, avoiding this labeling artifact drops measured correlation
from near-1 to a genuine, task-dependent $0$--$0.78$.  A direct audit
of the full trajectory-failure event, requiring simultaneous access to
both stages, becomes $13.7\%$ tighter than Bonferroni once the audit is
scaled to the sample complexity our theory requires---worse than
Bonferroni at an under-sized audit sample, consistent with the derived
threshold; our modular certificate, which composes only per-stage
certificates and a pairwise overlap bound, recovers a smaller but
strictly positive certified gain (0.6\% on average) at the same scale,
quantifying the cost of not having joint access to both stages. A
same-model/cross-model/permuted-pairing test shows the residual
dependence reflects shared sample difficulty, not shared model
representations. Trajectory coverage holds at the nominal $\alpha=0.10$
target on average across all 12 tested configurations
($92.7\%\pm2.4\%$), and deploying one dataset's calibrated classifier
on another's real traffic drives single-step miscoverage to $100\%$
even when raw classification accuracy remains as high as $78\%$:
calibrated confidence, not accuracy, is what distribution shift
destroys.
\end{abstract}

\begin{IEEEkeywords}
Conformal prediction, trajectory coverage, multi-step agents, intrusion
detection, autonomous security, large language models, uncertainty
quantification.
\end{IEEEkeywords}

\section{Introduction}
\label{sec:intro}

Security Operations Centers (SOCs) face alert volumes that consistently
exceed human analyst capacity.  LLM-based autonomous
agents---exemplified by Microsoft Copilot for Security and Google
Security AI Workbench---are emerging as first responders that triage,
classify, and act on network intrusions without human intervention,
rarely in a single decision: a typical autonomous IDS pipeline chains
\emph{traffic classification} (benign vs.\ attack category) with
\emph{threat attribution} (mapping to a specific MITRE ATT\&CK
technique), and a production system may chain a third step---automated
response (block, quarantine, escalate)---onto both.  If the first step
misclassifies a denial-of-service attack as benign, the second step
never gets the chance to attribute it correctly.

Split conformal prediction (CP)~\cite{vovk2005algorithmic,
papadopoulos2002inductive} provides \emph{distribution-free,
finite-sample} guarantees for individual classification decisions:
given a user-specified miscoverage budget $\alpha$, the conformal
prediction set $C(x)$ satisfies $\Pr[y \notin C(x)] \leq \alpha$---a
guarantee on the \emph{set}, not on the model's raw softmax confidence,
which is well known to be poorly calibrated on its
own~\cite{guo2017calibration}.  This single-step guarantee underlies
the IDS abstention problem.

However, per-step guarantees do not compose into trajectory-level
guarantees without additional analysis.  If each step independently
satisfies $\Pr[y_k \notin C_k(x)] \leq \alpha$, the probability
that \emph{at least one step} in a $K$-step pipeline fails to cover
the true label can be as large as $K\alpha$.  The standard remedy
is Bonferroni correction: allocate $\alpha/K$ per step, ensuring the
trajectory-level error rate is at most $\alpha$.  But this comes at
a steep cost.  For a 3-step pipeline with $\alpha = 0.10$, each step
receives only $\alpha/3 \approx 0.033$, producing prediction sets so
large that the agent must escalate (abstain) on a far greater fraction
of inputs than necessary.  In SOC environments where every unnecessary
escalation consumes analyst time, this conservatism directly undermines
the value of automation.

The central question of this paper is: \emph{how should per-step risk
certificates compose into a valid and informative trajectory-level
guarantee, and what does this composition reveal in LLM-based security
pipelines?}  The first half is a general question about any chain of
calibrated decisions; the second is the concrete instance we study in
depth.  We treat the general question as the frame and the two-step
IDS pipeline as its primary---and, in this paper, only---worked
instantiation.  The post-hoc certification framework and the $K=2$
identity (Sections~\ref{subsec:mrcc}, \ref{subsec:certifiability}) are
directly exercised on this pipeline in Section~\ref{sec:results}; the
general-$K$ results ($K\geq3$ non-identifiability, the spanning-tree
bound's multi-edge tree selection, and the dependence-duality
redundant-system comparison) characterize the extension beyond $K=2$
and its limits, and are not tested on a real multi-step pipeline in
this paper.

The composition question has two parts.  The distribution-free part of
the answer is the Bonferroni trajectory bound, which holds unconditionally.
Beyond that guarantee, we show that a correlation-aware
inclusion--exclusion estimate can be tighter in the studied IDS
pipeline, that a naive extension of this estimate to more than two
steps is invalid, and that a distribution-free alternative exists.
Whether the two-step correlation-aware estimate helps in a given
deployment, and how much, depends on the task: across 2 datasets and 6
LLMs, using a genuinely independent attack-variant classifier for
Step~2 (Section~\ref{sec:setup}), the mean computable inter-step Pearson
correlation is $\bar{\rho} = 0.30$, ranging from $\approx 0$ on
CIC-IDS-2018/DoS to $0.58$ on RT-IoT2022/Probe.  Hard samples tend to be
hard at both stages on RT-IoT2022, less so on CIC-IDS-2018.  We show
this heterogeneity is not noise but is governed by each task's
underlying joint-failure rate $q_{12}$, and by whether the joint audit
sample is large enough to certify it (Section~\ref{subsec:certifiability}).

\textbf{Contributions.}  This paper makes the following contributions:

\begin{enumerate}[leftmargin=*]
\item \textbf{A general composition framework for marginally
risk-controlled chains.}  We formalize trajectory-level coverage for
any \emph{marginally risk-controlled chain}---a sequence of decisions
each with a per-step risk certificate---as an abstraction that
subsumes conformal prediction, conformal risk control, and other
calibrated decision rules (Section~\ref{subsec:mrcc}).  Bonferroni
composition and a stronger multiplicative composition under a
survival-conditional guarantee follow as special cases of a single
theorem; we instantiate and evaluate this framework only on split CP
in our experiments.

\item \textbf{Post-hoc certification of dependence gain for frozen
modular chains.}  For a chain whose stages cannot be jointly
recalibrated, we separate a chain's true dependence gain into a
\emph{structural} component (what a pairwise-only audit could reveal
with infinite data) and a \emph{statistical} component (what a finite
audit actually proves), via matching \emph{oracle} and
\emph{certifiable} gains (Definition~\ref{def:certifiable_gain},
Theorem~\ref{thm:gain_decomposition}) and a post-hoc certificate valid
under data-dependent tree selection (Theorem~\ref{thm:posthoc_certificate}).
We give matching upper and information-theoretic lower audit-size
bounds for certifying \emph{any} positive gain
(Theorems~\ref{thm:positive_gain_probability}--\ref{thm:positive_gain_lower_bound}),
distinct from the harder question of beating the nominal Bonferroni
threshold (Section~\ref{subsec:certifiability})---this framework, not
the inclusion--exclusion identity below, is what explains why our own
certified bound reverses sign with audit sample size
(Section~\ref{subsec:hunter_empirical}).

\item \textbf{Correcting the naive $K>2$ extension, and a valid
replacement.}  We show that the natural pairwise-correlation extension
of the $K=2$ inclusion--exclusion identity is a \emph{lower} bound, not
an upper bound, for $K>2$; derive the exact threshold at which it
spuriously degenerates to zero (Theorem~\ref{thm:pairwise_degeneracy});
prove that pairwise information---correlations or exact overlaps---cannot
determine trajectory risk once $K \geq 3$
(Theorem~\ref{thm:pairwise_nonidentifiability}); and give a
distribution-free, finite-sample-certifiable upper bound that remains
valid at any $K$ (the Spanning-Tree Pairwise Bound,
Theorem~\ref{thm:hunter_trajectory}).

\item \textbf{Why positive dependence is an asset here, not a
liability.}  We prove that positive association helps a union-type
trajectory failure criterion while it would hurt an intersection-type
redundant-system criterion (Theorem~\ref{thm:dependence_duality}),
resolving an apparent tension with standard reliability-engineering
intuition about correlated component failures.

\item \textbf{Label-induced coupling and an empirical mechanism test.}
We prove that a deterministic (or near-deterministic) coarse-to-fine
label mapping mechanically nests the two stages' failure events,
producing near-maximal correlation with no learned behavior involved
(Theorem~\ref{thm:label_induced_coupling}); a genuinely independent
attack-variant classification task instead exhibits heterogeneous,
task-dependent coupling ($\bar{\rho} = 0.30$).  A
same-model/cross-model/permuted pairing experiment attributes the
residual coupling to shared per-sample difficulty rather than a
same-model representation-sharing increment
(Section~\ref{subsec:hunter_empirical}).

\item \textbf{Comprehensive evaluation.}  Across 2 datasets, 6 LLMs,
3 $\alpha$ levels, and 5 random seeds (36 configurations), we verify
that Bonferroni trajectory coverage holds at the marginal level, report
the full oracle/certifiable/marginal decomposition of the dependence
gain (not only the final certified value), and report the tested
$\alpha$-budget allocations.

\item \textbf{Exchangeability failure under real cross-dataset shift.}
Deploying a stage's own fine-tuned checkpoint on a different dataset's
real traffic drives single-step miscoverage to 100\% in all 12 tested
cells, even at 78\% top-1 accuracy---calibrated confidence, not raw
accuracy, is what exchangeability violation destroys, though the
resulting empty prediction sets (rate $0.96$--$1.00$) are a stark,
label-free symptom available at inference time (Section~\ref{subsec:drift}).
\end{enumerate}

\section{Related Work}
\label{sec:related}

\subsection{Conformal Prediction for Classification}

Conformal prediction~\cite{vovk2005algorithmic} provides distribution-free
coverage guarantees for prediction sets.  Split (inductive) conformal
prediction~\cite{papadopoulos2002inductive} computes nonconformity scores
on a held-out calibration set and selects a threshold $\hat{q}$ such that
the prediction set $C(x) = \{y : s(x,y) \leq \hat{q}\}$ satisfies
$\Pr[y \notin C(x)] \leq \alpha$.  Conformal risk control
(CRC)~\cite{angelopoulos2022conformal} generalizes this to arbitrary
monotone loss functions.  These methods guarantee coverage for
\emph{individual} predictions; our work addresses the orthogonal problem
of guaranteeing coverage across \emph{sequences} of predictions.

\subsection{Multiple Testing and Bonferroni Corrections}

The Bonferroni correction provides a valid union bound under arbitrary
dependence.  The \v{S}id\'{a}k correction~\cite{sidak1967rectangular}
is tighter but, unlike Bonferroni, is exact only under independence and
valid more generally only under specific positive-dependence
structures~\cite{sidak1967rectangular}; we therefore build on
Bonferroni, not \v{S}id\'{a}k, as the distribution-free baseline
throughout this paper.  Both corrections are well-known to be
conservative when errors are positively
correlated~\cite{hochberg1988sharper}.  Holm~\cite{holm1979simple}
and Benjamini--Hochberg~\cite{benjamini1995controlling} offer less
conservative alternatives for independent or positively dependent
test statistics.  Our setting differs from classical multiple testing:
we seek to control the probability of \emph{any} miscoverage in a
fixed-length pipeline, not the family-wise error rate across many
hypotheses.  The key distinction is that pipeline steps are
\emph{structurally coupled} through shared inputs and LLM
representations, producing the positive inter-step correlation that
our inclusion--exclusion analysis measures.

\subsection{Sequential and Multi-Step Conformal Prediction}

Gibbs and Cand\`{e}s~\cite{gibbs2021adaptive} introduce Adaptive
Conformal Inference (ACI) for online settings where exchangeability
is violated over time.  Barber et al.~\cite{barber2023conformal}
extend CP beyond exchangeability via weighted quantiles.  These
methods address \emph{temporal} non-exchangeability (distribution
drift across time steps) rather than \emph{compositional}
non-independence (coverage across pipeline stages within a single
input).  Bates et al.~\cite{bates2021distribution} develop
risk-controlling prediction sets with probably approximately correct
(PAC)-Bayes bounds that can, in principle, handle compound losses, but
require additional held-out data for posterior optimization.  Our
approach is complementary: it uses observed inter-step error coupling to
measure how loose the Bonferroni trajectory bound is in the studied
pipeline.

Most directly related is PASC~\cite{pasc2026}, which reduces
multi-stage joint coverage to a single scalar conformal problem on the
\emph{joint maximum nonconformity score} across all $K$ stages, giving
a distribution-free joint-coverage guarantee tight to $1/(n+1)$.  PASC
and our framework target different deployment regimes rather than
competing on the same problem.  PASC requires simultaneous access to
every stage's nonconformity score on a common calibration set, so that
the joint-maximum score can be computed and calibrated as a single
quantile; this is the right tool whenever the pipeline can be jointly
recalibrated end-to-end.  Our post-hoc modular certificate
(Section~\ref{subsec:certifiability}) instead targets stages that have
already been independently trained and calibrated, expose only
per-stage certificates (or binary pass/fail indicators) rather than
raw nonconformity scores, and so cannot be jointly recalibrated even
though no retraining is required---joint recalibration is a
post-hoc statistical step, but it still needs simultaneous access to
every stage's raw score on a shared calibration set, which the setting
we study does not provide.  This is the common case when stages come
from different vendors, different release cycles, or heterogeneous
score spaces that are not directly comparable, or when only a
pass/fail audit log is retained rather than raw scores.
Where PASC asks ``what is the tightest joint prediction set achievable
if we control the whole pipeline,'' we ask ``what can be certified
about a pipeline we do not control, from audit data alone, and how much
that certification costs.''  When joint recalibration is available,
PASC's construction should be preferred; our contribution is the
regime where it is not.  Non-exchangeable conformal risk
control~\cite{farinhas2024nonexchangeable} relaxes exchangeability
itself, complementary to our Section~\ref{subsec:drift} finding that a
fixed threshold gives no certified warning under exchangeability
violation; class-conditional conformal prediction with many
classes~\cite{ding2023class} addresses per-class validity within one
classifier rather than the cross-stage composition we study.  Our
audit-size requirements connect to sequential
testing~\cite{wald1945sequential} and time-uniform confidence
sequences~\cite{howard2021timeuniform,khosravi2026csa}, which would let
the audit sample size itself be data-dependent, an alternative to our
fixed-$n$ construction.  Outside NLP, audited post-hoc verification of
a frozen, deployed system under distribution shift has been studied for
power-grid contingency screening~\cite{manoharan2026audited}, suggesting
this certification regime is not specific to language-model pipelines.

\subsection{LLM-Based Security Agents}

LLM-based autonomous agents for cybersecurity have attracted growing
attention~\cite{motlagh2024llm,xu2024autoattacker,xu2024large}.
Microsoft Copilot for Security and Google
Security AI Workbench represent production
deployments.  Prior work on the safety of autonomous security decisions
has largely treated each decision in isolation---either its adversarial
robustness or its single-step conformal guarantee, e.g.\ conformal
prediction for online intrusion-detection models under concept
drift~\cite{escudero2025conformal} or inductive conformal anomaly
detection for anomalous sub-trajectories~\cite{laxhammar2015inductive},
both single-stage.  Trajectory-level guarantees for frozen,
independently-calibrated multi-step security pipelines
specifically---as opposed to jointly recalibrated pipelines, which
PASC~\cite{pasc2026} addresses---have not been previously addressed in
this domain.

\section{Problem Formulation}
\label{sec:formulation}

\subsection{Multi-Step Security Agent Pipeline}

We model an autonomous IDS agent as a $K$-step pipeline operating on
network traffic input $x \in \mathcal{X}$.  At each step $k \in
\{1, \ldots, K\}$, the agent produces a prediction $\hat{y}_k$ for
the true label $y_k$:

\begin{itemize}[leftmargin=*]
\item \textbf{Step 1---Traffic Classification.}  Given raw network
flow features $x$, the LLM classifies the traffic into one of $L_1$
categories (e.g., Normal, DoS, Probe, Exploit).
\item \textbf{Step 2---Threat Attribution.}  Given $x$ and the Step~1
output, the LLM maps detected attacks to MITRE ATT\&CK
techniques (e.g., T1498 Network Denial of
Service, T1110 Brute Force), producing one of $L_2$ attributions.
\item \textbf{Step $\bm{k > 2}$---Response or Further Analysis.}
Additional steps (e.g., response recommendation, severity scoring)
follow the same structure.
\end{itemize}

Each step uses a conformal predictor to quantify uncertainty.
Specifically, let $s_k(x, y)$ denote the nonconformity score at step $k$
for input $x$ and candidate label $y$.  Given a calibration set
$\mathcal{D}_{\text{cal}} = \{(x_i, y_{i,1}, \ldots, y_{i,K})\}_{i=1}^n$,
the conformal prediction set at step $k$ is
(Eq.~\eqref{eq:cp_set}):
\begin{equation}
\label{eq:cp_set}
C_k(x; \alpha_k) = \{y : s_k(x, y) \leq \hat{q}_k(\alpha_k)\},
\end{equation}
where $\hat{q}_k(\alpha_k)$ is the $\lceil (1-\alpha_k)(n+1) \rceil / n$
quantile of the calibration nonconformity scores at step $k$.

\subsection{Trajectory Coverage}

\begin{definition}[Trajectory Coverage]
\label{def:tc}
A $K$-step trajectory $\tau = (x, y_1, \ldots, y_K)$ is \emph{covered}
if the true label at every step lies within the corresponding conformal
prediction set (Eq.~\eqref{eq:tc}):
\begin{equation}
\label{eq:tc}
\text{covered}(\tau) \iff \bigwedge_{k=1}^{K} \left[ y_k \in C_k(x; \alpha_k) \right].
\end{equation}
The \emph{trajectory coverage rate} is:
\begin{equation}
\text{TC} = \Pr\!\left[\bigwedge_{k=1}^{K} y_k \in C_k(x; \alpha_k)\right].
\end{equation}
\end{definition}

\begin{definition}[Trajectory Miscoverage Rate]
\label{def:tar}
The trajectory miscoverage rate is the probability that at least one
step fails to cover the true label:
\begin{equation}
\text{TMR} = 1 - \text{TC} = \Pr\!\left[\bigvee_{k=1}^{K} y_k \notin C_k(x; \alpha_k)\right].
\end{equation}
\end{definition}

TMR is a statistical risk quantity, not a directly observable
operational one: evaluating $\text{covered}(\tau)$ requires the true
labels $y_1, \ldots, y_K$, which are unavailable at inference time.
TMR is therefore what a trajectory-level conformal guarantee controls,
not what a deployed agent can monitor online.  A deployment separately
needs an \emph{observable} escalation rule, e.g.\ $A_k = \{|C_k(X)| \neq 1\}$
(escalate whenever a step's prediction set is not a singleton) or a
more general policy $A_k = \{\pi_k(C_k(X)) = \texttt{escalate}\}$; the
resulting trajectory escalation rate $\text{TER} = \Pr[\bigcup_k A_k]$
is generally \emph{not} equal to TMR, since prediction-set size and
miscoverage are related but distinct random quantities.  We report TMR
throughout this paper because it is what our theorems control; we do
not compute TER, and readers should not conflate the two.

\subsection{The $\alpha$-Allocation Problem}

Given a total trajectory-level miscoverage budget $\alpha$, the
deployment operator must choose per-step budgets $\alpha_1, \ldots,
\alpha_K$ such that:
\begin{equation}
\label{eq:alpha_budget}
\text{TMR} \leq \alpha.
\end{equation}

The goal is to satisfy~\eqref{eq:alpha_budget} while minimizing TMR,
the statistical risk that the trajectory-level guarantee controls (TMR
is distinct from the observable escalation rate TER;
Definition~\ref{def:tar}).  Section~\ref{sec:framework} derives bounds on TMR
as a function of $\alpha_1, \ldots, \alpha_K$ and the inter-step
correlation structure.

\section{Trajectory-Level Conformal Framework}
\label{sec:framework}

Five proofs that are mechanical given their stated theorem or
proposition (Second-Order Degeneracy Threshold, Dependence Duality,
Finite-Sample Certifiability Gap, Sample Complexity of Beating the
Nominal Target, and, in Section~\ref{subsec:coupling_sources},
Coupling Provenance Decomposition) are given in full in the
supplementary material and summarized here; all theorem, proposition,
and corollary statements are complete in the main text.

We present the framework in seven parts.  Theorem~\ref{thm:bonf}
establishes the baseline Bonferroni bound as a special case of a
method-agnostic composition principle (Theorem~\ref{thm:general_composition})
that applies to any calibrated decision chain, not only conformal
prediction.  Theorem~\ref{thm:ie} gives the exact $K=2$
inclusion--exclusion identity.  Theorem~\ref{thm:hunter_trajectory}
gives a valid, certifiable upper bound for arbitrary $K$ that a naive
extension of the $K=2$ identity does not provide.
Section~\ref{subsec:certifiability} separates this structural bound from
its \emph{statistical certifiability} from a finite joint audit sample:
it distinguishes the oracle dependence gain a chain actually has from
the certifiable gain a given audit size can prove
(Theorem~\ref{thm:posthoc_certificate}), gives the convergence rate of
the resulting gap (Theorem~\ref{thm:certifiability_gap}), and gives
matching upper and information-theoretic lower bounds on the sample size
needed to certify any positive gain at all
(Theorems~\ref{thm:positive_gain_probability}--\ref{thm:positive_gain_lower_bound}).
Theorem~\ref{thm:pairwise_nonidentifiability} shows why pairwise
information alone cannot determine trajectory risk for $K\geq3$, and
Theorem~\ref{thm:sharp_k3_interval} gives the exact size of the
resulting identification gap.  Section~\ref{subsec:coupling_sources}
separates two mechanisms that can produce apparent inter-step coupling:
deterministic label nesting (Theorem~\ref{thm:label_induced_coupling})
and model-specific representation sharing versus shared sample
difficulty (Proposition~\ref{prop:coupling_decomposition}).
Theorem~\ref{thm:dependence_duality} explains, from first principles,
why positive inter-step dependence helps trajectory-level coverage even
though the same kind of dependence would hurt a redundant (parallel)
system.  Proposition~\ref{prop:equal} characterizes the optimal $\alpha$
allocation.

\subsection{Method-Agnostic Composition}
\label{subsec:mrcc}

\begin{definition}[Marginally Risk-Controlled Chain]
\label{def:mrcc}
A $K$-step decision chain is a collection of measurable failure events
$E_1, \ldots, E_K$ on a common probability space.  Let $Z_k = \mathbf{1}_{E_k}$
and $p_k = \Pr[E_k]$.  The chain is \emph{marginally risk-controlled}
at budgets $\alpha_1, \ldots, \alpha_K$ if $p_k \leq \alpha_k$ for all $k$.
No assumption is made on how the individual per-step rules are
constructed: split conformal prediction, conformal risk
control~\cite{angelopoulos2022conformal}, risk-controlling prediction
sets~\cite{bates2021distribution}, learn-then-test
calibration~\cite{angelopoulos2025learn}, and
non-conformal calibrated classifiers are all instances whenever they
provide a marginal guarantee of this form.
\end{definition}

\begin{theorem}[Universal and Conditional Composition]
\label{thm:general_composition}
For every marginally risk-controlled chain,
\begin{equation}
\label{eq:bonf}
\text{TMR} = \Pr\!\left[\bigcup_{k=1}^{K} E_k\right]
\leq \sum_{k=1}^{K} p_k
\leq \sum_{k=1}^{K} \alpha_k.
\end{equation}
If, in addition, the chain satisfies the stronger survival-conditional
guarantees
\begin{equation}
\label{eq:survival_conditional}
\Pr\!\left[E_k \,\middle|\, \bigcap_{j<k} E_j^c\right] \leq \alpha_k,
\qquad k = 1, \ldots, K,
\end{equation}
then
\begin{equation}
\label{eq:multiplicative_composition}
\text{TC} \geq \prod_{k=1}^{K} (1 - \alpha_k), \qquad
\text{TMR} \leq 1 - \prod_{k=1}^{K} (1 - \alpha_k).
\end{equation}
\end{theorem}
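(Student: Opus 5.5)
The first claim is the union bound, applied to the failure events of Definition~\ref{def:mrcc}. I would write $\bigcup_k E_k$ as the disjoint union of the ``first-failure'' events
\[
F_k = E_k \cap \bigcap_{j<k} E_j^c .
\]
Since $F_k \subseteq E_k$, finite additivity gives
\[
\Pr\!\left[\bigcup_k E_k\right] = \sum_k \Pr[F_k] \le \sum_k p_k .
\]
The marginal risk control $p_k \le \alpha_k$ then yields the second inequality in~\eqref{eq:bonf}. No assumption on the joint law is used, so this is the distribution-free Bonferroni statement.

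For the multiplicative part, I would use a chain-rule (telescoping) argument on the survival events $S_k = \bigcap_{j\le k} E_j^c$, with $S_0$ the whole space. The trajectory is covered exactly when $S_K$ occurs, so $\text{TC} = \Pr[S_K]$. Writing $S_k = S_{k-1} \cap E_k^c$ gives
\[
\Pr[S_k] = \Pr[S_{k-1}]\,\bigl(1 - \Pr[E_k \mid S_{k-1}]\bigr)
\]
whenever $\Pr[S_{k-1}]>0$. Hypothesis~\eqref{eq:survival_conditional} then gives $\Pr[S_k] \ge (1-\alpha_k)\Pr[S_{k-1}]$. Induction on $k$ yields $\text{TC} = \Pr[S_K] \ge \prod_k (1-\alpha_k)$, and taking complements gives the stated TMR bound.

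The only delicate step is the conditioning when some $\Pr[S_{k-1}] = 0$, where the conditional probability in~\eqref{eq:survival_conditional} is undefined. I would state explicitly that the hypothesis is understood to hold whenever the conditioning event has positive probability. In the degenerate case the bound would force $\prod_k(1-\alpha_k) \le 0$, i.e.\ some $\alpha_j \ge 1$. Under the convention $\alpha_k\in[0,1]$, any $\alpha_j = 1$ makes the product zero and the inequality trivial. Otherwise, if all $\alpha_j<1$, the inductive step shows $\Pr[S_{k-1}]>0$ at every stage, so the degenerate case never arises. I would close with a remark that the survival-conditional bound is at least as tight as Bonferroni, since $1-\prod_k(1-\alpha_k) \le \sum_k \alpha_k$ by an easy induction.
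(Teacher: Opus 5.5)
Your proposal is correct and follows the paper's proof: the first claim is Boole's inequality (which you derive via the disjoint first-failure decomposition) plus the marginal guarantees, and the multiplicative claim is the chain rule over the survival events $\bigcap_{j\le k} E_j^c$. Your handling of zero-probability conditioning events, which the paper's proof passes over silently, is a valid refinement rather than a different route.
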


\begin{proof}
Inequality~\eqref{eq:bonf} is Boole's inequality (union bound) followed
by the marginal guarantees $p_k \leq \alpha_k$; it holds for arbitrary
events without independence or distributional assumptions.  For the
second statement, the chain rule of probability gives
\begin{equation}
\text{TC} = \Pr\!\left[\bigcap_{k=1}^{K} E_k^c\right]
= \prod_{k=1}^{K} \Pr\!\left[E_k^c \,\middle|\, \bigcap_{j<k} E_j^c\right]
\geq \prod_{k=1}^{K} (1 - \alpha_k),
\end{equation}
using~\eqref{eq:survival_conditional} on each factor.  Taking
complements proves~\eqref{eq:multiplicative_composition}.
\end{proof}

\begin{theorem}[Bonferroni Trajectory Bound]
\label{thm:bonf}
Setting $\alpha_k = \alpha/K$ in~\eqref{eq:bonf} ensures
$\text{TMR} \leq \alpha$.  This is the Bonferroni trajectory bound used
throughout this paper: it is distribution-free---valid for any joint
distribution of errors, with no estimated parameters.
\end{theorem}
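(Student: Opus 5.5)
The plan is to obtain this as a direct specialization of Theorem~\ref{thm:general_composition}, with one extra step: checking that split conformal prediction, as set up in Eq.~\eqref{eq:cp_set}, really gives a marginally risk-controlled chain in the sense of Definition~\ref{def:mrcc}.

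First, fix the failure events as $E_k = \{y_k \notin C_k(x;\alpha_k)\}$ for $k=1,\dots,K$. By Definition~\ref{def:tar}, $\text{TMR} = \Pr[\bigcup_k E_k]$, which matches the left-hand side of~\eqref{eq:bonf} exactly. Second, confirm that each $p_k = \Pr[E_k] \le \alpha_k$ when $\alpha_k = \alpha/K$. I would invoke the standard split-conformal guarantee at level $\alpha_k$: exchangeability of the $n$ calibration scores $s_k(x_i,y_{i,k})$ with the test score, together with the $\lceil(1-\alpha_k)(n+1)\rceil/n$ quantile, gives $\Pr[s_k(x,y_k)\le \hat q_k(\alpha_k)] \ge 1-\alpha_k$. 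This holds marginally over both the calibration draw and the test point.

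Third, substitute into~\eqref{eq:bonf} to get
\begin{equation*}
\text{TMR} \le \sum_{k=1}^K \alpha/K = \alpha .
\end{equation*}
The distribution-free claim follows because the only tool used is Boole's inequality, which needs no assumption on the joint law of $(Z_1,\dots,Z_K)$. Each marginal bound uses only per-step exchangeability, and no parameter is estimated from data beyond the quantiles that already enter the marginal guarantees.

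The only step needing care is the second one, and even that is routine. All $K$ steps share one calibration set and one test input, but this is harmless, since each marginal guarantee concerns a single step and the union bound is indifferent to dependence across steps. I would also note the edge case $\lceil(1-\alpha_k)(n+1)\rceil > n$. There $\hat q_k = \infty$ by convention, so $C_k$ is the full label set and $p_k = 0 \le \alpha_k$ trivially.
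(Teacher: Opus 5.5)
Your proposal is correct and takes the paper's approach: Theorem~\ref{thm:bonf} is simply Boole's inequality from Theorem~\ref{thm:general_composition} with $\alpha_k=\alpha/K$, which sums to $\alpha$. Your extra step, checking that split conformal prediction actually delivers $p_k\le\alpha_k$ (including the infinite-quantile edge case), is a sound elaboration of what the paper takes as given through Definition~\ref{def:mrcc}. Your remark that the guarantee holds marginally over the calibration draw also matches Proposition~\ref{prop:marginal_composition}.
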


While universally valid, Theorem~\ref{thm:bonf} treats per-step
errors as potentially adversarially combined.  When errors are
positively correlated---as we document in Section~\ref{sec:results}---the
bound is substantially loose.  The multiplicative
statement~\eqref{eq:multiplicative_composition} is strictly stronger but
requires the survival-conditional guarantee~\eqref{eq:survival_conditional},
which ordinary split CP's \emph{marginal} coverage does not automatically
provide; we return to this distinction in Proposition~\ref{prop:equal}.

\subsection{Correlation-Aware Inclusion--Exclusion Identity ($K=2$)}

Let $\rho_{kk'} = \text{Cor}(\mathbf{1}_{E_k}, \mathbf{1}_{E_{k'}})$
denote the Pearson correlation between the indicator variables of
miscoverage events at steps $k$ and $k'$.

\begin{theorem}[Inclusion--Exclusion Trajectory Identity]
\label{thm:ie}
For $K = 2$ steps with per-step miscoverage rates $p_k = \Pr[E_k]$ and
inter-step correlation $\rho_{12}\in[-1,1]$ (no sign restriction):
\begin{equation}
\label{eq:ie}
\text{TMR} = p_1 + p_2 - p_1 p_2
  - \rho_{12}\sqrt{p_1(1 - p_1)\, p_2(1 - p_2)}.
\end{equation}
Dropping the non-negative $-p_1p_2$ term from~\eqref{eq:ie} gives a
valid upper bound for any $\rho_{12}$:
\begin{equation}
\label{eq:ie_bound}
\text{TMR} \leq p_1 + p_2 - \rho_{12}\sqrt{p_1(1 - p_1)\, p_2(1 - p_2)}.
\end{equation}
This bound is only an \emph{improvement} over the Bonferroni sum
$p_1+p_2$ when $\rho_{12}\geq0$; for $\rho_{12}<0$ it is a valid but
looser statement than Bonferroni.  We report the plug-in
estimate~\eqref{eq:ie_alpha} uniformly across all configurations
without clipping to Bonferroni, so a handful of CIC-IDS-2018/DoS
configurations with $\hat\rho_{12}<0$ (Table~\ref{tab:correlation})
correctly show a plug-in value slightly above 0.10 in
Table~\ref{tab:bounds}---an operator should always report
$\min\{\text{Bonferroni}, \widehat{B}_{\mathrm{IE}}\}$ in practice, but
we leave the raw value visible here to show the theorem's actual
behavior under negative correlation.
In experiments, we report the following plug-in operational estimate:
\begin{equation}
\label{eq:ie_alpha}
\widehat{B}_{\mathrm{IE}} =
\alpha_1 + \alpha_2
  - \hat{\rho}_{12}\sqrt{\alpha_1(1-\alpha_1)\,\alpha_2(1-\alpha_2)},
\end{equation}
where $\hat{\rho}_{12}$ is estimated from held-out evaluation splits
after calibration.
\end{theorem}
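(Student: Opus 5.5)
The identity~\eqref{eq:ie} follows from two elementary facts about the indicators $Z_1=\mathbf{1}_{E_1}$ and $Z_2=\mathbf{1}_{E_2}$ of Definition~\ref{def:mrcc}. I would first apply the $K=2$ inclusion--exclusion formula,
\begin{equation*}
\text{TMR}=\Pr[E_1\cup E_2]=p_1+p_2-\Pr[E_1\cap E_2].
\end{equation*}
Next I would write the joint-failure probability $q_{12}:=\Pr[E_1\cap E_2]=\mathbb{E}[Z_1Z_2]$ in terms of the covariance, $q_{12}=p_1p_2+\operatorname{Cov}(Z_1,Z_2)$. Since $\operatorname{Var}(Z_k)=p_k(1-p_k)$, the definition of Pearson correlation gives $\operatorname{Cov}(Z_1,Z_2)=\rho_{12}\sqrt{p_1(1-p_1)p_2(1-p_2)}$. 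Substituting this into the inclusion--exclusion formula yields~\eqref{eq:ie} directly, and no sign condition on $\rho_{12}$ is used anywhere.

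The only point needing care is the degenerate case $p_k\in\{0,1\}$, where $\rho_{12}$ is undefined. I would handle it by convention: the covariance is then $0$, so any value of $\rho_{12}\in[-1,1]$ makes the product term vanish, and the identity still holds.

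For the bound~\eqref{eq:ie_bound}, note that $p_1p_2\ge 0$, so deleting the $-p_1p_2$ term can only increase the right-hand side. This gives a valid upper bound for every $\rho_{12}$.

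For the comparison with Bonferroni, the extra term $-\rho_{12}\sqrt{\cdot}$ is nonpositive exactly when $\rho_{12}\ge0$. So the bound improves on $p_1+p_2$ when $\rho_{12}\ge0$, and it is looser when $\rho_{12}<0$.

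Finally, $\widehat{B}_{\mathrm{IE}}$ in~\eqref{eq:ie_alpha} is a definition, not a claim. I would only remark that it is the bound~\eqref{eq:ie_bound} with $p_k$ replaced by the budgets $\alpha_k$ and $\rho_{12}$ replaced by the estimate $\hat\rho_{12}$. This substitution is a plug-in heuristic rather than a certified bound: the map $p\mapsto p(1-p)$ is not monotone in a way that guarantees the substituted expression dominates the true TMR. There is no real obstacle in this proof; the one thing to state explicitly is that~\eqref{eq:ie_alpha} carries no validity guarantee.
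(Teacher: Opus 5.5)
Your proof is correct and follows the paper's argument: inclusion--exclusion for two events, then $\Pr[E_1\cap E_2]=p_1p_2+\rho_{12}\sqrt{p_1(1-p_1)p_2(1-p_2)}$ from the Bernoulli correlation formula, then dropping the nonpositive $-p_1p_2$ term. Your treatment of the degenerate case $p_k\in\{0,1\}$ is a small addition the paper omits. One correction to your closing aside: the paper attributes the lack of a guarantee for~\eqref{eq:ie_alpha} to $\hat\rho_{12}$ being an estimate, since an overestimate can push $\widehat{B}_{\mathrm{IE}}$ below the true TMR (Remark~\ref{rem:tight}). That is the operative issue, not the shape of $p\mapsto p(1-p)$, which is increasing on $[0,1/2]$.
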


\begin{proof}
By inclusion--exclusion:
\begin{equation}
\Pr[E_1 \cup E_2] = \Pr[E_1] + \Pr[E_2] - \Pr[E_1 \cap E_2].
\end{equation}
The joint probability can be written via the Pearson correlation of
Bernoulli indicators:
\begin{equation}
\rho_{12} = \frac{\Pr[E_1 \cap E_2] - p_1 p_2}{\sqrt{p_1(1-p_1)\, p_2(1-p_2)}},
\end{equation}
so $\Pr[E_1 \cap E_2] = p_1 p_2 + \rho_{12}\sqrt{p_1(1-p_1)\,p_2(1-p_2)}$.
Substituting yields~\eqref{eq:ie}.  Inequality~\eqref{eq:ie_bound}
drops the $-p_1 p_2$ term (non-positive, since $p_1p_2\geq0$) to obtain
a simpler, larger---hence still valid as an upper bound---expression.  Equation~\eqref{eq:ie_alpha} is a plug-in version used for
empirical tightness analysis; its validity depends on the accuracy and
conservativeness of $\hat{\rho}_{12}$.
\end{proof}

\begin{remark}[Eq.~\eqref{eq:ie_alpha} is an estimate, not a bound]
\label{rem:tight}
Unlike Theorem~\ref{thm:bonf}, the plug-in quantity
$\widehat{B}_{\mathrm{IE}}$ in~\eqref{eq:ie_alpha} relies on the
calibration estimate $\hat{\rho}_{12}$ and is therefore an
\emph{empirically calibrated point estimate}, not a distribution-free
or certified guarantee: when $\hat{\rho}_{12}$ overestimates the true
population correlation, the empirical TMR can exceed
$\widehat{B}_{\mathrm{IE}}$, which we report as a residual (not a
``bound violation'') throughout the paper.  The estimate is close to
the true TMR when (i)~the per-step CP miscoverage rate is close to
$\alpha_k$ (saturated CP) and (ii)~$\hat{\rho}_{12}$ accurately
estimates the population correlation.  Section~\ref{subsec:hunter_empirical}
gives a certified alternative that does not require either condition.
\end{remark}

\subsection{Why the Naive $K>2$ Extension Fails, and a Valid Alternative}
\label{subsec:hunter}

A natural guess for $K>2$ is to subtract every pairwise joint-failure
term from the Bonferroni sum.  Writing $\hat{q}_{kk'} \triangleq
\alpha_k\alpha_{k'} + \hat{\rho}_{kk'}\sqrt{\alpha_k(1-\alpha_k)\,\alpha_{k'}(1-\alpha_{k'})}$
for the plug-in estimate of the pairwise overlap $q_{kk'}$ (the same
identity used in Theorem~\ref{thm:ie}), this gives
\begin{equation}
\label{eq:ie_general}
S_1 - S_2 \triangleq \sum_{k=1}^{K} \alpha_k
  - \sum_{k < k'} \hat{q}_{kk'}.
\end{equation}
This expression is \emph{not} a valid upper bound on TMR for $K>2$; the
correct classical statement (Bonferroni's inequalities on the alternating
inclusion--exclusion partial sums) is that
$S_1 - S_2$ is a \emph{lower} bound on TMR, not an upper bound:

\begin{theorem}[Bonferroni Parity]
\label{thm:bonferroni_parity}
Let $S_r = \sum_{i_1 < \cdots < i_r} \Pr[E_{i_1} \cap \cdots \cap E_{i_r}]$.
For every $m$ for which the sums are defined,
\begin{equation}
\sum_{r=1}^{2m} (-1)^{r+1} S_r \;\leq\; \text{TMR} \;\leq\;
\sum_{r=1}^{2m-1} (-1)^{r+1} S_r.
\end{equation}
In particular $S_1 - S_2$ is a lower bound, not an upper bound,
whenever $K > 2$.
\end{theorem}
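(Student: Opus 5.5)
The plan is to prove the Bonferroni inequalities pointwise, at the level of indicator functions, and then take expectations. Fix an outcome $\omega$ and let $N(\omega)=\sum_{k=1}^K Z_k(\omega)$ be the number of failure events containing $\omega$. The number of $r$-subsets of failed steps is $\binom{N}{r}$, so
\[
\sum_{i_1<\cdots<i_r} Z_{i_1}\cdots Z_{i_r} = \binom{N}{r},
\qquad
\mathbb{E}\binom{N}{r} = S_r .
\]
Also $\mathbf{1}_{\bigcup_k E_k} = \mathbf{1}\{N\ge 1\}$. The theorem therefore reduces to a purely combinatorial statement. For every integer $N\ge 0$ and every truncation level $t$, define
\[
T_t(N)=\sum_{r=1}^{t}(-1)^{r+1}\binom{N}{r}.
\]
The claim is that $T_t(N)-\mathbf{1}\{N\ge1\}$ is $\le 0$ for even $t$ and $\ge 0$ for odd $t$.

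The key step is the identity
\[
\sum_{r=0}^{t}(-1)^r\binom{N}{r}=(-1)^t\binom{N-1}{t}, \qquad N\ge 1.
\]
I would prove it by induction on $t$ using Pascal's rule $\binom{N}{r}=\binom{N-1}{r}+\binom{N-1}{r-1}$, which makes the sum telescope. Since $T_t(N)=1-\sum_{r=0}^{t}(-1)^r\binom{N}{r}$, this gives
\[
T_t(N)-1=-(-1)^t\binom{N-1}{t}, \qquad N\ge 1.
\]
For $t=2m$ this is $\le 0$, and for $t=2m-1$ it is $\ge 0$, because $\binom{N-1}{t}\ge 0$. For $N=0$ every term vanishes, so $T_t(0)=0=\mathbf{1}\{N\ge1\}$ and both inequalities hold with equality. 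Taking expectations, which is linear and monotone and applies here because all quantities are bounded indicators or polynomials in them, yields
\[
\sum_{r=1}^{2m}(-1)^{r+1}S_r\le \text{TMR}\le\sum_{r=1}^{2m-1}(-1)^{r+1}S_r .
\]
The sums are defined for $r\le K$, and for $r>K$ we have $S_r=0$, so any $m$ is admissible.

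For the ``in particular'' clause, take $m=1$. This gives $S_1-S_2\le\text{TMR}\le S_1$, so $S_1-S_2$ is a lower bound. To explain why the clause is stated for $K>2$: when $K=2$ we have $S_3=0$, so $S_1-S_2$ is exact and coincides with the identity of Theorem~\ref{thm:ie}. When $K\ge3$ the inequality can be strict, and $S_1-S_2$ is not an upper bound in general. The example I would give is three copies of the same event $E$ with $\Pr[E]=p$. Then $\text{TMR}=p$ while $S_1-S_2=3p-3p=0<p$. The same example also shows that the plug-in quantity~\eqref{eq:ie_general} with exact overlaps understates risk.

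The only mildly delicate step is the alternating-sum identity, and in particular getting its sign and index range right. Everything else is bookkeeping, so I expect no real obstacle.
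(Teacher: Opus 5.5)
Your proof is correct and matches the paper's approach: the paper gives no derivation of its own, treating this as the classical Bonferroni inequalities, and supports the ``not an upper bound'' clause with the same all-events-equal example you use ($E_1=\cdots=E_K$, so $S_1-S_2=Kp-\binom{K}{2}p\le 0<p$ for $K\ge 3$ and $p>0$). Your pointwise argument via $\sum_{r=0}^{t}(-1)^r\binom{N}{r}=(-1)^t\binom{N-1}{t}$ is the standard proof of that classical fact; the only cosmetic point is that your three-copies example literally covers only $K=3$, and taking all $K$ events equal, or padding with empty events, extends it to every $K>2$.
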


The clipped quantity $L_2 = \max\{0, S_1 - S_2\}$ can hit zero well
before TMR is actually small.  A minimal example makes this concrete:
if $E_1 = \cdots = E_K$ (all steps fail on exactly the same trajectories),
the true union is $\Pr[E_1] = p$ for every $K$, but $S_1 - S_2 = Kp -
\binom{K}{2}p$ becomes non-positive starting at $K=3$.  The following
result characterizes exactly when this clipped quantity degenerates.

\begin{theorem}[Second-Order Degeneracy Threshold]
\label{thm:pairwise_degeneracy}
Under the homogeneous model $p_i = \alpha/K$ and $\rho_{ij} = \rho \geq 0$
for all $i \neq j$, $L_2 = 0$ if and only if
\begin{equation}
\label{eq:homogeneous_threshold}
(K-1)\left[\rho + (1-\rho)\frac{\alpha}{K}\right] \geq 2.
\end{equation}
The smallest degenerate integer depth is the smallest integer $K$
satisfying $K \geq K_+(\alpha,\rho)$, where, writing
$D(\alpha,\rho) \triangleq \rho + 2 - (1-\rho)\alpha$ for brevity,
\begin{equation}
K_+(\alpha,\rho) = \frac{D(\alpha,\rho) +
\sqrt{D(\alpha,\rho)^2 + 4\rho(1-\rho)\alpha}}{2\rho}.
\end{equation}
As $\alpha \downarrow 0$, $K_+(\alpha,\rho) = 1 + 2/\rho + O(\alpha)$.
At $\alpha = 0.10$, $\rho = 0.298$, $K_+ \approx 7.51$, so $K=3$ through
$K=7$ all remain positive and $K=8$ is the first degenerate
depth---consistent with the value in Table~\ref{tab:scaling}.
($\bar\rho=0.298$ is the mean over the 12 CIC-IDS-2018/DoS and
RT-IoT2022/Probe configurations at $\alpha=0.10$ used throughout
Section~\ref{subsec:bounds} and Table~\ref{tab:correlation}.)
\end{theorem}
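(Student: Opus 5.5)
The proof is a direct computation: under the homogeneous model, $S_1$ and the plug-in pairwise sum $S_2$ of~\eqref{eq:ie_general} have closed forms, and $L_2=\max\{0,S_1-S_2\}$ is zero exactly when $S_1\le S_2$. I will reduce that inequality to the linear-in-$\rho$ condition~\eqref{eq:homogeneous_threshold}. Clearing the denominator turns it into a quadratic in $K$, whose positive root is $K_+$. The asymptotic and numerical claims then follow by expansion and evaluation.

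\textbf{Step 1: closed form for $S_2$.} Write $p=\alpha/K$. By the Pearson identity from the proof of Theorem~\ref{thm:ie}, each pairwise overlap is
\begin{equation*}
q = p^2 + \rho\, p(1-p) = p\,[\rho + (1-\rho)p],
\end{equation*}
the same for every pair. Hence $S_1 = Kp = \alpha$ and $S_2 = \binom{K}{2}q$.

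\textbf{Step 2: the threshold condition.} $L_2=0$ iff $\alpha \le \tfrac{K(K-1)}{2}\cdot \tfrac{\alpha}{K}\,[\rho+(1-\rho)\alpha/K]$. Dividing by $\alpha/2>0$ gives exactly~\eqref{eq:homogeneous_threshold}. The boundary case $S_1=S_2$ counts as degenerate, since $L_2=0$ there as well.

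\textbf{Step 3: the quadratic in $K$.} Multiply~\eqref{eq:homogeneous_threshold} by $K>0$ and expand $(K-1)(\rho K+(1-\rho)\alpha)\ge 2K$. This gives
\begin{equation*}
\rho K^2 - D(\alpha,\rho)\,K - (1-\rho)\alpha \ge 0 .
\end{equation*}
Assume $\rho>0$. The roots are $\bigl(D\pm\sqrt{D^2+4\rho(1-\rho)\alpha}\bigr)/(2\rho)$, and their product $-(1-\rho)\alpha/\rho$ is nonpositive. So the smaller root is at most zero, and for positive $K$ the condition holds iff $K\ge K_+(\alpha,\rho)$.

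Two side remarks. If $\rho=1$, then $D=3$ and $K_+=3$, which matches the all-equal-events example preceding the theorem. If $\rho=0$ and $\alpha<1$, then~\eqref{eq:homogeneous_threshold} reduces to $(K-1)\alpha/K\ge 2$, which never holds, so there is no degenerate depth; this is why the closed form for $K_+$ is stated for $\rho>0$.

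\textbf{Step 4: asymptotics.} $K_+$ is a smooth (analytic) function of $\alpha$ near $0$, because $D(0,\rho)=\rho+2>0$ keeps the square root away from zero. Setting $\alpha=0$ gives $K_+=2(\rho+2)/(2\rho)=1+2/\rho$, and Taylor's theorem supplies the $O(\alpha)$ remainder.

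\textbf{Step 5: numerical check.} At $\alpha=0.10$, $\rho=0.298$:
\begin{itemize}
\item $D=2.298-0.0702=2.2278$;
\item $D^2+4\rho(1-\rho)\alpha\approx 4.963+0.084=5.047$, with square root $\approx 2.2465$;
\item $K_+\approx 4.4743/0.596\approx 7.51$.
\end{itemize}
By Step 3, $S_1-S_2>0$ for integers $3\le K\le 7$, and $K=8$ is the first degenerate depth.

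There is no real obstacle here. The only care points are the sign analysis of the quadratic (showing that only the $+$ root is relevant for $K>0$) and treating the boundary case $\rho=0$ separately, since the closed form for $K_+$ divides by $\rho$.
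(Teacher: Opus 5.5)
Your proposal is correct and follows the paper's route. You compute $S_1=\alpha$ and $S_2=\binom{K}{2}q$ under the homogeneous model, reduce $S_1\le S_2$ to the stated threshold, and multiply through by $K$ to get the quadratic $\rho K^2 - D K - (1-\rho)\alpha \ge 0$, whose positive root is $K_+$. Your sign argument via the nonpositive product of roots, the separate $\rho=0$ case, and the numerical check ($K_+\approx 7.507$) supply the details the paper defers to its supplementary material.
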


\begin{proof}
Algebraic; solving $L_2=0$ for the homogeneous model yields the stated
threshold, and treating it as an equality in $K$ gives a quadratic
whose positive root is $K_+(\alpha,\rho)$ (full derivation in the
supplementary material).
\end{proof}

\begin{remark}[This threshold is a truncation artifact, not a structural limit]
\label{rem:truncation_artifact}
$K_+$ characterizes when the \emph{even-order truncation} $S_1 - S_2$
stops being informative, not when trajectory risk itself becomes
unrecoverable.  Full inclusion--exclusion is exact for every finite $K$;
the limitation is that pairwise correlation information alone stops
determining TMR once $K \geq 3$, which we make precise next.
\end{remark}

A valid alternative uses joint pairwise \emph{overlap} probabilities
$q_{ij} = \Pr[E_i \cap E_j]$ directly, together with a spanning tree over
the $K$ steps~\cite{hunter1976upper}:

\begin{theorem}[Spanning-Tree Pairwise Upper Bound]
\label{thm:hunter_trajectory}
Let $T$ be any spanning tree on vertices $\{1,\ldots,K\}$.  Then
\begin{equation}
\label{eq:hunter_trajectory}
\text{TMR} \leq \sum_{k=1}^{K} p_k - \sum_{(i,j) \in T} q_{ij}.
\end{equation}
Consequently, if $p_k \leq \alpha_k$ and simultaneous lower confidence
bounds $q_{ij} \geq \ell_{ij}$ are available from held-out data, then
with the corresponding confidence,
\begin{equation}
\label{eq:hunter_certified}
\text{TMR} \leq \min\!\left\{1,\; \sum_{k=1}^{K} \alpha_k -
\max_{T \in \mathcal{T}_K} \sum_{(i,j)\in T} \ell_{ij}\right\},
\end{equation}
where the maximizing $T$ is a maximum-weight spanning tree under edge
weights $\ell_{ij}$.  This bound uses only $K-1$ certified pairwise
overlaps and remains valid for every $K$, with no third-order terms.
\end{theorem}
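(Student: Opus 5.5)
The plan is to prove the deterministic inequality~\eqref{eq:hunter_trajectory} via Hunter's classical argument. Root $T$ at vertex $1$ and order the vertices so that every vertex $k\geq 2$ has its tree parent $\pi(k)$ earlier in the order; a breadth-first traversal gives such an order. Relabel the events in this order and decompose the union into disjoint pieces:
\begin{equation*}
\bigcup_{k=1}^{K} E_k = E_1 \,\cup\, \bigcup_{k=2}^{K}\Bigl(E_k \setminus \bigcup_{j<k} E_j\Bigr).
\end{equation*}
Because $\pi(k)<k$, each piece satisfies
$E_k\setminus\bigcup_{j<k}E_j \subseteq E_k\setminus E_{\pi(k)}$.
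It follows that
\begin{equation*}
\text{TMR} \leq p_1 + \sum_{k\geq2}\bigl(p_k - \Pr[E_k\cap E_{\pi(k)}]\bigr).
\end{equation*}
The edges $(k,\pi(k))$ for $k\geq 2$ are exactly the $K-1$ edges of $T$, so the right-hand side equals $\sum_k p_k - \sum_{(i,j)\in T} q_{ij}$. This holds for every spanning tree $T$ and uses no distributional assumption.

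For the certified form~\eqref{eq:hunter_certified}, work on the event $G$ on which every bound $q_{ij}\geq \ell_{ij}$ holds simultaneously. By assumption, $G$ has probability at least the stated confidence. On $G$, for every tree $T$ we have $\sum_{T} q_{ij}\geq \sum_T \ell_{ij}$, and we also have $p_k\leq\alpha_k$. Substituting both into~\eqref{eq:hunter_trajectory} gives
\begin{equation*}
\text{TMR}\leq \sum_k\alpha_k - \sum_T\ell_{ij}.
\end{equation*}
This holds for every $T$ at once, so we may take the maximum over $\mathcal{T}_K$. Since TMR is a probability, TMR $\leq 1$, which justifies the outer minimum. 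The maximizing tree is by definition a maximum-weight spanning tree for weights $\ell_{ij}$, computable by Kruskal's algorithm. The bound involves only the $K-1$ overlaps on that tree and no higher-order intersections.

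The main obstacle is the data-dependent choice of $T$. The tree is selected using the same $\ell_{ij}$ that certify it, so a confidence statement made for a single fixed tree would not transfer. The simultaneity assumption resolves this: I would stress that coverage must hold jointly over all $\binom{K}{2}$ pairs, for example via a union bound allocating $\delta/\binom{K}{2}$ to each pair's Clopper--Pearson lower bound. On that single good event the inequality then holds uniformly over all trees, including the selected one. A secondary subtlety is that $p_k\leq\alpha_k$ is a marginal (over calibration) guarantee, whereas the $\ell_{ij}$ come from held-out audit data. I would state the result conditional on the calibrated rules, with $p_k$ the per-step miscoverage under that conditioning, and leave the finer treatment to the post-hoc certificate in Section~\ref{subsec:certifiability}.
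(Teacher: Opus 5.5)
Your proof is correct and follows the paper's argument: root $T$, order each parent before its children, use the disjoint-increment decomposition of the union with $E_{\pi(k)} \subseteq \bigcup_{j<k} E_j$, then substitute $p_k \leq \alpha_k$ and the simultaneous bounds $q_{ij} \geq \ell_{ij}$ and maximize over trees. Your point that simultaneity over all $\binom{K}{2}$ pairs makes the data-dependent choice of tree harmless is the same justification the paper gives later in Theorem~\ref{thm:posthoc_certificate}.
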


\begin{proof}
Root $T$ at an arbitrary vertex and order every parent before its
children; let $\pi(k)$ denote the parent of non-root vertex $k$.  The
exact disjoint-increment decomposition of a union over this order gives
$\Pr[\bigcup_k E_k] = p_{\text{root}} + \sum_{k \neq \text{root}}
\Pr[E_k \setminus \bigcup_{j<k} E_j] \leq p_{\text{root}} +
\sum_{k \neq \text{root}} \Pr[E_k \setminus E_{\pi(k)}] =
\sum_k p_k - \sum_{k \neq \text{root}} q_{k,\pi(k)}$, where the
inequality holds because $\bigcup_{j<k} E_j \supseteq E_{\pi(k)}$.
This proves~\eqref{eq:hunter_trajectory}; substituting the certified
bounds and optimizing over $T$ proves~\eqref{eq:hunter_certified}.
\end{proof}

Unlike the Pearson-correlation plug-in~\eqref{eq:ie_alpha}, $q_{ij}$
remains well-defined even when $p_i$ is near zero and admits an exact
one-sided confidence bound (e.g.\ Clopper--Pearson) with no normality
approximation.  Section~\ref{subsec:hunter_empirical} computes $q_{12}$
and its lower confidence bound for our $K=2$ pipeline.

\subsection{Realized-Chain Risk Certification and the Certifiability Gap}
\label{subsec:certifiability}

Theorem~\ref{thm:hunter_trajectory} and~\eqref{eq:hunter_certified}
already sketch a certified version of the spanning-tree bound.  We now
make the two-level structure of that certificate explicit, because our
own experiments (Section~\ref{subsec:hunter_empirical}) conflated two
different objects: a \emph{marginal} guarantee averaged over calibration
randomness, and a \emph{realized-chain} guarantee for the one specific
trained-and-calibrated pipeline that is actually deployed.

Let $\Theta$ denote all randomness used to train and calibrate the
deployed chain, and write $p_k(\theta)$, $q_{ij}(\theta)$, and
$R(\theta) = \Pr_\theta[\bigcup_k E_k]$ for the corresponding
quantities conditional on a realized chain $\Theta=\theta$.

\begin{definition}[Marginal and realized-chain trajectory risk]
\label{def:two_level_risk}
The marginal trajectory risk averages over both calibration randomness
and a fresh trajectory, $R_{\mathrm{marg}} = \mathbb{E}_\Theta[R(\Theta)]$.
The realized-chain risk $R(\theta)$ is the deployment risk of the
particular trained and calibrated chain that is actually deployed.
\end{definition}

\begin{proposition}[What marginal stage-wise guarantees compose]
\label{prop:marginal_composition}
If $\mathbb{E}_\Theta[p_k(\Theta)] \leq \alpha_k$ for $k=1,\ldots,K$,
then $R_{\mathrm{marg}} \leq \sum_k \alpha_k$ (Theorem~\ref{thm:bonf},
restated at the marginal level).  This conclusion concerns
$R_{\mathrm{marg}}$ and does \emph{not} imply $R(\theta) \leq \sum_k
\alpha_k$ for every realized chain $\theta$.
\end{proposition}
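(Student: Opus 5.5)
The plan is to apply Theorem~\ref{thm:general_composition} pointwise in $\theta$ and then average over $\Theta$. The averaging is justified by linearity of expectation together with the tower property. The negative part of the statement will come from an explicit counterexample.

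\textbf{Positive part.} First, fix a realized chain $\Theta=\theta$. Conditional on $\theta$, the events $E_1,\ldots,E_K$ live on a common probability space, namely the law of a fresh trajectory given $\theta$. Boole's inequality, i.e.\ the first inequality in~\eqref{eq:bonf}, then gives $R(\theta)\le\sum_k p_k(\theta)$ for every $\theta$. This step uses no marginal guarantee at all, so it holds even for realized chains whose $p_k(\theta)$ exceeds $\alpha_k$.

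Second, take expectations over $\Theta$. The left side becomes $R_{\mathrm{marg}}=\mathbb{E}_\Theta[R(\Theta)]$ by Definition~\ref{def:two_level_risk}. Linearity gives $\mathbb{E}_\Theta[\sum_k p_k(\Theta)]=\sum_k\mathbb{E}_\Theta[p_k(\Theta)]$, and the hypothesis bounds each term by $\alpha_k$. Hence $R_{\mathrm{marg}}\le\sum_k\alpha_k$.

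The only technical point is measurability: $\theta\mapsto R(\theta)$ and $\theta\mapsto p_k(\theta)$ must be measurable so that the expectations are defined. These maps are regular conditional probabilities of measurable events on the joint space of $(\Theta,\text{trajectory})$. Their expectations therefore equal the unconditional probabilities by the tower property, and I would state this in one line.

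\textbf{Negative part.} It suffices to exhibit a chain that satisfies the hypothesis but has $R(\theta)>\sum_k\alpha_k$ with positive probability. I would take $K=1$, or any $K$ with the remaining stages trivial. Let $\Theta$ be Bernoulli with $\Pr[\Theta=1]=\alpha_1/c$ for some $c\in(\alpha_1,1]$, and set $p_1(1)=c$ and $p_1(0)=0$. Then
\[
\mathbb{E}_\Theta[p_1(\Theta)]=\frac{\alpha_1}{c}\cdot c=\alpha_1,
\]
so the hypothesis holds, yet $R(1)=c>\alpha_1$.

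To make this feel natural in the conformal setting, I would note that split CP's coverage is exactly such a marginal statement over the calibration draw. For a fixed calibration set, the conditional miscoverage follows a Beta distribution and exceeds $\alpha_k$ with nonvanishing probability. Either example suffices.

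I expect no real obstacle here. The only care needed is to state clearly which probability space each quantity lives on, so that the distinction between the marginal and the realized-chain guarantee is precise rather than rhetorical.
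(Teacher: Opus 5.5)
Your proposal is correct and is essentially the paper's argument. The paper gives no separate proof and simply cites Theorem~\ref{thm:bonf}, i.e.\ Boole's inequality from Theorem~\ref{thm:general_composition} applied on the joint space of $\Theta$ and a fresh trajectory, where $\Pr[E_k]=\mathbb{E}_\Theta[p_k(\Theta)]$; your pointwise-in-$\theta$-then-average version gives the same bound via the tower property. The paper also states the negative clause without proof, so your explicit counterexample ($\Pr[\Theta=1]=\alpha_1/c$, $p_1(1)=c>\alpha_1$) fills in what the paper only asserts.
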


The remainder of this subsection conditions on a fixed realized chain
$\theta$ and asks when a finite joint audit sample can certify
$R(\theta)$ below the marginal-only bound---the relevant question when
deciding whether one particular deployed chain is safe enough to
automate.  Let $\mathcal{A}_n = \{\bm{Z}^{(t)}\}_{t=1}^n$ be an i.i.d.\
audit sample from the fixed chain, independent of the data used to
select any threshold, and write $\widehat{p}_k = \frac{1}{n}\sum_t
Z_k^{(t)}$, $\widehat{q}_{ij} = \frac{1}{n}\sum_t Z_i^{(t)}Z_j^{(t)}$.
The framework accepts any construction satisfying
\begin{align}
\Pr_{\mathcal{A}_n}[p_k \leq U_k \ \forall k] &\geq 1-\delta_p,
\label{eq:sim_marg}\\
\Pr_{\mathcal{A}_n}[L_{ij} \leq q_{ij} \ \forall i<j] &\geq 1-\delta_q,
\label{eq:sim_overlap}
\end{align}
with no independence required between the two families.

\begin{definition}[Oracle and certifiable dependence gain]
\label{def:certifiable_gain}
With $\mathcal{T}_K$ the spanning trees on $\{1,\ldots,K\}$, the oracle
pairwise dependence gain is $G_{\mathrm{oracle}} = \max_{T\in\mathcal{T}_K}
\sum_{(i,j)\in T} q_{ij}$, and the finite-sample certifiable dependence
gain is $G_{\mathrm{cert}} = \max_{T\in\mathcal{T}_K} \sum_{(i,j)\in T}
L_{ij}$.  The dependence certifiability gap is $\Delta_{\mathrm{dep}} =
G_{\mathrm{oracle}} - G_{\mathrm{cert}}$.
\end{definition}

\begin{theorem}[Post-hoc modular trajectory certificate]
\label{thm:posthoc_certificate}
Assume~\eqref{eq:sim_marg} and~\eqref{eq:sim_overlap}, and let $\widehat{T}
\in \arg\max_{T\in\mathcal{T}_K} \sum_{(i,j)\in T} L_{ij}$ be chosen from
the audit data.  Then, with probability at least $1-\delta_p-\delta_q$,
\begin{equation}
\label{eq:posthoc_certificate}
R(\theta) \leq B_{\mathrm{cert}} \triangleq \min\!\left\{1,\;
\sum_{k=1}^K U_k - G_{\mathrm{cert}}\right\}.
\end{equation}
The data-dependent choice of $\widehat{T}$ does not invalidate the
certificate, because Theorem~\ref{thm:hunter_trajectory}'s spanning-tree
bound holds \emph{simultaneously for every} $T \in \mathcal{T}_K$, and
\eqref{eq:sim_marg}--\eqref{eq:sim_overlap} sandwich $p_k, q_{ij}$
simultaneously over all $k$ and all pairs; selecting the maximizing tree
after seeing the data therefore stays inside the same validity event.
\end{theorem}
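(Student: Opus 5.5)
The proof works on one good event. Let $\mathcal{E}_p = \{p_k \leq U_k \ \forall k\}$ and $\mathcal{E}_q = \{L_{ij} \leq q_{ij} \ \forall i<j\}$, both events over the audit sample $\mathcal{A}_n$ with the realized chain $\theta$ held fixed. By~\eqref{eq:sim_marg}, \eqref{eq:sim_overlap} and the union bound on the complements, $\Pr[\mathcal{E}_p \cap \mathcal{E}_q] \geq 1-\delta_p-\delta_q$. This step needs no independence between the two families. It only requires that $U_k$ and $L_{ij}$ are functions of $\mathcal{A}_n$, while $p_k(\theta)$, $q_{ij}(\theta)$ and $R(\theta)$ are deterministic once $\theta$ is fixed. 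That holds because the audit sample is independent of the data used to select thresholds.

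Next we use the deterministic ingredient, Theorem~\ref{thm:hunter_trajectory}~\eqref{eq:hunter_trajectory}, applied to the conditional law $\Pr_\theta$. It states that for \emph{every} spanning tree $T \in \mathcal{T}_K$,
\[
R(\theta) \leq \sum_k p_k(\theta) - \sum_{(i,j)\in T} q_{ij}(\theta).
\]
This is a pure statement about events, with no randomness from $\mathcal{A}_n$. On $\mathcal{E}_p \cap \mathcal{E}_q$ we substitute $p_k \leq U_k$ and $q_{ij} \geq L_{ij}$. The second substitution enters with a minus sign, so it also moves the bound in the upper direction. We obtain $R(\theta) \leq \sum_k U_k - \sum_{(i,j)\in T} L_{ij}$ simultaneously for all $T$.

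Because the inequality holds for all $T$ on the same event, it holds in particular for the data-chosen $\widehat{T}$. Equivalently, take the minimum of the right-hand side over $T$, which is $\sum_k U_k - G_{\mathrm{cert}}$ by Definition~\ref{def:certifiable_gain}. The maximum exists since $\mathcal{T}_K$ is finite. Finally, $R(\theta) \leq 1$ trivially, which gives the $\min\{1,\cdot\}$ in~\eqref{eq:posthoc_certificate}.

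The main point needing care is the selection step: why choosing $\widehat{T}$ after seeing the data does not break coverage. We handle it as above, by proving the tree bound uniformly over $\mathcal{T}_K$ inside one event whose probability does not depend on $T$. The only assumption needed is that the confidence statements~\eqref{eq:sim_overlap} hold jointly over all pairs, not just over the pairs in one prespecified tree. A secondary subtlety is the conditioning: probabilities are taken over $\mathcal{A}_n$ given $\theta$, which is exactly why the certificate concerns $R(\theta)$ rather than $R_{\mathrm{marg}}$, in contrast with Proposition~\ref{prop:marginal_composition}.
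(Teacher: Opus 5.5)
Your proof is correct and follows the paper's argument essentially step for step. You apply the spanning-tree bound for every $T$, substitute $U_k$ and $L_{ij}$ on the intersection of the two validity events (probability at least $1-\delta_p-\delta_q$ by a union bound), specialize to $\widehat{T}$ (equivalently, minimize over $T$ to obtain $\sum_k U_k - G_{\mathrm{cert}}$), and clip at one.
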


\begin{proof}
For any fixed $T$, Theorem~\ref{thm:hunter_trajectory} gives $R(\theta)
\leq \sum_k p_k - \sum_{(i,j)\in T} q_{ij}$.  On the event where
\eqref{eq:sim_marg}--\eqref{eq:sim_overlap} both hold, $\sum_k p_k -
\sum_{(i,j)\in T} q_{ij} \leq \sum_k U_k - \sum_{(i,j)\in T} L_{ij}$
simultaneously for every $T$, hence in particular for $\widehat{T}$.  A
union bound over the two validity events gives probability at least
$1-\delta_p-\delta_q$; clipping at one preserves validity.
\end{proof}

\begin{corollary}[Exact two-stage form]
\label{cor:k2_certificate}
For $K=2$, $R(\theta) = p_1+p_2-q_{12}$ exactly, and with probability at
least $1-\delta_p-\delta_q$, $R(\theta) \leq U_1+U_2-L_{12}$.  The
certifiable dependence gain over the marginal-only certificate
$U_1+U_2$ is exactly $L_{12}$.
\end{corollary}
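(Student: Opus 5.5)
The corollary specializes Theorem~\ref{thm:posthoc_certificate} to $K=2$, so the plan is to check three things: the exact identity, the probability statement, and the size of the gain.

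\textbf{Exact identity.} I would apply inclusion--exclusion to the realized chain $\theta$:
\begin{equation*}
R(\theta)=\Pr_\theta[E_1\cup E_2]=p_1(\theta)+p_2(\theta)-q_{12}(\theta).
\end{equation*}
This is the first display in the proof of Theorem~\ref{thm:ie}, read conditionally on $\Theta=\theta$. No Pearson reparametrization is needed. The same identity also drops out of Theorem~\ref{thm:hunter_trajectory}: for $K=2$ the increment decomposition has a single non-root vertex whose parent is the only earlier vertex, so $\bigcup_{j<k}E_j=E_{\pi(k)}$ and the inequality there holds with equality.

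\textbf{Probability statement.} Next I would identify the objects in Definition~\ref{def:certifiable_gain} when $K=2$. The set $\mathcal{T}_2$ contains only one spanning tree, the single edge $(1,2)$. Hence $G_{\mathrm{cert}}=L_{12}$ and $G_{\mathrm{oracle}}=q_{12}$, and the data-dependent choice of $\widehat T$ is trivial.

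On the intersection of the events~\eqref{eq:sim_marg} and~\eqref{eq:sim_overlap}, which by a union bound has probability at least $1-\delta_p-\delta_q$, we have $p_k\le U_k$ and $q_{12}\ge L_{12}$. Substituting these into the exact identity gives
\begin{equation*}
R(\theta)\le U_1+U_2-L_{12}.
\end{equation*}
Equivalently, this is~\eqref{eq:posthoc_certificate} with the $\min\{1,\cdot\}$ clipping dropped, which only loosens the bound.

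\textbf{Size of the gain.} The marginal-only certificate is $\sum_k U_k=U_1+U_2$, and it holds on the same event by Boole's inequality. Subtracting the new certificate from it leaves exactly $G_{\mathrm{cert}}=L_{12}$.

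\textbf{Main obstacle.} Nothing here is technically hard. The one point needing care is the word ``exactly'' in the gain statement. It refers to the unclipped difference of the two certificates. If one compares clipped values, the gain is $\min\{1,U_1+U_2\}-\min\{1,U_1+U_2-L_{12}\}$, which can be smaller than $L_{12}$. I would therefore state the gain for the unclipped certificates, or note that it coincides with the clipped version whenever $U_1+U_2\le 1$.
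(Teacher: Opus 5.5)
Your proposal is correct and follows the same route the paper takes implicitly. The corollary is Theorem~\ref{thm:posthoc_certificate} specialized to the unique spanning tree on two vertices, combined with two-event inclusion--exclusion, which is exactly how the paper later uses it in the proof of Theorem~\ref{thm:gain_decomposition}. Your clarification that ``exactly $L_{12}$'' refers to the unclipped certificates, and coincides with the clipped comparison whenever $U_1+U_2\le 1$, is accurate and makes explicit a point the paper leaves implicit.
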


\begin{remark}[A direct audit of the union event is a distinct, tighter
alternative at $K=2$]
\label{rem:direct_union_audit}
Corollary~\ref{cor:k2_certificate} decomposes the certificate through
three separately-audited quantities ($U_1$, $U_2$, $L_{12}$), which is
necessary when only pairwise summaries are available or when $K>2$
requires selecting among several candidate trees.  When $K=2$ and the
full joint failure vector $(Z_1,Z_2)$ is observed on every audit
trajectory---exactly our setting---one may instead construct a single
exact one-sided Clopper--Pearson upper bound $U_\cup$ directly on
$r=\Pr[E_1\cup E_2\mid\mathcal{C}]$ from the Binomial count
$X_\cup=\sum_t \mathbf{1}\{Z_1^{(t)}\cup Z_2^{(t)}\}$, at the same
overall confidence level.  The decomposed certificate pays the CP
slack of \emph{three} separately-constructed intervals (two upper, one
lower) combined by a union bound over their confidence budgets, while
$U_\cup$ pays the slack of \emph{one} interval on the compound event
directly; combining marginal bounds this way is generally weakly more
conservative than a direct joint construction at matched confidence,
and $U_\cup < U_1+U_2-L_{12}$ holds in every one of our 12
configurations (Section~\ref{subsec:hunter_empirical}).  Both are valid
$1-\delta$ certificates for $R(\theta)$; $U_\cup$ is the one an
operator with full joint audit data should actually deploy, and it is
the quantity we report as the primary result in
Section~\ref{subsec:hunter_empirical}.  We report the decomposed
certificate alongside it because it is the object
Theorem~\ref{thm:posthoc_certificate} formally proves and the one that
generalizes to $K>2$ and to pairwise-only audit data; the gap between
$U_\cup$ and $U_1+U_2-L_{12}$ is itself informative about how much
conservatism the decomposition costs.
\end{remark}

\begin{remark}[Overlap, not correlation, is the certification primitive]
As in~\eqref{eq:ie}, $q_{ij} = p_ip_j + \rho_{ij}\sqrt{p_i(1-p_i)p_j(1-p_j)}$
relates overlap to correlation, but $q_{ij}$ is the quantity that
directly enters~\eqref{eq:posthoc_certificate}.  It stays well defined
when a marginal rate is near zero and admits an exact one-sided
binomial confidence bound with no normality approximation---Pearson
correlation should be reported descriptively, not used as the
certification object itself.
\end{remark}

Let $M = \binom{K}{2}$.  For $\delta_p, \delta_q \in (0,1)$, Hoeffding's
inequality with a union bound gives valid simultaneous bounds
\begin{align}
U_k &= \min\{1, \widehat{p}_k + a_p\}, & a_p &= \sqrt{\tfrac{\log(K/\delta_p)}{2n}},
\label{eq:hoeff_up}\\
L_{ij} &= \max\{0, \widehat{q}_{ij} - a_q\}, & a_q &= \sqrt{\tfrac{\log(M/\delta_q)}{2n}}.
\label{eq:hoeff_lo}
\end{align}

\begin{theorem}[Closed-form finite-sample validity]
\label{thm:closed_form_validity}
The quantities in \eqref{eq:hoeff_up}--\eqref{eq:hoeff_lo} satisfy
\eqref{eq:sim_marg}--\eqref{eq:sim_overlap}, so
\eqref{eq:posthoc_certificate} is a valid $1-\delta_p-\delta_q$
certificate for $R(\theta)$, with no independence required among stages
or among edge statistics.
\end{theorem}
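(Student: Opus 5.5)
The argument checks the two simultaneous coverage conditions \eqref{eq:sim_marg} and \eqref{eq:sim_overlap} separately, each with a one-sided Hoeffding bound plus a union bound. Theorem~\ref{thm:posthoc_certificate} then supplies the certificate. Throughout, the realized chain $\theta$ is fixed. The audit vectors $\bm{Z}^{(t)}$ are i.i.d. and independent of everything used to fix $\theta$. Each coordinate $Z_k^{(t)}$ and each product $Z_i^{(t)}Z_j^{(t)}=\mathbf{1}\{E_i\cap E_j\}$ is therefore an i.i.d.\ Bernoulli sequence with mean $p_k=p_k(\theta)$ or $q_{ij}=q_{ij}(\theta)$, taking values in $[0,1]$.

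\emph{Marginal family.} For fixed $k$, the one-sided Hoeffding inequality gives $\Pr[\widehat{p}_k \leq p_k - a] \leq \exp(-2na^2)$. With $a=a_p$ this equals $\delta_p/K$. On the complementary event, $p_k < \widehat{p}_k + a_p$. Since $p_k\leq 1$ always, we also have $p_k\leq \min\{1,\widehat{p}_k+a_p\}=U_k$. The clipping at $1$ therefore never breaks coverage. A union bound over the $K$ stages shows that $p_k\leq U_k$ fails for some $k$ with probability at most $K\cdot\delta_p/K=\delta_p$, which is \eqref{eq:sim_marg}.

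\emph{Overlap family.} Symmetrically, $\Pr[\widehat{q}_{ij}\geq q_{ij}+a_q]\leq \exp(-2na_q^2)=\delta_q/M$. Off this event $q_{ij}>\widehat{q}_{ij}-a_q$. Because $q_{ij}\geq 0$, this gives $q_{ij}\geq \max\{0,\widehat{q}_{ij}-a_q\}=L_{ij}$. A union bound over the $M=\binom{K}{2}$ pairs yields \eqref{eq:sim_overlap}.

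The union bounds need only the individual marginal tail probabilities. No independence is required among the $K$ stage indicators, among the $M$ edge statistics (which share coordinates and are strongly dependent), or between the two families. This is exactly the ``no independence'' clause of the statement. Theorem~\ref{thm:posthoc_certificate} already combines the two validity events by a further union bound, so it does not need them to be independent either. Invoking it gives $R(\theta)\leq B_{\mathrm{cert}}$ with probability at least $1-\delta_p-\delta_q$, including under the data-dependent choice of $\widehat{T}$.

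The only step needing care is the conditioning. Hoeffding must be applied conditionally on $\Theta=\theta$, using the stated independence of $\mathcal{A}_n$ from the threshold-selection data. The resulting guarantee is then a realized-chain statement rather than a marginal one; one can integrate over $\theta$ afterward if a marginal statement is also wanted. The clipping and union-bound steps are routine.
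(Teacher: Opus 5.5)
Your proposal is correct and is the paper's proof: a one-sided Hoeffding bound at level $\delta_p/K$ (resp.\ $\delta_q/M$) for each stage (resp.\ pair), then a union bound within each family, which needs no independence among stages or edge statistics. Your treatment of the clipping at $1$ and $0$ and of conditioning on the realized chain $\theta$ spells out details the paper leaves implicit, and invoking Theorem~\ref{thm:posthoc_certificate} then gives the $1-\delta_p-\delta_q$ certificate, as in the paper.
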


\begin{proof}
For each $k$, Hoeffding's inequality gives $\Pr[\widehat{p}_k-p_k<-a_p]
\leq e^{-2na_p^2}=\delta_p/K$; a union bound over $K$ stages gives
simultaneous validity of all $U_k$.  Likewise $\Pr[\widehat{q}_{ij}-q_{ij}>a_q]
\leq e^{-2na_q^2}=\delta_q/M$ for each pair, and a union bound over the
$M$ pairs gives simultaneous validity of all $L_{ij}$.
\end{proof}

Writing $b_p, b_q$ for the same expressions with $\delta_p,\delta_q$
replaced by second confidence levels $\beta_p,\beta_q$:

Let $B_{\mathrm{oracle}}=\min\{1,\sum_k p_k-G_{\mathrm{oracle}}\}$ denote
the certificate an infinite audit sample would recover.

\begin{theorem}[Finite-sample certifiability gap]
\label{thm:certifiability_gap}
With probability at least $1-\delta_q-\beta_q$,
\begin{equation}
\label{eq:dep_gap_bound}
0 \leq \Delta_{\mathrm{dep}} \leq (K-1)(a_q+b_q)
\end{equation}
before clipping at one.  Consequently, a sufficient audit size for
$\Delta_{\mathrm{dep}} \leq \eta$ is
\begin{equation}
\label{eq:eta_complexity}
n \geq \frac{(K-1)^2}{2\eta^2}\left[\sqrt{\log(M/\delta_q)} +
\sqrt{\log(M/\beta_q)}\right]^2.
\end{equation}
Separately, with probability at least
$1-\delta_p-\delta_q-\beta_p-\beta_q$ (a union bound over all four
validity events),
\begin{equation}
\label{eq:total_gap}
0 \leq B_{\mathrm{cert}}-B_{\mathrm{oracle}}
\leq K(a_p+b_p)+(K-1)(a_q+b_q),
\end{equation}
before clipping at one---the marginal-side terms $\delta_p,\beta_p$
enter only this combined statement, not the $\Delta_{\mathrm{dep}}$-only
bound~\eqref{eq:dep_gap_bound}.
\end{theorem}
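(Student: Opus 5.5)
The plan is to reduce everything to two-sided deviation events for the empirical overlaps and marginals, and then compare the two maxima over spanning trees edge by edge. Define the \emph{lower} event $\mathcal{E}_q^{-}=\{\widehat q_{ij}-q_{ij}\le a_q\ \forall i<j\}$. This is exactly the validity event of Theorem~\ref{thm:closed_form_validity} and has probability at least $1-\delta_q$. Define the \emph{upper} event $\mathcal{E}_q^{+}=\{q_{ij}-\widehat q_{ij}\le b_q\ \forall i<j\}$. Hoeffding's inequality in the other tail with $b_q=\sqrt{\log(M/\beta_q)/(2n)}$, plus a union bound over the $M$ pairs, gives it probability at least $1-\beta_q$. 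A union bound then shows both hold together with probability at least $1-\delta_q-\beta_q$. The analogous events $\mathcal{E}_p^{\pm}$ with radii $a_p,b_p$ over the $K$ stages hold with probability $1-\delta_p$ and $1-\beta_p$.

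On $\mathcal{E}_q^-$ every $L_{ij}=\max\{0,\widehat q_{ij}-a_q\}\le q_{ij}$, since $q_{ij}\ge0$. Hence $\sum_{T}L_{ij}\le\sum_T q_{ij}$ for every tree $T$. Maximizing both sides gives $G_{\mathrm{cert}}\le G_{\mathrm{oracle}}$, which is the lower bound $\Delta_{\mathrm{dep}}\ge0$.

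For the upper bound, let $T^\star$ attain $G_{\mathrm{oracle}}$. Then $G_{\mathrm{cert}}\ge\sum_{T^\star}L_{ij}\ge\sum_{T^\star}(\widehat q_{ij}-a_q)$, because clipping at $0$ only increases $L_{ij}$. On $\mathcal{E}_q^+$, $\widehat q_{ij}\ge q_{ij}-b_q$, so $G_{\mathrm{cert}}\ge G_{\mathrm{oracle}}-(K-1)(a_q+b_q)$, since a spanning tree has exactly $K-1$ edges. This proves~\eqref{eq:dep_gap_bound}. The key point to emphasize is that we compare against the fixed oracle tree $T^\star$ rather than the data-chosen $\widehat T$. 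Because of this, no uniformity over trees beyond the simultaneous edge bounds is needed.

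The sample-size statement~\eqref{eq:eta_complexity} follows by imposing $(K-1)(a_q+b_q)\le\eta$, substituting both radii, factoring out $1/\sqrt{2n}$, and solving for $n$.

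For~\eqref{eq:total_gap}, write
\[
B_{\mathrm{cert}}-B_{\mathrm{oracle}}=\sum_k(U_k-p_k)+\Delta_{\mathrm{dep}}
\]
before clipping. On $\mathcal{E}_p^-$ we have $U_k\ge p_k$, so each term $U_k-p_k$ is nonnegative. Combined with $\Delta_{\mathrm{dep}}\ge0$, this gives the lower bound $0$. On $\mathcal{E}_p^+$ we have $U_k\le\widehat p_k+a_p\le p_k+b_p+a_p$ (the $\min\{1,\cdot\}$ only helps), so $\sum_k(U_k-p_k)\le K(a_p+b_p)$. Adding the bound on $\Delta_{\mathrm{dep}}$ gives~\eqref{eq:total_gap}, and a union bound over the four events gives probability $1-\delta_p-\delta_q-\beta_p-\beta_q$.

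The main obstacle is bookkeeping rather than analysis. The clipping operations ($\max\{0,\cdot\}$ in $L_{ij}$, $\min\{1,\cdot\}$ in $U_k$, and the outer $\min\{1,\cdot\}$) each need to be checked to point in the harmless direction. This is why the statement is made ``before clipping at one''. We also need to verify that the dependence-only bound uses only the two $q$-events, so that $\delta_p,\beta_p$ do not enter it.
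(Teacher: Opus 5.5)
Your proof is correct and follows essentially the same route as the paper. A second one-sided Hoeffding application at levels $\beta_q$ (resp.\ $\beta_p$) gives $L_{ij}\geq q_{ij}-(a_q+b_q)$, summed over the $K-1$ edges of the oracle-maximizing tree; $L_{ij}\leq q_{ij}$ on the original validity event gives $\Delta_{\mathrm{dep}}\geq 0$; and a union bound over the relevant events gives the stated probabilities, with your decomposition $B_{\mathrm{cert}}-B_{\mathrm{oracle}}=\sum_k(U_k-p_k)+\Delta_{\mathrm{dep}}$ and clipping checks just spelling out what the paper leaves to its supplement.
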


\begin{proof}
A second Hoeffding application at confidence levels $\beta_q$ (resp.
$\beta_p$) bounds $\widehat q_{ij}$ (resp. $\widehat p_k$) away from
$q_{ij}$ (resp. $p_k$) with high probability; combined with
$L_{ij}\geq q_{ij}-(a_q+b_q)$ summed over the oracle-maximizing tree's
edges, this gives both bounds by a union bound over the relevant
validity events (full derivation in the supplementary material).
\end{proof}

\paragraph{When can any positive gain be certified at all?}
The convergence rate above controls the \emph{magnitude} of the
certifiability gap once some overlap is detected; a separate question is
how much audit data is needed to detect any overlap in the first place,
which is what our own $K=2$ pipeline runs into directly.  For $K=2$, let
$X = \sum_t Z_1^{(t)}Z_2^{(t)} \sim \mathrm{Binomial}(n,q_{12})$ and let
$L_{\mathrm{CP}}$ be the exact one-sided Clopper--Pearson lower bound for
$q_{12}$.

\begin{theorem}[Exact positive-gain probability]
\label{thm:positive_gain_probability}
$L_{\mathrm{CP}}(X;n,\delta) > 0 \iff X \geq 1$, hence
$\Pr_{q_{12}}[G_{\mathrm{cert}} > 0] = 1-(1-q_{12})^n$.  Obtaining a
strictly tighter certificate than the marginal-only bound with
probability at least $1-\beta$ requires
$n \geq \log\beta / \log(1-q_{12})$, i.e.\ $n \approx \log(1/\beta)/q_{12}$
when $q_{12}$ is small.
\end{theorem}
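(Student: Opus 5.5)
The plan is to prove the three claims in order: the characterization $L_{\mathrm{CP}}>0\iff X\geq 1$, the exact probability of a positive gain, and the sample-size requirement.

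\textbf{Step 1: the Clopper--Pearson equivalence.} I will use the standard definition of the exact one-sided lower bound:
\[
L_{\mathrm{CP}}(X;n,\delta)=\inf\{q\in[0,1]:\Pr_{q}[\mathrm{Bin}(n,q)\geq X]>\delta\},
\]
with the convention $L_{\mathrm{CP}}(0;n,\delta)=0$. Equivalently, $L_{\mathrm{CP}}$ is the $\delta$-quantile of a $\mathrm{Beta}(X,n-X+1)$ distribution for $X\geq1$.

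If $X=0$, the tail probability $\Pr_q[\mathrm{Bin}\geq 0]=1>\delta$ holds for every $q$, so the infimum is $0$.

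If $X\geq1$, the map $q\mapsto\Pr_q[\mathrm{Bin}(n,q)\geq X]$ has three properties:
\begin{itemize}
\item it is continuous;
\item it is strictly increasing on $[0,1]$, by stochastic monotonicity of the binomial in $q$;
\item it equals $0$ at $q=0$.
\end{itemize}
Since $\delta\in(0,1)$, the set of $q$ where the tail probability exceeds $\delta$ is an interval $(q^\ast,1]$ with $q^\ast>0$. By continuity at $0$, the infimum is strictly positive. This gives the equivalence.

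\textbf{Step 2: the exact probability of a positive gain.} By Corollary~\ref{cor:k2_certificate}, at $K=2$ the certifiable gain is $G_{\mathrm{cert}}=L_{12}$. Here $L_{12}$ is taken to be $L_{\mathrm{CP}}$; the spanning-tree maximum is trivial because there is a single tree. Combining with Step 1,
\[
\{G_{\mathrm{cert}}>0\}=\{X\geq1\}.
\]
Since $X\sim\mathrm{Binomial}(n,q_{12})$, it follows that
\[
\Pr[X\geq1]=1-\Pr[X=0]=1-(1-q_{12})^n.
\]

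\textbf{Step 3: the sample-size requirement.} A strictly tighter certificate than the marginal-only bound $U_1+U_2$ means exactly $G_{\mathrm{cert}}>0$. I will impose $1-(1-q_{12})^n\geq 1-\beta$, i.e.\ $(1-q_{12})^n\leq\beta$.

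Take logarithms. For $q_{12}\in(0,1)$, $\log(1-q_{12})<0$, so dividing reverses the inequality and gives
\[
n\geq\frac{\log\beta}{\log(1-q_{12})}.
\]
Because Step 2 gives an exact probability, this condition is both necessary and sufficient. The boundary cases are handled separately:
\begin{itemize}
\item if $q_{12}=0$, no finite $n$ suffices, consistent with an infinite right-hand side;
\item if $q_{12}=1$, $n\geq1$ suffices.
\end{itemize}
For the small-$q_{12}$ approximation, I will use $-\log(1-q_{12})=q_{12}+O(q_{12}^2)$, which gives $n\approx\log(1/\beta)/q_{12}$.

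\textbf{Main obstacle.} The only delicate point is Step 1, specifically the strict positivity of the infimum when $X\geq1$. It needs continuity and the value $0$ at $q=0$ of the binomial upper tail, and it relies on adopting the stated Clopper--Pearson convention. The remaining steps are direct computation.
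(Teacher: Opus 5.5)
Your proposal is correct and follows essentially the same route as the paper. Both arguments rest on three facts: the Clopper--Pearson lower bound is zero exactly when $X=0$ and is a positive Beta quantile otherwise; $\Pr[X=0]=(1-q_{12})^n$; and solving $(1-q_{12})^n\leq\beta$, with $\log(1-q)\approx -q$, gives the sample size. Your Step~1 argument (the binomial upper tail is continuous, strictly increasing, and zero at $q=0$) and your note that $L_{12}$ must be the Clopper--Pearson bound rather than the Hoeffding one just make rigorous what the paper states in a single line.
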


\begin{proof}
The exact one-sided Clopper--Pearson bound equals zero exactly when
$X=0$ and is a positive beta quantile otherwise, giving the
equivalence.  $\Pr[X=0] = (1-q_{12})^n$ gives the stated probability;
solving $1-(1-q_{12})^n \geq 1-\beta$ for $n$, using
$\log(1-q)=-q+o(q)$ for the small-$q$ approximation, completes the proof.
\end{proof}

This is not an artifact of the Clopper--Pearson construction: it is a
rate that any valid procedure must pay.

\begin{theorem}[Information-theoretic limit on detecting positive overlap]
\label{thm:positive_gain_lower_bound}
Let $L\in[0,1]$ be any procedure based on $n$ i.i.d.\ Bernoulli overlap
observations satisfying $\inf_{q}\Pr_q[L\leq q]\geq 1-\delta$, with
$\beta+\delta<1$.  If $\Pr_q[L>0]\geq 1-\beta$ at some $q>0$, then
$n \geq \log(\beta+\delta)/\log(1-q)$.
\end{theorem}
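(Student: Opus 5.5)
The plan is a two-point (Le Cam-style) argument that compares the true parameter $q>0$ with the boundary parameter $q'=0$. The key observation is that under $q'=0$ every Bernoulli overlap observation equals zero almost surely. The sample is therefore the all-zeros vector $\mathbf{0}\in\{0,1\}^n$ with probability one, and the validity requirement $\Pr_0[L\leq 0]\geq 1-\delta$ forces $\Pr[L(\mathbf{0})>0]\leq\delta$. If $L$ uses auxiliary randomization, read this probability over that randomization alone, since the data are degenerate. The procedure is thus pinned near zero on the all-zeros outcome. This uses only the validity hypothesis at $q'=0$; the infimum over $q$ in the statement certainly includes $q=0$, because $L\in[0,1]$ and coverage is required for all $q\in[0,1]$.

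Next, evaluate $L$ under the alternative $q>0$. Split according to whether the sample is all zeros:
\begin{equation*}
\Pr_q[L>0] \leq \Pr_q[\text{sample}\neq\mathbf{0}] + \Pr_q[\text{sample}=\mathbf{0}]\cdot\Pr[L(\mathbf{0})>0].
\end{equation*}
Since $\Pr_q[\text{sample}=\mathbf{0}]=(1-q)^n$, this gives
\begin{equation*}
\Pr_q[L>0] \leq 1-(1-q)^n + (1-q)^n\delta \leq 1-(1-q)^n+\delta.
\end{equation*}
The crude bound in the last step suffices, and it is exactly what produces the $\beta+\delta$ form. Combining with the hypothesis $\Pr_q[L>0]\geq 1-\beta$ yields $(1-q)^n\leq \beta+\delta$. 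Finally, take logarithms. The assumption $\beta+\delta<1$ makes the right side's logarithm negative, and $\log(1-q)<0$ for $q\in(0,1)$. Dividing therefore flips the inequality, giving $n\geq \log(\beta+\delta)/\log(1-q)$. The case $q=1$ is trivial, with the convention that the bound reads $n\geq 0$.

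The only delicate step is the first one: justifying that validity at the degenerate point $q=0$ constrains $L$ on the all-zeros sample. This holds because $\mathbf{0}$ is the unique support point of the data distribution at $q=0$, and it is the same sample that occurs with probability $(1-q)^n$ under any alternative. Under possible randomization I would keep $(1-q)^n\delta$ rather than bounding $\Pr[L(\mathbf{0})>0]$ by $1$. This yields the sharper $(1-q)^n(1-\delta)\leq\beta$, which implies the stated bound. Finally, I would note that this matches Theorem~\ref{thm:positive_gain_probability} up to replacing $\beta$ by $\beta+\delta$, so Clopper--Pearson is rate-optimal.
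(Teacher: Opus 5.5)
Your proof is correct and takes essentially the same route as the paper. Validity at $q=0$ forces $\Pr[L(\mathbf{0})>0]\leq\delta$ because $P_0^n$ is a point mass at the all-zeros sample; the paper then bounds $\Pr_q[L>0]\leq\delta+\mathrm{TV}(P_0^n,P_q^n)=\delta+1-(1-q)^n$, which is exactly what your explicit split on the all-zeros event computes, and your sharper intermediate $(1-q)^n(1-\delta)\leq\beta$ implies the stated bound since $\beta/(1-\delta)\leq\beta+\delta$ whenever $\beta+\delta<1$.
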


\begin{proof}
Validity at $q=0$ gives $\Pr_0[L>0]\leq\delta$.  Since $P_0^n$
concentrates on the all-zero sample, $\mathrm{TV}(P_0^n,P_q^n) =
1-(1-q)^n$, so $\Pr_q[L>0] \leq \delta + 1-(1-q)^n$.  Combining with
$\Pr_q[L>0]\geq 1-\beta$ gives $(1-q)^n \leq \beta+\delta$, which
rearranges to the stated bound.
\end{proof}

Theorem~\ref{thm:positive_gain_probability} is therefore rate-optimal in
the rare-overlap regime: certifying \emph{any} dependence gain costs
$\Theta(1/q_{12})$ audit trajectories, while certifying a gain to
additive precision $\eta$ costs the usual $\Theta(1/\eta^2)$ via a
Bretagnolle--Huber testing argument applied to two overlap levels
$q_0<q_1$, which gives $n \geq \log(1/[2(\delta+\beta)]) /
\mathrm{kl}(q_0\|q_1) = \Omega\!\left(\tfrac{q_0(1-q_0)}{\eta^2}
\log\tfrac{1}{\delta+\beta}\right)$ for $q_1=q_0+2\eta$ and small $\eta$.
Section~\ref{subsec:hunter_empirical} reports where our two IDS
attack-variant tasks fall relative to this threshold.

\paragraph{Beating the nominal target is a different, harder event than
detecting a positive gain.}  Theorems~\ref{thm:positive_gain_probability}--\ref{thm:positive_gain_lower_bound}
characterize when a dependence-aware certificate can be shown to be
\emph{strictly tighter than the marginal-only certificate it replaces}.
A practitioner instead usually wants to know when the certificate beats
the nominal budget $\alpha$ itself.  For $K=2$, writing
$R=p_1+p_2-q_{12}$ for the true realized-chain risk
(Corollary~\ref{cor:k2_certificate}) and $\gamma=\alpha-R>0$ for the
true safety margin, and letting $\widehat{R}=\frac1n\sum_t Z_\cup^{(t)}$
be the direct empirical estimate of the union-event rate on $n$ i.i.d.\
audit trajectories with Clopper--Pearson (equivalently Hoeffding) upper
bound $U_\cup$:

\begin{theorem}[Sample complexity of beating the nominal target]
\label{thm:nominal_crossing_complexity}
A sufficient condition for $\Pr[U_\cup<\alpha]\geq1-\beta$, using the
Hoeffding-form bound $U_\cup=\widehat R+\sqrt{\log(1/\delta)/(2n)}$ at
confidence $1-\delta$, is
\begin{equation}
\label{eq:gamma_upper}
n \geq \frac{\left(\sqrt{\log(1/\delta)}+\sqrt{\log(1/\beta)}\right)^2}{2\gamma^2}.
\end{equation}
Conversely, for any procedure $U$ satisfying $\inf_{r}\Pr_r[U\geq r]\geq1-\delta$
(uniform validity as an upper bound at every true rate $r$), if
$\Pr_{R}[U<\alpha]\geq1-\beta$ at true rate $R=\alpha-\gamma$, then
\begin{equation}
\label{eq:gamma_lower}
n \geq \frac{\log\!\left(1/[2(\delta+\beta)]\right)}{\mathrm{kl}(\alpha\,\|\,\alpha-\gamma)}
= \Omega\!\left(\frac{\alpha(1-\alpha)}{\gamma^2}\log\frac{1}{\delta+\beta}\right)
\end{equation}
for small $\gamma$.  Both~\eqref{eq:gamma_upper} and~\eqref{eq:gamma_lower}
scale as $\Theta(1/\gamma^2)$, matching up to the confidence-dependent
constant---the usual estimation rate, not the $\Theta(1/q_{12})$
detection rate of Theorem~\ref{thm:positive_gain_probability}.
\end{theorem}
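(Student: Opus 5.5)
The plan has two independent halves: a Hoeffding concentration argument for the sufficient condition, and a two-point testing argument for the lower bound.

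\emph{Upper bound.} The variable $Z_\cup^{(t)}$ is Bernoulli$(R)$ with $R=\alpha-\gamma$, so Hoeffding gives $\Pr[\widehat R\geq R+t]\leq e^{-2nt^2}$. Take $t=\sqrt{\log(1/\beta)/(2n)}$. Then with probability at least $1-\beta$,
\[
U_\cup=\widehat R+\sqrt{\log(1/\delta)/(2n)}<R+t+\sqrt{\log(1/\delta)/(2n)}.
\]
It therefore suffices that
\[
\tfrac{1}{\sqrt{2n}}\left(\sqrt{\log(1/\delta)}+\sqrt{\log(1/\beta)}\right)\leq\gamma,
\]
and squaring and rearranging gives exactly~\eqref{eq:gamma_upper}. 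For the Clopper--Pearson variant I would note that its upper bound is dominated by the Hoeffding one, since Clopper--Pearson inverts the exact binomial tail, which is itself bounded by the Hoeffding tail. So the same $n$ suffices, which justifies the parenthetical ``equivalently''.

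\emph{Lower bound.} I would mirror the proof of Theorem~\ref{thm:positive_gain_lower_bound}, replacing the exact total-variation computation with a Bretagnolle--Huber bound. Consider the two hypotheses $P_\alpha^n$ (true rate $r=\alpha$) and $P_R^n$ (true rate $R=\alpha-\gamma$), and the event $A=\{U<\alpha\}$.
\begin{enumerate}[leftmargin=*]
\item Uniform validity at $r=\alpha$ gives $P_\alpha^n(A)\leq\Pr_\alpha[U<\alpha]\leq\delta$.
\item The hypothesis gives $P_R^n(A)\geq1-\beta$, so $P_R^n(A^c)\leq\beta$.
\item Bretagnolle--Huber yields
\[
P_\alpha^n(A)+P_R^n(A^c)\geq\tfrac12\exp\!\left(-\mathrm{KL}(P_\alpha^n\|P_R^n)\right)=\tfrac12 e^{-n\,\mathrm{kl}(\alpha\|\alpha-\gamma)}.
\]
\item Combining, $\delta+\beta\geq\frac12 e^{-n\,\mathrm{kl}}$, which rearranges to the first form of~\eqref{eq:gamma_lower}.
\end{enumerate}
I need to get the KL direction right. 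Bretagnolle--Huber is symmetric in the sense that either order of the divergence gives a valid inequality, so I will use the order matching the statement, $\mathrm{kl}(\alpha\|\alpha-\gamma)$.

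\emph{Asymptotics.} For the $\Omega$ form I would expand the Bernoulli KL to second order:
\[
\mathrm{kl}(\alpha\|\alpha-\gamma)=\frac{\gamma^2}{2\alpha(1-\alpha)}+O(\gamma^3),
\]
or use the $\chi^2$ upper bound $\mathrm{kl}\leq\gamma^2/[(\alpha-\gamma)(1-\alpha+\gamma)]$, which is cleaner and nonasymptotic. Substituting gives $n=\Omega\!\left(\alpha(1-\alpha)\gamma^{-2}\log\frac{1}{\delta+\beta}\right)$ once $\delta+\beta<1/4$, so that the logarithm is positive. I would state that side condition explicitly.

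\emph{Main obstacle.} The only genuinely delicate point is the last claim, that the upper and lower bounds match up to a confidence-dependent constant. The Hoeffding bound carries no $\alpha(1-\alpha)$ factor, so the match holds in the $\gamma$-rate $\Theta(1/\gamma^2)$ but not uniformly in $\alpha$. I would phrase it that way, optionally noting that a Bernstein or Clopper--Pearson bound would restore the variance factor. The remaining steps are routine.
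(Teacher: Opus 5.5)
Your proposal is correct and follows the same route as the paper: Hoeffding's inequality on $\widehat R$ for the sufficient condition, and the Bretagnolle--Huber testing inequality comparing true union rates $\alpha$ and $\alpha-\gamma$ (validity at $r=\alpha$ plus power at $R=\alpha-\gamma$) for the lower bound, with your second-order/$\chi^2$ bound on the Bernoulli KL supplying the details the paper defers to its supplement. One small precision: $\log(1/[2(\delta+\beta)])$ is already positive once $\delta+\beta<1/2$, and your condition $\delta+\beta\leq 1/4$ is what actually converts it into the $\Omega(\log\frac{1}{\delta+\beta})$ form, because it gives $\log(1/[2(\delta+\beta)])\geq\frac12\log\frac{1}{\delta+\beta}$.
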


\begin{proof}
The upper bound follows from Hoeffding's inequality applied to
$\widehat R$; the lower bound follows from the Bretagnolle--Huber
testing inequality applied to the Bernoulli union event $Z_\cup$,
comparing true rates $\alpha$ and $\alpha-\gamma$ (full derivation in
the supplementary material).
\end{proof}

\begin{corollary}[Positive gain and nominal crossing are distinct events]
\label{cor:nominal_crossing}
$L_{12}>0$ (Theorem~\ref{thm:positive_gain_probability}) is strictly
weaker than $U_1+U_2-L_{12}<\alpha$ whenever the marginal-only
certificate $U_1+U_2$ itself already exceeds $\alpha$: the two events
have different sample-complexity rates,
$\Theta(1/q_{12})$ versus $\Theta(1/\gamma^2)$, so a configuration can
certify a strictly positive dependence gain
(Theorem~\ref{thm:positive_gain_probability}) without its final
certificate beating Bonferroni (Theorem~\ref{thm:nominal_crossing_complexity})---this
is not an inconsistency between the two claims, since they are
different statistical events with different evidence requirements.
\end{corollary}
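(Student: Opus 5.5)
The corollary contains two separate claims, and I will prove each one. The first is the event inclusion $\{U_1+U_2-L_{12}<\alpha\}\subseteq\{L_{12}>0\}$, together with an example showing the inclusion is strict. The second is the rate comparison, which follows by quoting Theorems~\ref{thm:positive_gain_probability}, \ref{thm:positive_gain_lower_bound} and~\ref{thm:nominal_crossing_complexity} for the respective events.

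\textbf{Step 1: the inclusion.} Suppose the marginal-only certificate satisfies $U_1+U_2\geq\alpha$. Then $U_1+U_2-L_{12}<\alpha$ forces $L_{12}>U_1+U_2-\alpha\geq 0$. Hence every audit sample on which the final certificate beats $\alpha$ also certifies a strictly positive dependence gain.

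\textbf{Step 2: strictness.} I need an outcome in which $L_{12}>0$ but $U_1+U_2-L_{12}\geq\alpha$. I would use the Clopper--Pearson characterization from Theorem~\ref{thm:positive_gain_probability}: $L_{12}>0$ exactly when $X\geq1$. With a single joint failure, $X=1$, the bound $L_{12}$ is a small beta quantile, of order $1/n$ up to confidence factors. Meanwhile $U_1+U_2$ stays above $\alpha$ whenever $p_1+p_2$ is near $\alpha$ (Bonferroni allocation) and the interval slack is positive. Choosing $n$ moderate, for example so that $q_{12}\gtrsim 1/n$ but $\gamma\ll\sqrt{\log(1/\delta)/n}$, makes $\{X\geq1\}$ likely while the final certificate still sits above $\alpha$. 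This shows the inclusion is strict, and strict with non-negligible probability.

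\textbf{Step 3: the rates.} For the detection event, Theorem~\ref{thm:positive_gain_probability} gives sufficiency $n\approx\log(1/\beta)/q_{12}$, and Theorem~\ref{thm:positive_gain_lower_bound} gives the matching necessity, so the rate is $\Theta(1/q_{12})$. For the crossing event, Theorem~\ref{thm:nominal_crossing_complexity} gives $\Theta(1/\gamma^2)$. That theorem is stated for the direct bound $U_\cup$, so I must transfer the lower bound to the decomposed certificate. By Remark~\ref{rem:direct_union_audit} and Corollary~\ref{cor:k2_certificate}, $U_1+U_2-L_{12}$ is itself a uniformly valid upper confidence bound for $R$ at level $1-\delta_p-\delta_q$. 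Therefore the Bretagnolle--Huber lower bound in \eqref{eq:gamma_lower} applies to it verbatim with $\delta=\delta_p+\delta_q$.

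When $q_{12}$ is not tiny relative to $\gamma^2$, the two rates separate. For instance, $q_{12}$ of constant order with $\gamma\to0$ gives $1/q_{12}=O(1)$ while $1/\gamma^2\to\infty$. This yields the claimed distinctness: at audit sizes between the two thresholds, a positive gain is certified with high probability, while nominal crossing is not.

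\textbf{Main obstacle.} The hardest step is making the strictness claim quantitative rather than just exhibiting one configuration. In particular, I must verify that the lower bound of Theorem~\ref{thm:nominal_crossing_complexity} transfers to the decomposed certificate. That transfer needs the uniform-validity hypothesis $\inf_r\Pr_r[U\geq r]\geq1-\delta$. I would check this hypothesis by taking the infimum over all joint laws of $(Z_1,Z_2)$ with union rate $r$, which Theorem~\ref{thm:posthoc_certificate} supplies.
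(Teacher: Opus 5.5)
Your proposal is correct and follows the paper's own reasoning, which amounts to juxtaposing the $\Theta(1/q_{12})$ detection rate of Theorems~\ref{thm:positive_gain_probability}--\ref{thm:positive_gain_lower_bound} with the $\Theta(1/\gamma^2)$ crossing rate of Theorem~\ref{thm:nominal_crossing_complexity}; your explicit inclusion $L_{12}>U_1+U_2-\alpha\geq0$ and your transfer of the crossing lower bound from $U_\cup$ to the decomposed certificate $U_1+U_2-L_{12}$ spell out steps the paper leaves implicit. The transfer is not quite \emph{verbatim}, because that certificate depends on the full vector $(Z_1,Z_2)$ rather than on $Z_\cup$ alone, but it goes through once you take the Bretagnolle--Huber alternative to be the true joint law with its mass on $E_1\cup E_2$ rescaled proportionally to total $\alpha$ (leaving the conditional law given a union failure unchanged); this makes the per-trajectory divergence exactly the Bernoulli $\mathrm{kl}$, and uniform validity over all joint laws then finishes the argument.
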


Why can pairwise information not be pushed further?  Because, in
general, it cannot---even with exact pairwise probabilities, not just
correlations.

\begin{theorem}[Pairwise Non-Identifiability for $K \geq 3$]
\label{thm:pairwise_nonidentifiability}
Marginal probabilities and all pairwise joint probabilities do not
determine TMR once $K \geq 3$.  For $K=3$, there exist two exchangeable
laws with identical $p_1=p_2=p_3=0.5$, identical $q_{12}=q_{13}=q_{23}=0.27$
(hence identical $\rho_{ij}=0.08$ for every pair), but with trajectory
failure probabilities $0.73$ and $0.96$, respectively---straddling the
independence value $1-(1-0.5)^3 = 0.875$ on both sides.
\end{theorem}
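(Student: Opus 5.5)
The plan is to prove the general claim by exhibiting the $K=3$ example, since two laws that agree on all marginal and pairwise quantities but differ in TMR already show non-identifiability. For $K>3$ the same pair can be padded with steps that never fail, which leaves the union unchanged. The natural parametrization for $K=3$ is the class of \emph{exchangeable} laws on $(Z_1,Z_2,Z_3)\in\{0,1\}^3$. Such a law is determined by the probabilities $w_j$, $j=0,1,2,3$, of exactly $j$ failures, spread uniformly over the $\binom{3}{j}$ patterns. In these coordinates the target quantities are linear:
\begin{equation*}
p_k=\tfrac13 w_1+\tfrac23 w_2+w_3,\qquad q_{ij}=\tfrac13 w_2+w_3,\qquad \text{TMR}=1-w_0,
\end{equation*}
together with $\sum_j w_j=1$. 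Fixing $p_k=0.5$ and $q_{ij}=0.27$ gives two linear constraints on four unknowns (plus normalization), so a one-parameter family remains. TMR varies along this family, and the proof amounts to locating its endpoints.

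The next step is to solve the system in terms of $w_0$. Subtracting the $q$-equation from the $p$-equation gives $w_1+w_2=0.69$. Normalization then yields $w_3=0.31-w_0$, and the $q$-equation yields $w_2=3(0.27-w_3)=3(w_0-0.04)$, with $w_1=0.69-w_2$. Nonnegativity of all four weights forces $0.04\le w_0\le 0.27$, so TMR ranges over $[0.73,0.96]$. The two endpoints are the claimed laws:
\begin{itemize}
\item law A: $(w_0,w_1,w_2,w_3)=(0.27,0,0.69,0.04)$, with $\text{TMR}=0.73$;
\item law B: $(0.04,0.69,0,0.27)$, with $\text{TMR}=0.96$.
\end{itemize}

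I would then verify each law directly. For law A, $p_k=0.46+0.04=0.5$ and $q_{ij}=0.23+0.04=0.27$. For law B, $p_k=0.23+0.27=0.5$ and $q_{ij}=0.27$. In both cases the correlation follows from the overlap identity used in Theorem~\ref{thm:ie}: $\rho_{ij}=(0.27-0.25)/0.25=0.08$. The independence value is $1-0.5^3=0.875$, which lies strictly between $0.73$ and $0.96$.

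The only real design step is choosing the exchangeable parametrization, which collapses the $8$-atom simplex to four weights and makes every constraint linear. After that, the main thing to check is nonnegativity of the weights at the endpoints, which the explicit values above confirm. As a remark, the sharp identification interval is exactly the segment traced by $w_0$. This previews Theorem~\ref{thm:sharp_k3_interval}, although the restriction to exchangeable laws alone does not establish that this segment is the full interval over all laws.
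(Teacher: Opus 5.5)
Your proposal is correct and is essentially the paper's proof: the paper uses the same exchangeable construction (uniform over patterns given $N=Z_1+Z_2+Z_3$) with exactly your laws A and B, verifying $p_i=\mathbb{E}[N]/3=0.5$ and $q_{ij}=\mathbb{E}\binom{N}{2}/3=0.27$. You add two things the paper leaves implicit: you derive the two laws as the endpoints $w_0\in[0.04,0.27]$ of the one-parameter feasible family, and you pad with never-failing steps to cover $K>3$.
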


\begin{proof}
Let $N = Z_1+Z_2+Z_3$ and, conditional on $N=n$, distribute probability
uniformly over the $\binom{3}{n}$ binary vectors with exactly $n$ ones
(this preserves exchangeability). Under law A: $\Pr[N{=}0]=0.27$,
$\Pr[N{=}2]=0.69$, $\Pr[N{=}3]=0.04$.  Under law B: $\Pr[N{=}0]=0.04$,
$\Pr[N{=}1]=0.69$, $\Pr[N{=}3]=0.27$.  Both give
$p_i = \mathbb{E}[N]/3 = 0.5$ and $q_{ij} = \mathbb{E}\binom{N}{2}/3=0.27$,
hence identical marginals and pairwise correlations.  Yet
$\Pr[N>0] = 1-\Pr[N{=}0]$ equals $0.73$ under A and $0.96$ under B.
\end{proof}

\begin{remark}
This is precisely why we report the exact $K=2$ identity
(Theorem~\ref{thm:ie}) as empirically validated, use the
distribution-free Hunter bound (Theorem~\ref{thm:hunter_trajectory}) for
any $K$ we do measure, and do not attempt to certify TMR at $K \geq 3$
from pairwise correlations alone---Table~\ref{tab:scaling}'s $K=3$ row
is reported strictly as an illustrative extrapolation, consistent with
Remark~\ref{rem:truncation_artifact}.
\end{remark}

The counterexample above shows non-identifiability qualitatively; for
$K=3$ the exact size of the resulting uncertainty has a closed form.

\begin{theorem}[Sharp identified interval for $K=3$]
\label{thm:sharp_k3_interval}
Let $S_1=p_1+p_2+p_3$, $S_2=q_{12}+q_{13}+q_{23}$, and
$t=\Pr[E_1\cap E_2\cap E_3]$, so $R=S_1-S_2+t$ by inclusion--exclusion.
Given feasible $p_i$ and $q_{ij}$, define
\begin{align}
t_{\min} &= \max\{0,\; q_{12}{+}q_{13}{-}p_1,\; q_{12}{+}q_{23}{-}p_2,\;
q_{13}{+}q_{23}{-}p_3\},\\
t_{\max} &= \min\{q_{12},q_{13},q_{23},\; 1-S_1+S_2\}.
\end{align}
Then $R \in [S_1-S_2+t_{\min},\, S_1-S_2+t_{\max}]$, and every value in
this interval is attained by some joint distribution with the specified
marginals and pairwise overlaps.
\end{theorem}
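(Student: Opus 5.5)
The plan is to parametrize every joint law of $(Z_1,Z_2,Z_3)$ with the given $p_i$ and $q_{ij}$ by the single free parameter $t$. A law on $\{0,1\}^3$ is determined by the eight atom probabilities. Fixing $p_i$, $q_{ij}$ and $t$ fixes all eight atoms by Möbius inversion:
\begin{align*}
\Pr[111] &= t,\\
\Pr[\text{exactly } E_i\cap E_j] &= q_{ij}-t,\\
\Pr[\text{only } E_i] &= p_i-q_{ij}-q_{ik}+t,\\
\Pr[000] &= 1-S_1+S_2-t.
\end{align*}
So the set of compatible laws is in bijection with the set of $t$ making all eight atoms nonnegative. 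Since $R=S_1-S_2+t$ is affine in $t$ (inclusion--exclusion), the identified set for $R$ is the image of this feasible $t$-set.

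The key steps, in order, are as follows.
\begin{enumerate}
\item Write the atoms above and check that they sum to one and reproduce the given $p_i$ and $q_{ij}$. This is a direct computation.
\item Read off the nonnegativity constraints. The constraint $t\ge0$ together with $t\ge q_{ij}+q_{ik}-p_i$ for each $i$ gives exactly $t\ge t_{\min}$. The constraints $t\le q_{ij}$ for each pair, together with $t\le 1-S_1+S_2$, give exactly $t\le t_{\max}$.
\item Since the constraints are all linear in $t$, the feasible set is the interval $[t_{\min},t_{\max}]$. Every $t$ in it yields a valid probability vector with the prescribed marginals and pairwise overlaps.
\item Map this interval through $R=S_1-S_2+t$. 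This gives both the containment claim and the attainment claim at once, because every interior point is attained by the explicit atom vector.
\end{enumerate}

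The only real subtlety is the meaning of ``feasible $p_i,q_{ij}$''. I will interpret it as the existence of at least one compatible joint law, which is equivalent to $t_{\min}\le t_{\max}$. Under this reading the interval is nonempty, and no further condition is needed. I will state this interpretation explicitly so that sharpness is unconditional.

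The step I expect to be most error-prone is the bookkeeping of the ``only $E_i$'' atom. One must check that it is $p_i-q_{ij}-q_{ik}+t$, with a plus sign on $t$ because the triple atom is subtracted twice and must be added back once. Similarly, the empty atom is $1-S_1+S_2-t$. I will verify both by summing the atoms within each event $E_i$ and within each pairwise intersection. Beyond that, everything is linear and mechanical, and the counterexample in Theorem~\ref{thm:pairwise_nonidentifiability} serves as a sanity check. There $t_{\min}=0.04$ and $t_{\max}=0.27$, giving $R\in[0.73,0.96]$, which matches both laws as the endpoints.
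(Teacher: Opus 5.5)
Your proposal is correct and is essentially the paper's own proof. Both write the eight atoms in terms of the single parameter $t$, observe that nonnegativity of the atoms is exactly $t_{\min}\le t\le t_{\max}$, and push that interval through the affine map $R=S_1-S_2+t$ to get both the containment and the attainment claims. Your sanity check is also right: the two laws of Theorem~\ref{thm:pairwise_nonidentifiability} sit at $t=0.04$ and $t=0.27$, which are exactly the endpoints $R=0.73$ and $R=0.96$.
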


\begin{proof}
The eight atom probabilities of $(Z_1,Z_2,Z_3)$ can be written in terms
of $t$ as $\pi_{111}=t$; $\pi_{110}=q_{12}-t$, $\pi_{101}=q_{13}-t$,
$\pi_{011}=q_{23}-t$; $\pi_{100}=p_1-q_{12}-q_{13}+t$ (cyclically for
$\pi_{010},\pi_{001}$); and $\pi_{000}=1-S_1+S_2-t$.  Nonnegativity of
all eight atoms is equivalent to $t_{\min}\leq t\leq t_{\max}$, and
conversely every $t$ in that range yields a valid joint law with the
required marginals and pairwise overlaps.  Since $R=1-\pi_{000}=S_1-S_2+t$,
the interval is both valid and sharp.
\end{proof}

\begin{corollary}[Exact non-identifiability condition]
\label{cor:k3_nonidentifiability}
For $K=3$, marginals and pairwise overlaps identify $R$ exactly iff
$t_{\min}=t_{\max}$; otherwise the irreducible pairwise-information width
is $W_{\mathrm{pair}} = t_{\max}-t_{\min} > 0$.  The upper endpoint of the
sharp interval coincides with the clipped spanning-tree bound
(Theorem~\ref{thm:hunter_trajectory}) at $K=3$, so that bound is sharp
among all bounds using only marginals and pairwise overlaps; the
interval \emph{width} is what quantifies the information lost by not
observing the triple overlap.  For general $K$, the analogous sharp
interval is the solution to a linear program over the $2^K$ joint atom
probabilities, subject to the marginal and pairwise-overlap constraints.
\end{corollary}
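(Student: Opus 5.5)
The plan is to read all three claims directly off Theorem~\ref{thm:sharp_k3_interval}, which already gives the full set of attainable values of $R$ given $(p_i,q_{ij})$. No new probabilistic argument is needed; the work is algebraic bookkeeping.

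\textbf{Identifiability and width.} By Theorem~\ref{thm:sharp_k3_interval}, the set of values of $R$ consistent with the given marginals and pairwise overlaps is exactly $[S_1-S_2+t_{\min},\,S_1-S_2+t_{\max}]$. Feasibility guarantees this interval is nonempty, so $t_{\min}\le t_{\max}$. The map $t\mapsto R=S_1-S_2+t$ is a bijection with slope one. Hence $R$ is identified exactly when the interval is a single point, i.e.\ when $t_{\min}=t_{\max}$. Otherwise its length is $t_{\max}-t_{\min}>0$, and by the sharpness half of the theorem this width cannot be reduced by any procedure using only pairwise information. Since the free parameter $t$ is precisely the triple overlap $\Pr[E_1\cap E_2\cap E_3]$, this width is exactly the information lost by not observing it.

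\textbf{Upper endpoint equals the spanning-tree bound.} For $K=3$ there are three spanning trees, each obtained by deleting one edge of the triangle. The tree omitting edge $e$ has weight $S_2-q_e$. Maximizing over trees therefore gives $\max_T\sum_{(i,j)\in T}q_{ij}=S_2-\min\{q_{12},q_{13},q_{23}\}$. The clipped spanning-tree bound of Theorem~\ref{thm:hunter_trajectory}, evaluated with exact $p_k,q_{ij}$, is then
\[
\min\{1,\;S_1-S_2+\min_{ij}q_{ij}\}.
\]
The upper endpoint is
\[
S_1-S_2+t_{\max}=S_1-S_2+\min\{q_{12},q_{13},q_{23},\,1-S_1+S_2\}.
\]
Distributing $S_1-S_2$ inside the minimum turns the last term into $1$, so the two expressions coincide. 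Because this endpoint is attained by some joint law (Theorem~\ref{thm:sharp_k3_interval}), no valid upper bound depending only on marginals and pairwise overlaps can be smaller. This gives the sharpness claim.

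\textbf{General $K$.} I would write the unknown law as a vector $\pi\in\Delta^{2^K}$ over the atoms $z\in\{0,1\}^K$, with three sets of linear constraints:
\begin{itemize}
\item $\sum_{z:z_k=1}\pi_z=p_k$ for each $k$;
\item $\sum_{z:z_i=z_j=1}\pi_z=q_{ij}$ for each pair $i<j$;
\item $\pi_z\ge0$ and $\sum_z\pi_z=1$.
\end{itemize}
The objective $R=1-\pi_{0\cdots0}$ is linear in $\pi$. Its minimum and maximum over this polytope are therefore the two LP optima. The feasible set is convex, so every intermediate value is attained along the segment between the two optimizers, which makes the interval sharp.

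The only step needing care is the $\min$-distribution identity in the endpoint comparison, specifically checking that the clipping at $1$ in Theorem~\ref{thm:hunter_trajectory} matches the $1-S_1+S_2$ term in $t_{\max}$. Everything else is immediate.
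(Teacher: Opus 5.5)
Your proposal is correct and matches the paper's implicit argument, since the paper states this corollary without a separate proof as a direct read-off of Theorem~\ref{thm:sharp_k3_interval}. The endpoint comparison via $\max_T\sum_{(i,j)\in T}q_{ij}=S_2-\min_{ij}q_{ij}$, distributing $S_1-S_2$ into the minimum defining $t_{\max}$, and the general-$K$ LP argument (convex feasible polytope, with $R=1-\pi_{0\cdots0}$ linear in the atom vector) are exactly what the statement relies on.
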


The results of this section, together with
Section~\ref{subsec:certifiability}, separate three distinct quantities
that are easy to conflate under a single label of ``dependence gain.''
Let $G_{\mathrm{true}} \triangleq \sum_k p_k - R$ be the actual
reduction a realized chain's true dependence structure achieves over
the marginal sum, where $R=\Pr[\bigcup_k E_k]$ is the chain's true
trajectory risk; recall $G_{\mathrm{oracle}}$ (pairwise-only, exact
overlaps) and $G_{\mathrm{cert}}$ (pairwise-only, finite-sample lower
bounds) from Definition~\ref{def:certifiable_gain}.

\begin{theorem}[Structural and statistical gain decomposition]
\label{thm:gain_decomposition}
$0 \leq G_{\mathrm{cert}} \leq G_{\mathrm{oracle}} \leq G_{\mathrm{true}}$,
and
\begin{equation}
\label{eq:gain_decomposition}
G_{\mathrm{true}} - G_{\mathrm{cert}}
= \underbrace{(G_{\mathrm{true}} - G_{\mathrm{oracle}})}_{\text{structural gap}}
+ \underbrace{(G_{\mathrm{oracle}} - G_{\mathrm{cert}})}_{\text{certifiability gap}}.
\end{equation}
For $K=2$, $G_{\mathrm{true}}=G_{\mathrm{oracle}}=q_{12}$ exactly (the
structural gap vanishes: there is only one pair, so pairwise
information is already everything there is to know), and the entire
decomposition reduces to the certifiability gap of
Theorem~\ref{thm:certifiability_gap}.  For $K\geq3$, the structural gap
is generically strictly positive and does not vanish as $n\to\infty$:
by Corollary~\ref{cor:k3_nonidentifiability}, $G_{\mathrm{true}}$ is
not even identified by exact marginals and pairwise overlaps, so no
amount of pairwise-only audit data---however large---can close it.
\end{theorem}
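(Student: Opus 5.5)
The plan is to prove the chain of inequalities link by link and then read off the decomposition as a telescoping identity.

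First, $G_{\mathrm{cert}}\geq 0$. The lower confidence bounds satisfy $L_{ij}\geq 0$: the Hoeffding form \eqref{eq:hoeff_lo} is clipped at zero, and a Clopper--Pearson bound is nonnegative. Hence every tree sum $\sum_{(i,j)\in T}L_{ij}$ is nonnegative, and so is its maximum.

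Second, $G_{\mathrm{cert}}\leq G_{\mathrm{oracle}}$. This holds on the validity event of \eqref{eq:sim_overlap}, where $L_{ij}\leq q_{ij}$ for all pairs simultaneously. Taking the tree $\widehat T$ that attains $G_{\mathrm{cert}}$, we get $\sum_{\widehat T}L_{ij}\leq\sum_{\widehat T}q_{ij}\leq\max_T\sum_T q_{ij}$. I will state explicitly that this inequality is a high-probability statement, holding with probability at least $1-\delta_q$, and is deterministic only if the $L_{ij}$ are valid. This caveat is the one genuine subtlety in an otherwise mechanical argument.

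Third, $G_{\mathrm{oracle}}\leq G_{\mathrm{true}}$. This is Theorem~\ref{thm:hunter_trajectory} applied to every tree. For each $T$, $R\leq\sum_k p_k-\sum_{T}q_{ij}$, which rearranges to $\sum_T q_{ij}\leq\sum_k p_k-R=G_{\mathrm{true}}$. Taking the maximum over $T\in\mathcal T_K$ gives the claim.

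The decomposition \eqref{eq:gain_decomposition} is then pure algebra: add and subtract $G_{\mathrm{oracle}}$. The chain of inequalities shows both brackets are nonnegative on the validity event.

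For the $K=2$ case, there is exactly one spanning tree, the edge $(1,2)$, so $G_{\mathrm{oracle}}=q_{12}$. Inclusion--exclusion, as in Corollary~\ref{cor:k2_certificate}, gives $R=p_1+p_2-q_{12}$, so $G_{\mathrm{true}}=q_{12}$ and the structural gap is zero. It follows that $G_{\mathrm{true}}-G_{\mathrm{cert}}=\Delta_{\mathrm{dep}}$, which Theorem~\ref{thm:certifiability_gap} controls.

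For $K\geq3$, I will argue in two parts.

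\emph{Non-vanishing as $n\to\infty$.} $G_{\mathrm{oracle}}$ depends only on the $p_k$ and $q_{ij}$, which are fixed population quantities. So the structural gap $G_{\mathrm{true}}-G_{\mathrm{oracle}}$ does not involve the audit sample and cannot shrink with $n$. Meanwhile $G_{\mathrm{cert}}\to G_{\mathrm{oracle}}$ by \eqref{eq:dep_gap_bound}.

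\emph{Positivity and non-identifiability.} I will use Theorem~\ref{thm:sharp_k3_interval} and Corollary~\ref{cor:k3_nonidentifiability}. Fix the pairwise data with $t_{\min}<t_{\max}$. Then $G_{\mathrm{true}}=S_1-R=S_2-t$ ranges over a nondegenerate interval as the triple overlap $t$ varies, while $G_{\mathrm{oracle}}$ stays fixed. Since $G_{\mathrm{oracle}}\leq G_{\mathrm{true}}$ for every feasible law, it follows that $G_{\mathrm{oracle}}\leq S_2-t_{\max}$. Every law with $t<t_{\max}$ therefore has a strictly positive structural gap, and that set is an open, nonempty part of the feasible interval. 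This is the sense of ``generically.''

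I expect the main obstacle to be phrasing ``generically strictly positive'' rigorously. The cleanest route is the interval argument above, with the counterexample of Theorem~\ref{thm:pairwise_nonidentifiability} as a concrete instance: there the gap is positive under at least one of the two laws A and B. The general-$K$ statement follows by embedding the $K=3$ construction, with the extra stages set to failure events of probability zero.
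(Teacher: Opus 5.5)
Your proposal is correct and follows the paper's proof essentially step for step. The paper uses $L_{ij}\le q_{ij}$ on the validity event for the first link, the spanning-tree bound of Theorem~\ref{thm:hunter_trajectory} rearranged as a gain for the second, add-and-subtract algebra for the identity, $R=p_1+p_2-q_{12}$ for $K=2$, and Corollary~\ref{cor:k3_nonidentifiability} for $K\geq 3$; your extra care makes precise what the paper leaves informal, namely flagging that $G_{\mathrm{cert}}\le G_{\mathrm{oracle}}$ holds only on the probability-$(1-\delta_q)$ validity event, and using the attained endpoint $t_{\max}$ to get $G_{\mathrm{oracle}}\le S_2-t_{\max}$, so that every law with $t<t_{\max}$ has a strictly positive structural gap.
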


\begin{proof}
$G_{\mathrm{cert}}\leq G_{\mathrm{oracle}}$ holds because $L_{ij}\leq
q_{ij}$ for every edge under the validity event
(Theorem~\ref{thm:certifiability_gap}'s proof).  $G_{\mathrm{oracle}}\leq
G_{\mathrm{true}}$ is the spanning-tree bound of
Theorem~\ref{thm:hunter_trajectory} restated as a gain: $R\leq\sum_k
p_k-G_{\mathrm{oracle}}$ rearranges to $G_{\mathrm{oracle}}\leq\sum_k
p_k-R=G_{\mathrm{true}}$.  Equation~\eqref{eq:gain_decomposition} is
immediate algebra.  For $K=2$ there is a unique pair and $R=p_1+p_2-q_{12}$
exactly (Corollary~\ref{cor:k2_certificate}), so $G_{\mathrm{true}}=q_{12}=G_{\mathrm{oracle}}$.
For $K\geq3$, Corollary~\ref{cor:k3_nonidentifiability} exhibits, for
fixed marginals and pairwise overlaps, joint laws with different $R$
(hence different $G_{\mathrm{true}}$) and identical $G_{\mathrm{oracle}}$;
since no audit of pairwise statistics alone can distinguish these laws,
the structural gap persists regardless of audit size.
\end{proof}

This decomposition is why Sections~\ref{subsec:certifiability}
and~\ref{subsec:coupling_sources}--\ref{subsec:hunter_empirical} are one
question, not two: the certifiability gap asks how much of the
dependence a pairwise audit \emph{could} in principle reveal is lost to
finite-sample noise, while the structural gap asks how much of the
chain's true dependence a pairwise audit could never reveal even with
infinite data, because it lives in interactions the audit does not
observe. Every dependence-aware bound in this paper pays one, the
other, or both costs, and Table~\ref{tab:mechanism}'s finding that
same-model pairing adds no material covariance beyond cross-model
pairing (Section~\ref{subsec:hunter_empirical}) says the residual
coupling we do certify is fully explained by pairwise, shared-difficulty
structure---for this $K=2$ pipeline, there is no hidden higher-order
term for the structural gap to hide in.

\subsection{Two Sources of Apparent Coupling}
\label{subsec:coupling_sources}

Two mechanisms can produce the strongly positive $\hat\rho$ we
originally measured (Section~\ref{subsec:hunter_empirical}), and it
matters which one is at work: one is a property of task construction,
the other a property of the model.

\paragraph{Deterministic label nesting.}
Let $Y_f$ be a fine-grained label and $Y_c=g(Y_f)$ its coarse label, with
prediction sets $C_f(x), C_c(x)$ and failure events $E_f=\{Y_f\notin
C_f(X)\}$, $E_c=\{Y_c\notin C_c(X)\}$.

\begin{theorem}[Deterministic label nesting induces nested failures]
\label{thm:label_induced_coupling}
If the prediction sets are label-consistent, $g(C_f(x))\subseteq C_c(x)$
for every $x$, then $E_c\subseteq E_f$ and $q_{cf}=\Pr[E_c\cap E_f]=p_c$.
If $0<p_c\leq p_f<1$, the resulting Bernoulli correlation is
\begin{equation}
\label{eq:nested_rho}
\rho_{cf} = \sqrt{\frac{p_c(1-p_f)}{p_f(1-p_c)}}.
\end{equation}
In the special case of a deterministic one-to-one relabeling
($E_c=E_f$), $\rho_{cf}=1$.
\end{theorem}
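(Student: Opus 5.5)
The plan has three steps: establish the set inclusion $E_c\subseteq E_f$ from label-consistency, read off $q_{cf}=p_c$, and substitute into the Bernoulli-indicator correlation formula from the proof of Theorem~\ref{thm:ie}.

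\textbf{Step 1: the inclusion, by contrapositive.} Take an outcome $\omega=(x,Y_f)$ that is not in $E_f$, so $Y_f\in C_f(x)$. Applying $g$ gives $Y_c=g(Y_f)\in g(C_f(x))$. Label-consistency gives $g(C_f(x))\subseteq C_c(x)$, hence $Y_c\in C_c(x)$ and $\omega\notin E_c$. Therefore $E_f^c\subseteq E_c^c$, which is $E_c\subseteq E_f$. Consequently $E_c\cap E_f=E_c$, so $q_{cf}=\Pr[E_c\cap E_f]=p_c$, and monotonicity gives $p_c\le p_f$ automatically.

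\textbf{Step 2: the correlation formula.} Use the same Pearson identity for Bernoulli indicators as in Theorem~\ref{thm:ie}:
\[
\rho_{cf}=\frac{q_{cf}-p_cp_f}{\sqrt{p_c(1-p_c)p_f(1-p_f)}}.
\]
The hypothesis $0<p_c\le p_f<1$ is needed only so that the denominator is nonzero. Substituting $q_{cf}=p_c$ makes the numerator $p_c(1-p_f)$. Dividing by the square root then gives $\sqrt{p_c(1-p_f)/(p_f(1-p_c))}$, which is \eqref{eq:nested_rho}.

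\textbf{Step 3: the one-to-one special case.} If $E_c=E_f$, then $p_c=p_f$ and the ratio under the square root equals $1$, so $\rho_{cf}=1$.

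The only step requiring care is Step 1: the inclusion must be stated pointwise in $x$ and in the true label, and $Y_c$ must be treated as the deterministic image $g(Y_f)$ on the same outcome. Once that is in place, the rest is the algebra already used in Theorem~\ref{thm:ie}.
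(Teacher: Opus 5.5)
Your proof is correct and matches the paper's argument essentially line for line. The paper also takes the contrapositive of $Y_f\in C_f(X)\Rightarrow g(Y_f)\in g(C_f(X))\subseteq C_c(X)$ to get $E_c\subseteq E_f$ and $q_{cf}=p_c$, substitutes into the Bernoulli Pearson formula to obtain \eqref{eq:nested_rho}, and sets $p_c=p_f$ to get $\rho_{cf}=1$.
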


\begin{proof}
$Y_f\in C_f(X) \Rightarrow Y_c=g(Y_f)\in g(C_f(X))\subseteq C_c(X)$, so
the contrapositive gives $E_c\subseteq E_f$, hence $q_{cf}=\Pr[E_c]=p_c$.
Substituting into $\rho_{cf}=(q_{cf}-p_cp_f)/\sqrt{p_c(1-p_c)p_f(1-p_f)}$
and simplifying gives~\eqref{eq:nested_rho}; $p_c=p_f$ gives $\rho_{cf}=1$.
\end{proof}

\begin{corollary}[Approximate nesting]
\label{cor:approximate_nesting}
If $\Pr[E_c\setminus E_f]\leq\varepsilon$, then $q_{cf}\geq p_c-\varepsilon$
and $\rho_{cf}$ is bounded below accordingly.  Near-deterministic label
mappings can therefore mechanically produce large observed coupling with
no model-specific mechanism required.
\end{corollary}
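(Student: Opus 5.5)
The argument splits into a set inclusion followed by a monotonicity bound on the correlation. First, the overlap bound. Decompose $E_c$ into disjoint pieces:
\[
E_c = (E_c\cap E_f)\,\cup\,(E_c\setminus E_f),
\]
so that
\[
p_c = q_{cf} + \Pr[E_c\setminus E_f] \leq q_{cf}+\varepsilon,
\]
which gives $q_{cf}\geq p_c-\varepsilon$. This step needs no label-consistency assumption at all, only the hypothesis on $\Pr[E_c\setminus E_f]$. When $\varepsilon=0$ it recovers the nesting case $q_{cf}=p_c$ of Theorem~\ref{thm:label_induced_coupling}.

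Second, the correlation bound. I will substitute into the Bernoulli identity already used in the proof of Theorem~\ref{thm:ie}:
\[
\rho_{cf}=\frac{q_{cf}-p_cp_f}{\sqrt{p_c(1-p_c)p_f(1-p_f)}}.
\]
The denominator depends only on the marginals, and the numerator is increasing in $q_{cf}$. For fixed marginals with $0<p_c,p_f<1$, $\rho_{cf}$ is therefore an increasing affine function of $q_{cf}$, which yields the explicit lower bound
\[
\rho_{cf}\;\geq\;\frac{p_c(1-p_f)-\varepsilon}{\sqrt{p_c(1-p_c)p_f(1-p_f)}}.
\]
When $p_c\leq p_f$, this equals the nested value~\eqref{eq:nested_rho} minus the correction
\[
\frac{\varepsilon}{\sqrt{p_c(1-p_c)p_f(1-p_f)}}.
\]
This is the precise meaning of ``bounded below accordingly.'' The closing sentence of the corollary then follows: when $\varepsilon$ is small relative to $\sqrt{p_c(1-p_c)p_f(1-p_f)}$, the measured correlation is close to its nested value. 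That value is near $1$ when $p_c\approx p_f$, and it arises from the label map $g$ alone, with no learned dependence.

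The only delicate point is the last interpretive step: ensuring the correction term really is small. This requires that $\varepsilon$ be small compared with the marginal miscoverage scale, not just small in absolute terms. I will state this explicitly, since with $p_c,p_f\approx\alpha_k$ the denominator is of order $\alpha$, so $\varepsilon$ must satisfy $\varepsilon\ll\alpha$. I would also note how $\varepsilon$ arises in near-deterministic mappings. If label consistency $g(C_f(x))\subseteq C_c(x)$ fails only on a set of $x$ of probability at most $\varepsilon$, then the argument of Theorem~\ref{thm:label_induced_coupling}, applied off that set, gives $\Pr[E_c\setminus E_f]\leq\varepsilon$. This connects the corollary's hypothesis to the coarse-to-fine design pattern it is meant to describe.
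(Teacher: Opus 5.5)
Your proof is correct and follows the route the paper leaves implicit (it gives no separate proof): the disjoint split $p_c = q_{cf} + \Pr[E_c\setminus E_f]$ gives the overlap bound, and substituting it into the Bernoulli correlation identity from Theorem~\ref{thm:label_induced_coupling} gives $\rho_{cf}\geq (p_c(1-p_f)-\varepsilon)/\sqrt{p_c(1-p_c)p_f(1-p_f)}$, which is exactly what ``bounded below accordingly'' means. Your remark that $\varepsilon$ must be small relative to the miscoverage scale (roughly $\varepsilon\ll\alpha$), not merely small in absolute terms, usefully sharpens the corollary's informal last sentence.
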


This is the exact mechanism behind our originally reported
$\bar\rho=0.78$ (Section~\ref{subsec:hunter_empirical}): the ATT\&CK
attribution label used in that measurement was, for every one of 20{,}958
audited samples, a deterministic renaming of the coarse traffic-category
label, so Theorem~\ref{thm:label_induced_coupling}'s degenerate case
applies almost exactly.

\paragraph{Model-specific coupling versus shared difficulty.}
When the second stage is a genuinely independent classification task
(not a relabeling of the first), residual coupling can still arise from
two distinct sources: samples that are simply hard for every model at
every stage, and coupling specific to using the \emph{same} model
instance for both stages.

\begin{proposition}[Coupling provenance decomposition]
\label{prop:coupling_decomposition}
Let $D$ denote latent trajectory difficulty and $Z_{1m}, Z_{2m'}$ the
stage-wise failure indicators from model instances $m, m'$, with
$Z_{1m}\perp Z_{2m'}\mid D$ for $m\neq m'$.  Writing
$\mu_k(D)=\mathbb{E}[Z_{km}\mid D]$ and $c(D)=\mathrm{Cov}(Z_{1m},Z_{2m}\mid D)$
for a shared instance, the law of total covariance gives
\begin{align}
\mathrm{Cov}_{\mathrm{cross}}(Z_1,Z_2) &= \mathrm{Cov}(\mu_1(D),\mu_2(D)),\\
\mathrm{Cov}_{\mathrm{same}}(Z_1,Z_2) &= \mathrm{Cov}(\mu_1(D),\mu_2(D))
+ \mathbb{E}[c(D)],
\end{align}
so $\mathrm{Cov}_{\mathrm{same}}-\mathrm{Cov}_{\mathrm{cross}} =
\mathbb{E}[c(D)]$ is the model-specific increment.  Independently
permuting trajectories across stages destroys the shared-$D$ pairing and
yields zero covariance in expectation.
\end{proposition}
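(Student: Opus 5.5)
The plan is to condition on $D$ and apply the law of total covariance,
\[
\mathrm{Cov}(Z_1,Z_2)=\mathbb{E}\bigl[\mathrm{Cov}(Z_1,Z_2\mid D)\bigr]+\mathrm{Cov}\bigl(\mathbb{E}[Z_1\mid D],\mathbb{E}[Z_2\mid D]\bigr),
\]
separately in each of the three pairing regimes: cross-model, same-model, and permuted. Throughout, I will use the implicit modelling assumption that the stage-wise conditional failure rate given $D$ does not depend on which model instance is used, i.e.\ $\mathbb{E}[Z_{km}\mid D]=\mu_k(D)$ for every $m$. This is what lets $\mu_k$ appear without an $m$ subscript in the statement, and I will state it explicitly.

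\textbf{Cross-model pairing.} Take $m\neq m'$. The hypothesis $Z_{1m}\perp Z_{2m'}\mid D$ makes the conditional covariance term vanish. The second term is $\mathrm{Cov}(\mu_1(D),\mu_2(D))$, which gives the first display.

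\textbf{Same-model pairing.} Here the conditional covariance is by definition $c(D)$. Its expectation adds to the same between-difficulty term, since the conditional means are again $\mu_1(D),\mu_2(D)$. Subtracting the two displays gives the increment $\mathbb{E}[c(D)]$ directly.

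\textbf{Permuted pairing.} Let $\sigma$ be a uniformly random permutation drawn independently of the data, and pair stage~1 of trajectory $t$ with stage~2 of trajectory $\sigma(t)$. For $\sigma(t)\neq t$, the i.i.d.\ trajectories carry independent difficulties $D_t, D_{\sigma(t)}$, so $Z_1^{(t)}\perp Z_2^{(\sigma(t))}$ and the pair covariance is zero. The only nonzero contributions come from fixed points $\sigma(t)=t$. Their expected proportion is $1/n$, so the expected empirical covariance is $O(1/n)$ and vanishes asymptotically. For an exact ``zero in expectation'' claim, I would pair each trajectory with an independent draw, or use a derangement.

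\textbf{Main obstacle.} The delicate step is the permutation claim. The empirical covariance centers at sample means, which introduces a finite-sample bias of order $1/n$, and fixed points add a further $O(1/n)$ term. I would handle both by interpreting the statement at the population level: the permuted pair has product law $P_{Z_1}\otimes P_{Z_2}$, and its population covariance is exactly zero. I would then note the $O(1/n)$ bias of the empirical version separately. The remaining steps are direct substitutions into the total-covariance identity.
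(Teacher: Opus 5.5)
Your derivation of the two displays is the paper's argument: the law of total covariance, applied separately in each pairing regime. You are right to state the $m$-invariance $\mathbb{E}[Z_{km}\mid D]=\mu_k(D)$ explicitly. Without it the cross-model term is $\mathrm{Cov}(\mu_{1m}(D),\mu_{2m'}(D))$, and the difference of the two displays is no longer $\mathbb{E}[c(D)]$ alone. Your population-level handling of the permuted case is also fine for pairs with $\sigma(t)\neq t$: such pairs come from two distinct i.i.d.\ trajectories, so they are independent and have zero covariance.

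What needs correcting is your finite-sample diagnosis of the permutation. For the empirical statistic, which is what the shuffle experiment computes, it is backwards: the two $O(1/n)$ effects you list do not add, they cancel exactly.

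\begin{itemize}
\item Conditional on the data, $\sigma(t)$ is uniform on $\{1,\dots,n\}$ under a uniform random permutation. Hence $\mathbb{E}_\sigma\bigl[\tfrac1n\sum_t Z_1^{(t)}Z_2^{(\sigma(t))}\bigr]=\bar Z_1\bar Z_2$.
\item $\bar Z_2$ and both sample standard deviations are permutation-invariant. So the permuted sample covariance, and the permuted sample correlation where defined, have conditional expectation exactly zero. This needs no appeal to $D$, to i.i.d.\ sampling, or to $n\to\infty$.
\item In your own accounting, with $p_k=\mathbb{E}[Z_k]$: the fixed-point contribution $\mathrm{Cov}_{\mathrm{same}}/n$ to the mean of the cross-product term is exactly offset by the centering term $\mathbb{E}[\bar Z_1\bar Z_2]-p_1p_2=\mathrm{Cov}_{\mathrm{same}}/n$.
\end{itemize}

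Your proposed repair would break this. Under a uniform derangement every pair is independent, so the cross-product term has mean $p_1p_2$. But $\mathbb{E}[\bar Z_1\bar Z_2]$ is still $p_1p_2+\mathrm{Cov}_{\mathrm{same}}/n$, so the $1/n$-normalized statistic acquires a bias of $-\mathrm{Cov}_{\mathrm{same}}/n$.

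So the proposition's final sentence holds exactly for the uniform shuffle the paper actually uses. The only object whose covariance is $\mathrm{Cov}_{\mathrm{same}}/n$ rather than zero is the population law of a single uniformly permuted pair. That law is the mixture $\tfrac1n P_{Z_1Z_2}+(1-\tfrac1n)P_{Z_1}\otimes P_{Z_2}$, not the product law you asserted.
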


\begin{proof}
Law of total covariance; full derivation in the supplementary material.
\end{proof}

Proposition~\ref{prop:coupling_decomposition} gives the same-model /
cross-model / permuted design (Section~\ref{subsec:hunter_empirical}) a
precise reading: $\mathrm{same}\approx\mathrm{cross}>\mathrm{permuted}\approx0$
means the observed dependence is attributable to shared trajectory
difficulty, $\mathrm{Cov}(\mu_1(D),\mu_2(D))$, with no detectable
model-specific increment $\mathbb{E}[c(D)]$---as opposed to
$\mathrm{same}\gg\mathrm{cross}$, which would indicate a genuine
shared-representation effect.

\subsection{When Positive Dependence Helps or Hurts}
\label{subsec:duality}

The preceding results raise a conceptual question: engineering intuition
about redundant systems typically treats correlated component failures
as a reliability liability (it defeats the point of redundancy).  Our
empirical finding is that positively correlated step errors instead
\emph{tighten} the trajectory bound relative to the worst case.  These
two intuitions are not in conflict; they concern different failure
criteria.

\begin{definition}[Association]
A Bernoulli vector $(Z_1,\ldots,Z_K)$ is \emph{associated} if
$\operatorname{Cov}(f(\bm{Z}), g(\bm{Z})) \geq 0$ for every pair of
coordinatewise nondecreasing functions $f, g$ for which the covariance
exists.
\end{definition}

\begin{theorem}[Dependence Duality]
\label{thm:dependence_duality}
If $\bm{Z}$ is associated~\cite{esary1967association}, then
\begin{align}
\Pr\!\left[\bigcup_{k=1}^{K} E_k\right] &\leq 1 - \prod_{k=1}^{K}(1-p_k),
\label{eq:series_asset}\\
\Pr\!\left[\bigcap_{k=1}^{K} E_k\right] &\geq \prod_{k=1}^{K} p_k.
\label{eq:parallel_liability}
\end{align}
Positive association is therefore favorable for a \emph{union-type}
failure criterion, where the trajectory fails if \emph{any} step fails
(our setting), but unfavorable for an \emph{intersection-type} criterion,
where a redundant system fails only if \emph{all} components fail.
\end{theorem}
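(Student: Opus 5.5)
The plan is to derive both inequalities from the standard product inequalities for associated random variables (Esary--Proschan--Walkup~\cite{esary1967association}). Concretely, if $\bm{Z}$ is associated and $f_1,\ldots,f_K$ are nonnegative, coordinatewise nondecreasing functions, then
\[
\mathbb{E}\Bigl[\prod_k f_k(\bm{Z})\Bigr] \geq \prod_k \mathbb{E}[f_k(\bm{Z})].
\]
The same inequality holds when every $f_k$ is nonnegative and nonincreasing. I would prove this lemma by induction on $K$, since each step applies the definition of association exactly once.

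For the inductive step, put $f=f_K$ and $g=\prod_{k<K} f_k$. A product of nonnegative nondecreasing functions is again nonnegative and nondecreasing, so $\operatorname{Cov}(f,g)\geq 0$. This gives $\mathbb{E}[fg]\geq \mathbb{E}[f]\,\mathbb{E}[g]$. The induction hypothesis then bounds $\mathbb{E}[g]$ from below by $\prod_{k<K}\mathbb{E}[f_k]$, and because $\mathbb{E}[f]\geq 0$ we may multiply through to finish.

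For the nonincreasing case, apply the definition to $-f$ and $-g$, which are both nondecreasing. Since $\operatorname{Cov}(-f,-g)=\operatorname{Cov}(f,g)$, the same conclusion holds. All covariances here exist because the functions are bounded on $\{0,1\}^K$.

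For the intersection inequality~\eqref{eq:parallel_liability}, take $f_k(\bm{Z})=Z_k$. These are nonnegative and nondecreasing, and $\prod_k Z_k=\mathbf{1}_{\bigcap_k E_k}$. The lemma then gives $\Pr[\bigcap_k E_k]\geq\prod_k p_k$ directly.

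For the union inequality~\eqref{eq:series_asset}, take $f_k(\bm{Z})=1-Z_k$. These are nonnegative and nonincreasing, and $\prod_k(1-Z_k)=\mathbf{1}_{\bigcap_k E_k^c}$. The lemma gives
\[
\Pr\Bigl[\bigcap_k E_k^c\Bigr]\geq\prod_k(1-p_k),
\]
and taking complements yields the union bound.

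The interpretive sentence needs a reference point. Independence is the case in which both inequalities hold with equality. Association can therefore only lower union-type risk below the independence value and only raise intersection-type risk above it. This is the stated duality, and I would note that it agrees in direction with the multiplicative form~\eqref{eq:multiplicative_composition} of Theorem~\ref{thm:general_composition}.

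The only real obstacle is bookkeeping in the induction. Two points need checking: that the partial product stays nondecreasing (or nonincreasing) with a consistent sign at every step, and that nonnegativity is what licenses chaining the one-step covariance inequality. I would state the lemma with nonnegativity as an explicit hypothesis and handle the nonincreasing case by the sign-flip rather than by a separate induction. No measure-theoretic subtleties arise because $\bm{Z}$ takes finitely many values.
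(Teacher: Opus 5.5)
Your proposal is correct and takes the same route as the paper: an induction on the product inequality for associated variables, applied to the nondecreasing indicators $Z_k$ for the intersection bound and to the nonincreasing complements $1-Z_k$ (followed by taking complements) for the union bound. You also spell out the bookkeeping that the paper defers to its supplementary material, namely nonnegativity to chain the covariance step and the sign flip $\operatorname{Cov}(-f,-g)=\operatorname{Cov}(f,g)$ for the nonincreasing case.
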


\begin{proof}
Association applied to the nondecreasing indicators
$\mathbf{1}_{E_k}$ gives~\eqref{eq:parallel_liability} by induction;
applying it to the nonincreasing $1-\mathbf{1}_{E_k}$ and taking
complements gives~\eqref{eq:series_asset} (full derivation in the
supplementary material).
\end{proof}

\begin{remark}[Pairwise positivity alone is not enough]
Theorem~\ref{thm:dependence_duality} requires association, a
higher-order concordance condition---not merely $\rho_{ij} \geq 0$ for
every pair.  Theorem~\ref{thm:pairwise_nonidentifiability} already shows
that all-positive pairwise correlations are compatible with a TMR
\emph{above} the independence baseline ($0.96 > 0.875$ in that
construction), so that law cannot be associated despite satisfying
every pairwise positivity check.  Positive dependence helps the
trajectory metric specifically when it concentrates a bounded number of
step failures onto the same hard trajectories, rather than spreading
failures independently across otherwise-easy trajectories; it does not,
by itself, change the expected number of failed steps
$\mathbb{E}[\sum_k Z_k] = \sum_k p_k$.
\end{remark}

\begin{example}[The same dependence structure as asset and liability]
\label{ex:asset_liability}
Take $p_1 = p_2 = 0.05$ and $\rho_{12} = 0.298$, our measured mean at
$\alpha = 0.10$.  Then
$q_{12} = p_1 p_2 + \rho_{12}\sqrt{p_1(1-p_1)p_2(1-p_2)}
\approx 0.0025 + 0.298(0.0475) \approx 0.0167$.  Under independence
($\rho_{12}=0$), $\Pr[E_1 \cup E_2] = 0.0975$ and
$\Pr[E_1 \cap E_2] = 0.0025$.  Under this measured correlation,
$\Pr[E_1 \cup E_2] = p_1+p_2-q_{12} \approx 0.0833$ (15\% lower than
independence) while $\Pr[E_1 \cap E_2] = q_{12} \approx 0.0167$
($6.7\times$ higher than independence).  The identical dependence
structure that tightens the union-type trajectory bound by 15\% would,
for the same two steps used instead as a 2-of-2 redundant AND-gate
(both must fail for the gate to fail), raise the joint-failure
probability by the same factor.  We do not evaluate an actual redundant
deployment of this pipeline; this is a numerical illustration of
Theorem~\ref{thm:dependence_duality} using our own measured $\rho_{12}$,
not a separate experiment.
\end{example}

\subsection{$\alpha$-Budget Allocation}

Given the trajectory bound~\eqref{eq:ie_alpha}, a natural question is
how to allocate the total budget $\alpha$ across steps.

\begin{proposition}[Equal Allocation Maximizes the Certified Coverage Lower Bound]
\label{prop:equal}
Suppose the chain satisfies the survival-conditional guarantee
$\Pr[E_k \mid \bigcap_{j<k} E_j^c] \leq \alpha_k$ for all $k$ (a
strictly stronger assumption than the marginal per-step guarantee
$p_k \leq \alpha_k$ that ordinary split CP provides; see
Remark~\ref{rem:conditional}), and that the total nominal risk budget
is fully committed, $\sum_k \alpha_k = \alpha$.  Then
$\text{TC} \geq \prod_k(1-\alpha_k)$ by
Theorem~\ref{thm:general_composition}, and among all allocations
satisfying $\sum_k \alpha_k = \alpha$, equal allocation
$\alpha_k = \alpha/K$ maximizes this lower bound.
\end{proposition}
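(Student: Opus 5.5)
The argument has two parts.

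First, the lower bound $\text{TC}\geq\prod_k(1-\alpha_k)$ is exactly the second statement of Theorem~\ref{thm:general_composition}, since the survival-conditional guarantee~\eqref{eq:survival_conditional} is assumed. So the real content is the optimization claim:
\begin{equation*}
\max_{\alpha_1,\ldots,\alpha_K}\ \prod_{k=1}^K(1-\alpha_k)
\quad\text{subject to}\quad
\sum_k\alpha_k=\alpha,\ \ 0\leq\alpha_k\leq 1 .
\end{equation*}

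\textbf{Reduction to a sum.} Because $\alpha<1$ in the regime of interest, every feasible $\alpha_k<1$, so each factor $1-\alpha_k$ is strictly positive. We can therefore take logarithms and maximize $F(\alpha_1,\ldots,\alpha_K)=\sum_k\log(1-\alpha_k)$ instead. The function $\phi(a)=\log(1-a)$ is strictly concave on $[0,1)$, because $\phi''(a)=-1/(1-a)^2<0$.

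\textbf{Jensen's inequality.} Applying Jensen to $\phi$ with uniform weights $1/K$ gives
\begin{equation*}
\frac1K\sum_k\log(1-\alpha_k)\;\leq\;\log\!\Bigl(1-\frac1K\sum_k\alpha_k\Bigr)=\log(1-\alpha/K).
\end{equation*}
Exponentiating, $\prod_k(1-\alpha_k)\leq(1-\alpha/K)^K$. The right-hand side is exactly the value the product takes at $\alpha_k=\alpha/K$, so equal allocation attains the bound. Strict concavity gives equality if and only if all $\alpha_k$ are equal, so equal allocation is the unique maximizer.

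An equivalent route is AM--GM applied to the nonnegative numbers $1-\alpha_k$, whose sum is fixed at $K-\alpha$. This avoids logarithms entirely and also covers a boundary case with $\alpha_k=1$, where the product is simply zero.

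\textbf{Main obstacle: interpretation.} The mathematics is routine; the care needed is in stating precisely what is being maximized. What the proposition establishes is that equal allocation maximizes the \emph{certified lower bound}, not the true TC. I will state this explicitly. The survival-conditional assumption is essential: under marginal CP alone, Theorem~\ref{thm:general_composition} gives only the additive bound $\sum_k\alpha_k$, which equals $\alpha$ for every feasible allocation. Under marginal-only guarantees, then, no allocation is preferred, and the product form is precisely what makes equal allocation special.
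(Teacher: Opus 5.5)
Your proof is correct and follows the paper's argument: you invoke Theorem~\ref{thm:general_composition} for $\text{TC}\geq\prod_k(1-\alpha_k)$, apply Jensen's inequality with uniform weights to the concave function $\log(1-x)$, and exponentiate to obtain $\prod_k(1-\alpha_k)\leq(1-\alpha/K)^K$, with equality iff all $\alpha_k$ are equal. Your AM--GM alternative and your caveat that only the certified bound, not the true TC, is being maximized match the paper's Remark~\ref{rem:conditional}.
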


\begin{proof}
$\log(1-x)$ is concave on $[0,1)$, so by Jensen's inequality
$\frac{1}{K}\sum_k \log(1-\alpha_k) \leq \log\!\left(1-\frac{1}{K}\sum_k\alpha_k\right)
= \log(1-\alpha/K)$, with equality iff all $\alpha_k$ are equal.
Exponentiating, $\prod_k(1-\alpha_k) \leq (1-\alpha/K)^K$, so the
equal-split allocation attains the largest achievable value of the
lower bound $\prod_k(1-\alpha_k)$ among allocations summing to $\alpha$.
\end{proof}

\begin{remark}[Scope of this optimality claim]
\label{rem:conditional}
Two caveats bound what Proposition~\ref{prop:equal} does and does not
say. First, under the marginal-only Bonferroni guarantee
(Theorem~\ref{thm:bonf}) alone, every allocation with the same total
$\sum_k \alpha_k = \alpha$ yields the identical upper bound
$\text{TMR} \leq \alpha$: that bound cannot distinguish equal from
unequal splits, so Proposition~\ref{prop:equal} requires the strictly
stronger survival-conditional guarantee~\eqref{eq:survival_conditional},
which ordinary split CP's marginal coverage does not automatically
provide and which we do not separately verify in our experiments.
Second, Proposition~\ref{prop:equal} optimizes a \emph{certified lower
bound on TC}, not necessarily the true achieved trajectory coverage or
the operational escalation rate; if per-step difficulties are strongly
asymmetric, an allocation that shifts budget toward the harder step
could achieve better \emph{actual} performance while making this
particular certified bound looser.  Section~\ref{subsec:alpha_alloc}
reports which allocation performs best \emph{empirically}, which is a
separate question from which allocation this proposition favors
in the worst case.
\end{remark}

Section~\ref{subsec:alpha_alloc} reports, purely empirically, that
equal allocation achieves the highest observed TC and lowest observed
TMR among the three tested strategies in our two-stage pipeline; we do
not claim this empirical result as a direct consequence of
Proposition~\ref{prop:equal}, since verifying the survival-conditional
guarantee itself is left to future work.

\section{Experimental Setup}
\label{sec:setup}

\subsection{Two-Step Security Agent Pipeline}

We instantiate the framework with a $K=2$ pipeline:

\begin{itemize}[leftmargin=*]
\item \textbf{Step~1: Traffic Classification.}
The LLM classifies each network flow $x$ into one of $L_1 = 5$
categories (Normal, DoS, Probe, CredentialAccess, Exploitation;
Table~\ref{tab:datasets}).  Nonconformity scores are computed as
$s_1(x,y) = 1 - p(y \mid x)$, where $p(y \mid x)$ is the LLM's softmax
probability derived from next-token log-probabilities.

\item \textbf{Step~2: Attack-Variant Attribution.}
For traffic classified as an attack category $c$ in Step~1, the same
LLM classifies it into one of $L_2=4$ attack variants specific to
$c$ (e.g., for DoS: Hulk, GoldenEye, Slowloris, SlowHTTPTest---the
dataset's native fine-grained attack-tool labels), via a dedicated
LoRA adapter fine-tuned per (model, category) pair on these labels.
These variants are not themselves cataloged MITRE ATT\&CK sub-technique
IDs (T1498 Network DoS has two: T1498.001 Direct Network Flood and
T1498.002 Reflection Amplification~\cite{mitre_t1498}); we use them as
a genuinely fine-grained attribution task nested under the coarse
ATT\&CK technique, not as a claim of one-to-one MITRE sub-technique
mapping.  This replaces an earlier design in which Step~2 used a fixed
one-to-one mapping from coarse category to ATT\&CK technique; we
verified that design made $E_1$ and $E_2$ deterministically
nested ($q_{12}=p_c$ for every one of 20{,}958 audited samples,
Theorem~\ref{thm:label_induced_coupling}), so its near-1 correlation
reflected the label mapping, not the agent's learned behavior.
Nonconformity scores $s_2(x,y)$ are computed identically from the LLM's
attack-variant log-probabilities.
\end{itemize}

\textbf{Routing protocol.}  A deployed agent must decide, upon seeing
Step~1's output, which Step~2 adapter(s) to invoke.  We use
\emph{set-routing}: Step~2 is invoked once per category in Step~1's
conformal prediction set $C_1(x)$ (every category whose nonconformity
score clears $q_1$, not only the top-1 prediction), and the trajectory
is scored as covered iff the true category $c^\star$ is itself in
$C_1(x)$ \emph{and} the corresponding adapter's prediction set
$C_2(x)$ covers the true attack variant---exactly $E_1^c \cap E_2^c$ as
defined above.  This is the routing rule a conformal deployment would
actually use, since $C_1(x)$, not the single top-1 label, is the
object with a coverage guarantee; a trajectory where $c^\star\notin
C_1(x)$ fails regardless of what Step~2 would have said, which is
already captured by $E_1$.  Section~\ref{subsec:tc} additionally
reports a \emph{top-1-routing} variant, which invokes only the
single argmax category's adapter and fails immediately on an argmax
error, as a descriptive comparison; top-1 routing has no
distribution-free coverage guarantee of its own and is not the
protocol our theorems certify.

Both steps share the same input $x$ and are served by the same LLM,
but operate on different label spaces ($L_1$ vs.\ $L_2$) and different
prompt templates.  This shared input and shared model are two
plausible, observationally distinct sources of inter-step correlation:
the input alone could make some samples harder for any model at both
tasks (a shared-difficulty account), or the shared model could
additionally couple the two steps' uncertainty through shared internal
representations (a representation-sharing account).
Section~\ref{subsec:hunter_empirical} (E8) distinguishes these by
re-scoring Step~1 and Step~2 with different models on the same
underlying samples, and by an additional sample-permuted control; we
describe that protocol in full there, since it depends on materials
(the six models' cross-paired outputs) not otherwise used in this
paper.

\subsection{Datasets}
\label{subsec:datasets}

\begin{table}[t]
\centering
\caption{Datasets used in the trajectory coverage evaluation.  Step~1
uses the unified 5-category taxonomy (Normal, DoS, Probe,
CredentialAccess, Exploitation); Step~2 uses a genuinely independent
attack-variant classifier restricted to the listed category, sized to
satisfy the sample-complexity requirement of
Theorem~\ref{thm:positive_gain_probability}.}
\label{tab:datasets}
\begin{tabular}{lccccc}
\toprule
\textbf{Dataset} & \textbf{Year} & \textbf{Step-2 category} & $\bm{L_2}$ & $\bm{N_{\text{cal}}}$ & $\bm{N_{\text{test}}}$ \\
\midrule
CIC-IDS-2018~\cite{sharafaldin2018cicids} & 2018 & DoS & 4 & 500 & 3{,}500 \\
RT-IoT2022~\cite{sharmila2023rtiot} & 2022 & Probe & 4 & 150 & 1{,}376 \\
\bottomrule
\end{tabular}
\end{table}

We evaluate on two network intrusion datasets (Table~\ref{tab:datasets}).
HIKARI-2021 is excluded: its dedicated Step-2 attack-variant classifier
did not discriminate above chance for any of the 6 tested LLMs despite a
large raw feature gap between candidate labels, a training-data issue
we could not resolve within this paper's scope (Section~\ref{subsec:limitations}).
For each remaining (dataset, category) pair, Step~1 and Step~2 are
evaluated jointly on the full available sample pool for that category
in the dataset's test split (16{,}000 for CIC-IDS-2018/DoS, of which we
draw $4{,}000$; all $1{,}526$ available for RT-IoT2022/Probe), matched
by sample index between the two steps.  We draw $N_{\text{cal}}$
calibration samples uniformly at random (without replacement) from this
pool, using 5 random seeds $\{42, 123, 456, 789, 2024\}$; the remaining
samples form the test set, whose size $N_{\text{test}}$ is the audit
sample size that enters every Clopper--Pearson construction in
Section~\ref{subsec:hunter_empirical} (calibration samples select the
conformal threshold and are not part of the audit).  $N_{\text{test}}$
is chosen to clear the worst-case per-model requirement from
Theorem~\ref{thm:positive_gain_probability} for at least one dataset
(Section~\ref{subsec:hunter_empirical} reports the resulting
per-configuration certifiability, including the 5 CIC-IDS-2018/DoS
configurations where it is not fully cleared).

\subsection{Models}

We evaluate six open-source LLMs spanning three architecture families
and three parameter scales:

\begin{itemize}[leftmargin=*]
\item \textbf{Qwen-3} family: 8B, 14B, 32B~\cite{qwen2024qwen2}
\item \textbf{Gemma-2} 9B~\cite{Riviere2024Gemma2I}
\item \textbf{LLaMA-3} 8B~\cite{meta2024llama3}
\item \textbf{Mistral} 7B~\cite{mistral2024mistral}
\end{itemize}

All models are deployed via vLLM~\cite{kwon2023vllm} with greedy
decoding.  Log-probabilities for all candidate labels are extracted
per step to compute nonconformity scores.

\subsection{Fine-Tuning and Scoring Protocol}
\label{subsec:finetuning}

Both steps use LoRA adapters~\cite{hu2022lora}
($r{=}16$, $\alpha{=}32$, dropout $0.05$, target modules
\texttt{\{q,k,v,o\}\_proj}), trained for 5 epochs with AdamW at
learning rate $10^{-4}$, effective batch 16.  Step~1 adapters are
trained per (model, dataset) pair on the coarse 5-category label;
Step~2 adapters are trained per (model, category) pair on the
category's attack-variant labels from the base model, using the same
optimizer settings.  Splits use group-aware stratified splitting
(\texttt{StratifiedGroupKFold}, grouped by exact serialized-prompt
text) to prevent near-duplicate-prompt leakage, verified reproducible
given a fixed seed.  Each candidate label is scored by teacher-forcing
total negative log-likelihood: prompt and label are tokenized
separately and concatenated directly (never re-tokenizing the joined
string, which can merge tokens across the boundary and silently
corrupt the score); the per-token mean loss is multiplied by label
token count for a length-comparable total NLL.  Gemma-2~9B requires
chat-template-wrapped prompts for both training and scoring; the other
five models use raw prompt text.  We use the final-epoch checkpoint
throughout; all 12 Step-2 adapters converged to $\geq99.9\%$ argmax
accuracy on a held-out balanced evaluation set before deployment on
the audit pool (Table~\ref{tab:datasets}).

\subsection{Metrics}

\begin{itemize}[leftmargin=*]
\item \textbf{Trajectory Coverage (TC)}: fraction of test samples
where both steps simultaneously cover the true label
(Definition~\ref{def:tc}).

\item \textbf{Trajectory Miscoverage Rate (TMR)}: $1 - \text{TC}$;
probability that at least one step fails to cover the true label
(Definition~\ref{def:tar}).  TMR is a statistical risk quantity
requiring ground-truth labels, not a directly observable deployment
metric (see the discussion following Definition~\ref{def:tar}).

\item \textbf{Per-Step Miscoverage Rate (SMR$_k$)}: fraction where
step $k$ alone fails to cover.

\item \textbf{Inter-Step Correlation ($\hat{\rho}$)}: Pearson
correlation of binary miscoverage indicators between Step~1 and Step~2,
computed on held-out evaluation splits after conformal calibration.
In deployment, the same quantity should be estimated on a validation
stream or replaced by a conservative lower confidence bound.

\item \textbf{Bonferroni Slack}: $\sum_k \alpha_k - \text{TMR}_{\text{empirical}}$;
measures conservatism of the Bonferroni bound.

\item \textbf{Theorem~\ref{thm:ie} Slack}: bound from~\eqref{eq:ie_alpha}
minus $\text{TMR}_{\text{empirical}}$; measures tightness of the
correlated bound.
\end{itemize}

\subsection{Experimental Configurations}

We evaluate 36 configurations: 2 datasets $\times$ 6 models $\times$
3 $\alpha$ levels ($\alpha \in \{0.05, 0.10, 0.20\}$), each with 5
random calibration/test splits.  Equal $\alpha$-allocation
($\alpha_k = \alpha/2$) is the default; non-equal allocations are
compared in the allocation experiment (Section~\ref{subsec:alpha_alloc}).

\textbf{Trajectory pairing.}  Step~2 is evaluated only on samples whose
Step~1 ground-truth category matches the Step~2 task's target category
(DoS for CIC-IDS-2018, Probe for RT-IoT2022).  Step~1 and Step~2 outputs
are joined by an explicit per-sample index stored in both output files
(not by generation order), and we verified for all 12 dataset--model
pairs that the resulting joined index set is identical between the
Step-1-restricted pool and the Step-2 output pool (zero missing or
extra indices on either side).

\section{Results and Analysis}
\label{sec:results}

\subsection{Trajectory Coverage Verification (E1)}
\label{subsec:tc}

Table~\ref{tab:tc_main} summarizes the trajectory coverage results
across all 36 configurations.

\begin{table}[t]
\centering
\caption{Trajectory coverage summary across 12 dataset--model
configurations per $\alpha$ level ($K=2$, equal allocation,
5-seed average $\pm$ std across configurations).}
\label{tab:tc_main}
\begin{tabular}{cccccc}
\toprule
$\bm{\alpha}$ & \textbf{Target TC} & $\overline{\textbf{TC}}$ &
$\textbf{TC}_{\min}$ & \textbf{$\overline{\textbf{TMR}}$} & $\bm{n}$ \\
\midrule
0.05 & $\geq 0.950$ & $0.970 \pm 0.016$ & 0.953 & $0.030 \pm 0.016$ & 12 \\
0.10 & $\geq 0.900$ & $0.927 \pm 0.024$ & 0.894 & $0.073 \pm 0.024$ & 12 \\
0.20 & $\geq 0.800$ & $0.849 \pm 0.031$ & 0.811 & $0.151 \pm 0.031$ & 12 \\
\bottomrule
\end{tabular}
\end{table}

At $\alpha=0.05$ and $\alpha=0.20$, all 12 configurations satisfy
$\text{TC} \geq 1-\alpha$.  At the primary operating point
$\alpha=0.10$, mean trajectory coverage is $0.927 \pm 0.024$ (target
$\geq 0.900$), but one configuration (CIC-IDS-2018/LLaMA-3~8B,
$\text{TC}=0.894$) falls marginally below the nominal target.  This is
within ordinary finite-sample calibration noise for a single $\alpha$
check on $n_{\text{cal}}=500$ (Section~\ref{subsec:limitations}
discusses this further), not a systematic violation: the same
configuration comfortably satisfies coverage at both other $\alpha$
levels, and Theorem~\ref{thm:bonf}'s marginal guarantee is unaffected,
since it is a statement about $R_{\mathrm{marg}}$ averaged over
calibration draws, not about every individual realized split
(Definition~\ref{def:two_level_risk}).

\textbf{Set-routing vs.\ top-1-routing.}
Table~\ref{tab:routing} compares the certified set-routing protocol
(Section~\ref{sec:setup}) against the descriptive top-1-routing
variant at $\alpha=0.10$.  Top-1-routing achieves higher TC in all 12
configurations (mean $0.943$ vs.\ $0.927$)---an argmax-correct
prediction is more common than a prediction whose \emph{probability}
clears the conformal threshold $q_1$, since $q_1$ is calibrated to a
$1-\alpha_1$ marginal target rather than to argmax correctness.  This
is expected, not a reason to prefer top-1-routing operationally: only
set-routing inherits Theorem~\ref{thm:bonf}'s distribution-free
guarantee, because only $C_1(x)$, not the bare top-1 label, is
constructed to satisfy $\Pr[y\notin C_1(x)]\leq\alpha_1$.  Top-1-routing's
lower miscoverage here is an empirical property of these six models on
this task, with no finite-sample certificate behind it.

\begin{table}[t]
\centering
\caption{Trajectory coverage under set-routing (certified,
Section~\ref{sec:setup}) vs.\ top-1-routing (descriptive only)
at $\alpha=0.10$.}
\label{tab:routing}
\begin{tabular}{lcc}
\toprule
\textbf{Model / Dataset} & \textbf{Set-routed TC} & \textbf{Top-1-routed TC} \\
\midrule
CIC/DoS, Gemma-2 9B  & 0.901 & 0.923 \\
CIC/DoS, LLaMA-3 8B  & 0.894 & 0.915 \\
CIC/DoS, Mistral 7B  & 0.902 & 0.918 \\
CIC/DoS, Qwen-3 8B   & 0.913 & 0.931 \\
CIC/DoS, Qwen-3 14B  & 0.900 & 0.914 \\
CIC/DoS, Qwen-3 32B  & 0.906 & 0.925 \\
RT-IoT/Probe, Gemma-2 9B  & 0.954 & 0.964 \\
RT-IoT/Probe, LLaMA-3 8B  & 0.962 & 0.971 \\
RT-IoT/Probe, Mistral 7B  & 0.952 & 0.962 \\
RT-IoT/Probe, Qwen-3 8B   & 0.952 & 0.965 \\
RT-IoT/Probe, Qwen-3 14B  & 0.960 & 0.967 \\
RT-IoT/Probe, Qwen-3 32B  & 0.934 & 0.967 \\
\midrule
Mean & 0.927 & 0.943 \\
\bottomrule
\end{tabular}
\end{table}

\textbf{Prediction-set size: the operational cost of the certificate.}
Coverage alone does not show operational usefulness---always outputting
the full label set trivially achieves perfect coverage at zero
informativeness.  At $\alpha=0.10$, mean $|C_1(x)|$ is $0.964$ (of 5
categories) and mean $|C_2(x)|$ is $0.957$ (of 4 candidates), averaged
across all 12 configurations; both equal their singleton rate exactly,
since size $\geq2$ occurs in under $0.1\%$ of samples.  In practice
$C_1(x)$/$C_2(x)$ are almost always either a single confident label
(covered) or empty (miscovered---exactly SMR$_1$/SMR$_2$ from
Section~\ref{subsec:tc}), essentially never an ambiguous multi-label
set requiring analyst escalation: the certificate does not purchase
coverage by inflating set size, with no intermediate
``narrowed-but-still-ambiguous'' outcome to handle.

\subsection{Bonferroni vs.\ the Plug-in IE Estimate (E2)}
\label{subsec:bounds}

\begin{table}[t]
\centering
\caption{Bonferroni bound vs.\ plug-in IE estimate at $\alpha = 0.10$
(equal allocation, all 12 dataset--model configurations; see
Section~\ref{subsec:correlation}).  The plug-in estimate is an
uncertified point estimate, not a proven bound (Remark~\ref{rem:tight});
Section~\ref{subsec:hunter_empirical} reports a certified alternative.}
\label{tab:bounds}
\begin{tabular}{lcc}
\toprule
\textbf{Quantity} & \textbf{Value} & \textbf{vs.\ Bonferroni} \\
\midrule
Bonferroni upper bound & 0.100 & --- \\
Plug-in IE estimate & 0.086 & $-$14\% \\
Empirical TMR & 0.073 & --- \\
Plug-in IE residual & 0.013 & --- \\
Mean $\hat{\rho}_{12}$ & 0.298 & --- \\
\bottomrule
\end{tabular}
\end{table}

Table~\ref{tab:bounds} presents the core tightness result.  At $\alpha = 0.10$,
the Bonferroni bound guarantees $\text{TMR} \leq 0.100$, while
the plug-in inclusion--exclusion estimate is 0.086---a 14\%
reduction in the reported operational risk estimate, computed
per-configuration using each configuration's estimated
$\hat{\rho}_{12}$ and then averaged.  This is substantially smaller than
the 37\% we originally reported under the deterministic ATT\&CK mapping,
because the mean correlation itself is smaller (0.30 vs.\ 0.78) once
Step~2 is a genuinely independent task: the two datasets pull in
different directions, with CIC-IDS-2018/DoS contributing $\hat\rho
\approx 0$ (essentially no exploitable coupling) and RT-IoT2022/Probe
contributing $\hat\rho \in [0.15, 0.78]$.

The estimate is moderately tight: the mean residual
($\widehat{B}_{\mathrm{IE}}$ minus empirical TMR) is 0.013, indicating
about one percentage point of average overestimation, consistent across
both the original and corrected pipelines.  The per-configuration
breakdown (Table~\ref{tab:per_config}) and confidence intervals
(Table~\ref{tab:correlation}) provide the evidence; we discuss
deployment implications in Section~\ref{subsec:limitations}.

\begin{table}[t]
\centering
\caption{Bound comparison across $\alpha$ levels (all 12
configurations).  Both the estimated correlation and the relative
improvement vary with $\alpha$ and are smaller and noisier than under
the deterministic Step-2 mapping we originally used.}
\label{tab:bounds_all}
\begin{tabular}{ccccccc}
\toprule
$\bm{\alpha}$ & $\bm{n}$ & \textbf{Bonf.} & \textbf{Thm.~\ref{thm:ie}} &
\textbf{Emp.} & $\hat{\bm{\rho}}$ & \textbf{Impr.} \\
\midrule
0.05 & 12 & 0.050 & 0.045 & 0.030 & 0.203 & $-$10\% \\
0.10 & 12 & 0.100 & 0.086 & 0.073 & 0.298 & $-$14\% \\
0.20 & 12 & 0.200 & 0.178 & 0.151 & 0.240 & $-$11\% \\
\bottomrule
\end{tabular}
\end{table}

The gap between Bonferroni and the Theorem~\ref{thm:ie} plug-in
estimate is present but modest at all three $\alpha$ levels
(Table~\ref{tab:bounds_all}); empirical TMR tracks the plug-in estimate
closely, though point-estimate correlation alone provides no
distribution-free guarantee.

Table~\ref{tab:bounds_all} extends the comparison across all three
$\alpha$ levels.  Unlike under the deterministic Step-2 mapping, the
estimated correlation $\hat{\rho}$ does not vary monotonically with
$\alpha$ (0.203, 0.298, 0.240), and all values are far below the
$\bar\rho\approx0.78$ we originally reported.  This is consistent with
$\hat\rho$ now measuring a genuine, task-dependent statistical property
rather than a near-deterministic label artifact
(Theorem~\ref{thm:label_induced_coupling}).  The relative improvement of
Theorem~\ref{thm:ie} over Bonferroni is real at every level (10--14\%)
but an order of magnitude smaller than the improvement we originally
reported, and the gain is concentrated in RT-IoT2022/Probe rather than
shared evenly across both datasets (Table~\ref{tab:correlation}).

\subsection{Inter-Step Correlation Analysis (E4)}
\label{subsec:correlation}

\begin{table*}[t]
\centering
\caption{Inter-step miscoverage-indicator correlation $\hat{\rho}_{12}$
by dataset and model, using the dedicated attack-variant Step-2 classifier
(computed at $\alpha = 0.10$, 5-seed average).  95\% CIs are $t$-intervals
over 5 seeds.}
\label{tab:correlation}
\begin{tabular}{lcccc}
\toprule
\textbf{Model} & \textbf{CIC/DoS} & \textbf{95\% CI} & \textbf{RT-IoT/Probe} & \textbf{95\% CI} \\
\midrule
Gemma-2 9B  & $-$0.027 & $[-0.04,-0.02]$ & 0.645 & $[0.57, 0.72]$ \\
LLaMA-3 8B  & 0.039 & $[-0.03, 0.11]$ & 0.734 & $[0.71, 0.76]$ \\
Mistral 7B  & $-$0.042 & $[-0.07,-0.02]$ & 0.741 & $[0.60, 0.88]$ \\
Qwen-3 8B   & 0.049 & $[0.03, 0.07]$ & 0.451 & $[0.36, 0.54]$ \\
Qwen-3 14B  & 0.029 & $[0.00, 0.05]$ & 0.780 & $[0.69, 0.87]$ \\
Qwen-3 32B  & 0.028 & $[0.00, 0.05]$ & 0.145 & $[0.12, 0.17]$ \\
\midrule
$\overline{\rho}_{\text{dataset}}$ & 0.013 & & 0.583 & \\
\bottomrule
\end{tabular}
\end{table*}

Table~\ref{tab:correlation} reports the inter-step correlation
$\hat{\rho}_{12}$ across all dataset--model pairs, using the dedicated
attack-variant Step-2 classifier.  Two patterns emerge, both markedly
different from what we originally reported under the deterministic
ATT\&CK mapping.

\textbf{(1) Correlation is task-dependent, not universally strong.}
On CIC-IDS-2018/DoS, $\hat{\rho}_{12}$ is close to zero for every model
($-0.042$ to $0.049$), with three of six models producing a small
negative point estimate.  On RT-IoT2022/Probe, $\hat{\rho}_{12}$ is
moderate to strong (0.145 to 0.780).  The grand mean over all 12
configurations is $\bar{\rho}=0.298$---far below the $\bar\rho=0.779$
we originally reported, because that number was, we now know, measuring
a near-deterministic label artifact (Theorem~\ref{thm:label_induced_coupling})
rather than a genuine statistical property of the pipeline.

\textbf{(2) Model-dependent variation within RT-IoT2022/Probe.}
Qwen-3~32B shows the lowest correlation (0.145) and Qwen-3~14B the
highest (0.780) on this task, a $5\times$ spread across models.
Section~\ref{subsec:hunter_empirical} tests directly whether this
reflects shared-difficulty variation across models (which alone can
produce large between-model spread, since each model's own accuracy
profile shifts $\mu_1(D),\mu_2(D)$) or an additional same-model
representation-sharing increment, via same-model/cross-model/permuted
pairing.  On CIC-IDS-2018/DoS, all six
models cluster near zero regardless of architecture or scale,
suggesting the near-zero correlation there is a property of the task
(attack variants distinguished almost entirely by features Step~1
already resolves) rather than of any particular model.

\subsection{A Certified Union Bound via Seed-wise Exact Intervals, and the Mechanism Behind $\hat{\rho}$ (E8)}
\label{subsec:hunter_empirical}

The plug-in estimate~\eqref{eq:ie_alpha} in Table~\ref{tab:bounds} uses
Pearson correlation, which provides no finite-sample guarantee.  A
first attempt at a certified alternative might pool all 5 seeds' test
observations and construct a single confidence interval on the union
event $Z=\mathbf{1}(E_1\cup E_2)$ or on the joint overlap
$q_{12}=\Pr[E_1\cap E_2]$.  This does not work: our 5 seeds resample
$n_{\text{cal}}=500$ (CIC-IDS-2018/DoS) or $150$ (RT-IoT2022/Probe)
calibration points without replacement from the same finite pool per
configuration (Table~\ref{tab:datasets}), so the five test sets overlap
substantially, and pooling them into one interval would silently treat
thousands of repeated measurements of the same underlying flows as
independent trials, understating the true estimation uncertainty.

(We resample $n_{\text{cal}}=500$/$150$, Table~\ref{tab:datasets}, from
the same fixed pool per configuration, so this overlap concern applies
identically at the boosted audit scale used throughout this section.)
We instead condition on each seed's own realized calibration set.
Given seed $s$'s calibration data $\mathcal{C}_s$, the conformal
threshold is a fixed, deterministic function of $\mathcal{C}_s$, so
seed $s$'s $n_s$ test-set evaluations are i.i.d.\ Bernoulli draws with
parameter $r_s = \Pr[E_1 \cup E_2 \mid \mathcal{C}_s]$---this holds
regardless of how much seed $s$'s test set overlaps with any other
seed's, since we never pool across seeds.  We adopt the standard
superpopulation evaluation model throughout this section: benchmark
flows are treated as i.i.d.\ draws from an underlying deployment
distribution, and the random calibration/test split only partitions a
finite sample from that distribution rather than sampling without
replacement from a population whose size bounds the inference (which
would instead call for a hypergeometric, not Bernoulli, count). We construct an exact
one-sided Clopper--Pearson upper bound $U_{\cup,s}$ for $r_s$ directly
on the union event at level $\delta_s = 0.05/5 = 0.01$ within each
seed---this is the direct-audit certificate of
Remark~\ref{rem:direct_union_audit}, not the decomposed
$U_1+U_2-L_{12}$ of Corollary~\ref{cor:k2_certificate}---then combine
the 5 per-seed guarantees with a union (Boole's inequality) bound,
which requires no independence assumption between seeds:
\begin{equation}
\Pr\!\left[r_s \leq U_{\cup,s} \text{ for all } s=1,\ldots,5\right]
\geq 1 - \textstyle\sum_s \delta_s = 0.95.
\end{equation}
On this event, the average realized risk
$\bar{r} = \frac{1}{5}\sum_s r_s$ satisfies
$\bar{r} \leq \frac{1}{5}\sum_s U_{\cup,s} \triangleq \bar{U}_\cup$.  We
construct $U_{1,s}$, $U_{2,s}$ (per-step CP upper bounds) and $L_{12,s}$
(CP lower bound on $q_{12}$) by the identical seed-wise-then-union
construction, giving $\bar{B}_{\mathrm{marg}}=\overline{U_1}+\overline{U_2}$,
$\bar{B}_{\mathrm{dep}}=\overline{U_1}+\overline{U_2}-\overline{L_{12}}$,
and $\bar{G}_{\mathrm{cert}}=\overline{L_{12}}$, the three quantities of
Corollary~\ref{cor:k2_certificate}.

\textbf{The audit scale matters, exactly as
Theorem~\ref{thm:positive_gain_probability} predicts.}  At an original,
smaller scale (pool $n=497$/$139$, test-audit $397$/$109$),
$\bar{U}_\cup=0.124$ averaged over the 12 configurations---\emph{worse}
than Bonferroni's 0.10, with two CIC-IDS-2018 configurations observing
zero joint failures ($\bar{G}_{\mathrm{cert}}=0$).  The required audit
size implied by each configuration's own $q_{12}$ at $\beta=0.05$
ranges from 102 to 2{,}966, several exceeding that pool.  We therefore
scaled the joint audit sample to Table~\ref{tab:datasets}'s $n=4{,}000$/$1{,}526$
(test-audit $3{,}500$/$1{,}376$ after calibration---the $n$ entering
every CP construction below), clearing the worst-case requirement for
one dataset and testing the theory's prediction on the other.

At this scale, $\bar{U}_\cup = 0.086$ ($13.7\%$ tighter than
Bonferroni) and $\bar{G}_{\mathrm{cert}}=\overline{L_{12}}=0.0069>0$ in
all 12 configurations (up from $0.0024$ with two zero-gain
configurations)---confirming the theory's prediction that clearing the
audit-size threshold restores a positive certifiable gain.  The
decomposed certificate is more conservative
($\bar{B}_{\mathrm{marg}}=0.106$, $\bar{B}_{\mathrm{dep}}=0.099$, only
$0.6\%$ over Bonferroni) because summing two separately-audited
per-step bounds compounds their slack before any dependence correction.
Per Remark~\ref{rem:direct_union_audit} we report $\bar{U}_\cup$ as the
primary result---the tighter certificate available whenever full joint
audit data is in hand, our setting; the gap to
$\bar{B}_{\mathrm{dep}}=0.099$ is the cost of the three-interval
decomposition, not a disagreement about the underlying risk.

The improvement is not uniform: 7 of 12 configurations individually
satisfy $\bar{U}_\cup<0.10$ (all 6 RT-IoT2022/Probe, range
$0.052$--$0.083$, plus 1 CIC-IDS-2018/DoS), while the remaining 5
CIC-IDS-2018/DoS configurations stay slightly above (range
$0.099$--$0.119$) even after an $8\times$ audit increase---consistent
with that task's small $q_{12}$ ($\approx0.001$--$0.005$) sitting near
the boundary where Theorem~\ref{thm:certifiability_gap}'s convergence
rate is still slow.  Averaging over the 5 realized chains per
configuration also obscures chain-level variation
(Definition~\ref{def:two_level_risk}): across all $60$ realized
(dataset, model, seed) chains, $L_{12,s}>0$ in 56 and
$U_{\cup,s}<0.10$ in 38---the configuration-level averages above
summarize this distribution, not a claim that every chain clears both
thresholds.

By Corollary~\ref{cor:nominal_crossing}, this is expected rather than
anomalous: our 5 CIC-IDS-2018/DoS configurations that certify
$\bar{G}_{\mathrm{cert}}>0$ without individually beating Bonferroni are
instances of the $\Theta(1/q_{12})$ detection event without the
$\Theta(1/\gamma^2)$ nominal-crossing event, which
Theorem~\ref{thm:nominal_crossing_complexity} shows is a strictly
harder, differently-rated statistical target, not a contradiction.

\textbf{Testing the mechanism behind $\hat{\rho}$.}
Proposition~\ref{prop:coupling_decomposition} gives two observationally
distinct accounts for residual coupling once Step~2 is a genuinely
independent task: (i) Step~1 and Step~2 share the same underlying LLM
and may therefore share internal representations that couple their
uncertainty (a \emph{shared-representation} account, predicting
$\text{same} > \text{cross}$), or (ii) some input samples are simply
harder for \emph{any} model at both tasks, independent of which
specific model is used (a \emph{shared-difficulty} account, predicting
$\text{same} \approx \text{cross}$).  We distinguish these by
re-running the $K=2$ pipeline three ways on the same underlying network
flows, joined by the explicit per-sample index verified identical
across all six models' Step-1-restricted pools (Section~\ref{sec:setup}):
(a) \emph{same-model}, the original setting, Step~1 and Step~2 scored by
the same LLM; (b) \emph{cross-model}, Step~1 scored by model $A$ and
Step~2 by a different model $B \neq A$, all 30 ordered pairs per
dataset; (c) \emph{permuted}, same-model scoring but with the Step~2
sample order shuffled independently per seed, breaking the true sample
correspondence as a null control.

\begin{table}[t]
\centering
\caption{Inter-step correlation under three pairing conditions
($\alpha=0.10$, mean over configurations, boosted audit scale).
Permuted pairing isolates estimation artifacts.}
\label{tab:mechanism}
\begin{tabular}{lccc}
\toprule
\textbf{Dataset} & \textbf{Same-model} & \textbf{Cross-model} & \textbf{Permuted} \\
\midrule
CIC-IDS-2018/DoS   & 0.013 & 0.028 & $-$0.004 \\
RT-IoT2022/Probe   & 0.513$^\dagger$ & 0.509 & 0.006 \\
\bottomrule
\multicolumn{4}{p{0.47\textwidth}}{\footnotesize $^\dagger$
CIC-IDS-2018/DoS's same-model column exactly reproduces
Table~\ref{tab:correlation}'s dataset mean (0.013); RT-IoT2022/Probe's
0.513 differs from Table~\ref{tab:correlation}'s 0.583 by ordinary
seed-split realization variance (same 5 seeds, different
calibration/test permutation ordering)---both are valid 5-seed
averages supporting the same same$\approx$cross conclusion.}
\end{tabular}
\end{table}

Table~\ref{tab:mechanism} shows $\text{same} \approx \text{cross} \gg
\text{permuted} \approx 0$ on both datasets, consistent with the
shared-difficulty prediction rather than the shared-representation
prediction we originally reported under the deterministic mapping.
Permuted pairing collapses to $\approx 0$ on both datasets
($|\hat\rho|\leq0.006$), ruling out the possibility that the observed
correlation on RT-IoT2022/Probe is an artifact of marginal rate levels
or the estimation procedure itself---if it were, shuffling the Step~2
labels would not remove it.

Correlation alone does not identify the model-specific increment when
same-model and cross-model marginal rates differ
(Proposition~\ref{prop:coupling_decomposition} is stated in terms of
covariance), so we compute
$\widehat\Delta_{\mathrm{model}}=\widehat{\mathrm{Cov}}_{\text{same}}-\widehat{\mathrm{Cov}}_{\text{cross}}$
directly, with a model-pair bootstrap 95\% CI (2{,}000 resamples over
the 6 same-model and 30 cross-model pairs):
$\widehat\Delta_{\mathrm{model}}=-0.0007$, CI $[-0.0026, 0.0010]$ on
CIC-IDS-2018/DoS, and $\widehat\Delta_{\mathrm{model}}=0.0002$, CI
$[-0.0057, 0.0053]$ on RT-IoT2022/Probe.  Both intervals are centered
near zero and exclude any practically material positive effect at the
scale of the total same-model covariance itself (0.0006 and 0.018
respectively).  Cross-model pairing therefore preserves essentially all
of the observed coupling, and the estimated same-model increment is not
distinguishable from zero: the residual coupling on RT-IoT2022/Probe is
attributable to shared trajectory difficulty under the tested pairing
design, not to a detectable same-model representation-sharing
mechanism.  This directly contradicts what we originally concluded from
the same experiment design under the deterministic Step-2 mapping,
where the apparent same-model increment was itself an artifact of that
mapping (Section~\ref{subsec:correlation}).

\subsection{$\alpha$-Budget Allocation (E5)}
\label{subsec:alpha_alloc}

\begin{table}[t]
\centering
\caption{Comparison of $\alpha$-allocation strategies across all 12
configurations ($K = 2$, 5-seed average).  Wilcoxon signed-rank test
$p$-values compare each strategy against equal allocation.}
\label{tab:allocation}
\begin{tabular}{llccccc}
\toprule
$\bm{\alpha}$ & \textbf{Strategy} & $\bm{\alpha_1}$ & $\bm{\alpha_2}$ &
$\overline{\textbf{TC}}$ & $\overline{\textbf{TMR}}$ & $\bm{p}$ \\
\midrule
\multirow{3}{*}{0.05}
& Equal        & 0.025 & 0.025 & \textbf{0.967} & \textbf{0.033} & --- \\
& Step-1 heavy & 0.035 & 0.015 & 0.961 & 0.039 & 0.001 \\
& Step-2 heavy & 0.015 & 0.035 & 0.959 & 0.041 & 0.008 \\
\midrule
\multirow{3}{*}{0.10}
& Equal        & 0.050 & 0.050 & \textbf{0.923} & \textbf{0.077} & --- \\
& Step-1 heavy & 0.070 & 0.030 & 0.923 & 0.077 & 0.633 \\
& Step-2 heavy & 0.030 & 0.070 & 0.918 & 0.082 & 0.026 \\
\bottomrule
\end{tabular}
\end{table}

Table~\ref{tab:allocation} compares three $\alpha$-allocation strategies
across all 12 configurations.  At $\alpha=0.05$, equal allocation
achieves the highest TC and the paired difference is statistically
significant against both alternatives (Wilcoxon signed-rank test,
$p=0.001$ and $p=0.008$).  At $\alpha=0.10$, equal allocation still
beats Step-2-heavy ($p=0.026$) but is statistically indistinguishable
from Step-1-heavy ($p=0.633$, mean TC identical to three decimal
places)---a weaker result than the $p<0.01$ significance we originally
reported at both levels, consistent with the smaller, more
heterogeneous configuration count (12 vs.\ 18) and the more moderate
inter-step coupling under the real Step-2 task.

For this two-stage IDS pipeline, equal allocation remains a reasonable
default, but the evidence for its superiority over all alternatives is
weaker at looser $\alpha$ than we originally reported.  We do not claim
it is universally optimal; if future pipelines have strongly asymmetric
stages, the allocation should be re-estimated (Proposition~\ref{prop:equal}).

\subsection{Trajectory Scaling with $K$ (E3)}
\label{subsec:scaling}

\begin{table}[t]
\centering
\caption{Trajectory error estimates as a function of pipeline depth $K$
($\alpha = 0.10$, equal allocation).  $K = 2$ is empirical; $K = 3$
is a second-order extrapolation using $\hat{\rho} = 0.298$ from
$K = 2$ data.}
\label{tab:scaling}
\begin{tabular}{ccccc}
\toprule
$\bm{K}$ & $\bm{\alpha_k}$ & \textbf{Bonf.} & \textbf{Plug-in IE} & \textbf{Source} \\
\midrule
2 & 0.050 & 0.100 & 0.086 & Empirical \\
3 & 0.033 & 0.100 & 0.068 & Theoretical \\
\bottomrule
\end{tabular}
\end{table}

Table~\ref{tab:scaling} examines how the trajectory estimate scales with
pipeline depth.  Under Bonferroni, the bound is always $\alpha$
regardless of $K$ (by construction: $K \cdot \alpha/K = \alpha$).
The clipped second-order plug-in quantity $L_2 = \max\{0, S_1-S_2\}$
decreases as $K$ increases because the number of pairwise subtraction
terms grows as $\binom{K}{2}$; by
Theorem~\ref{thm:pairwise_degeneracy}, it degenerates to zero once
$K \geq K_+(\alpha,\bar{\rho}) \approx 7.51$ at our measured
$\bar{\rho}=0.298$ and $\alpha=0.10$---markedly further out than the
$K_+\approx3.55$ we originally reported under the deterministic
Step-2 mapping's inflated correlation, because $K_+$ grows
approximately as $2/\rho$ (Theorem~\ref{thm:pairwise_degeneracy}) and
our corrected $\bar\rho$ is under half its original value.

At $K = 3$, the plug-in quantity is 0.068, closer to the Bonferroni
bound than the tighter 0.021 we originally extrapolated, reflecting the
weaker measured correlation.  We report this strictly as a diagnostic
of how large the pairwise-correlation correction could be if a third
stage had comparable error coupling, not as a certified bound:
Theorem~\ref{thm:pairwise_nonidentifiability} shows that pairwise
information cannot pin down TMR once $K \geq 3$, so this number should
not be interpreted as an empirical guarantee for a real three-stage
agent.  A real three-step deployment should instead use the
Spanning-Tree Pairwise Upper Bound (Theorem~\ref{thm:hunter_trajectory}),
which remains a valid, certifiable upper bound at any $K$ given genuine
joint-overlap measurements from that pipeline.

We restrict the table to $K \leq 3$ for exactly this reason: beyond the
point where $L_2$ degenerates, the second-order plug-in number carries
no information at all about the true trajectory risk, and reporting it
further would be misleading rather than merely imprecise.

\subsection{Behavior Under Distribution Shift (E7)}
\label{subsec:drift}

Step~2's label space is dataset-specific under the corrected design
(CIC-IDS-2018's four DoS variants and RT-IoT2022's four Probe variants
do not overlap, Section~\ref{sec:setup}), so a cross-dataset Step-2
transfer experiment is not well-defined.  Step~1's coarse 5-category
label space (Normal, DoS, Probe, CredentialAccess, Exploitation) is
shared across both datasets, so we test exchangeability violation at
Step~1: take the LoRA adapter fine-tuned and conformally calibrated on
dataset $A$ and deploy it---same weights, same threshold $q_1$---on
dataset $B$'s test traffic, real out-of-domain input scored by real
inference (not the deterministic-mapping design of an earlier draft).

\begin{table*}[t]
\centering
\caption{Step-1 single-step miscoverage rate (SMR$_1$) under
cross-dataset distribution shift ($\alpha=0.10$, real inference,
$n=2{,}000$ per cell).  Only Step~1 is tested (Section~\ref{sec:setup}:
Step~2's label space is dataset-specific, so trajectory-level shift is
not evaluable).  Argmax accuracy is reported separately from SMR$_1$
because a model can remain partially accurate under shift while its
\emph{calibrated confidence} no longer clears the in-domain threshold
$q_1$.}
\label{tab:drift}
\begin{tabular}{lccccc}
\toprule
\textbf{Model} & \textbf{Cal.$\to$Test} & $\bm{q_1}$ & \textbf{Argmax Acc.} & \textbf{SMR$_1$} & \textbf{Empty-Set Rate} \\
\midrule
Gemma-2 9B  & CIC$\to$RT-IoT & 0.0000 & 0.779 & 1.000 & 1.000 \\
Gemma-2 9B  & RT-IoT$\to$CIC & 0.0000 & 0.006 & 1.000 & 0.957 \\
LLaMA-3 8B  & CIC$\to$RT-IoT & 0.0142 & 0.773 & 1.000 & 1.000 \\
LLaMA-3 8B  & RT-IoT$\to$CIC & 0.0149 & 0.248 & 1.000 & 1.000 \\
Mistral 7B  & CIC$\to$RT-IoT & 0.0021 & 0.040 & 1.000 & 1.000 \\
Mistral 7B  & RT-IoT$\to$CIC & 0.0013 & 0.296 & 1.000 & 1.000 \\
Qwen-3 8B   & CIC$\to$RT-IoT & 0.0215 & 0.100 & 1.000 & 0.997 \\
Qwen-3 8B   & RT-IoT$\to$CIC & 0.0159 & 0.261 & 1.000 & 1.000 \\
Qwen-3 14B  & CIC$\to$RT-IoT & 0.0536 & 0.137 & 1.000 & 1.000 \\
Qwen-3 14B  & RT-IoT$\to$CIC & 0.0024 & 0.252 & 1.000 & 1.000 \\
Qwen-3 32B  & CIC$\to$RT-IoT & 0.0085 & 0.104 & 1.000 & 1.000 \\
Qwen-3 32B  & RT-IoT$\to$CIC & 0.0116 & 0.105 & 1.000 & 1.000 \\
\bottomrule
\end{tabular}
\end{table*}

Table~\ref{tab:drift} shows $\text{SMR}_1=1.000$ (zero of 2{,}000 test
samples covered) in all 12 cells, with no exceptions.  Argmax accuracy
varies widely (0.6\%--78\%), showing this is not simply ``the model
got worse'': Gemma-2~9B and LLaMA-3~8B retain 77--78\% top-1 accuracy
on CIC-trained-calibrated-then-RT-IoT-tested traffic, yet still achieve
zero conformal coverage, because $q_1$ was calibrated
to the model's \emph{in-domain} confidence distribution (small $q_1$,
e.g.\ $0.0000$ for Gemma-2~9B, reflecting near-certain in-domain
predictions) and out-of-domain confidence never reaches that bar even
when the top-1 label is correct---the maximum observed true-label
probability across all 2{,}000 Qwen-3~32B/RT-IoT$\to$CIC test samples
is 0.866, short of the required $1-q_1=0.988$.  This is exactly the
failure mode Theorem~\ref{thm:bonf} predicts: the coverage
\emph{guarantee} requires exchangeable calibration and test data, and
gives no certified warning when that assumption breaks.  The prediction
sets themselves, however, are not silent: the empty-set rate reaches
$0.96$--$1.00$ in all 12 cells (final table column)---under
exchangeability, an empty conformal set is a rare event by
construction (bounded by $\alpha_1$), so a set that is empty on
essentially every input is a stark, observable, label-free symptom
available at inference time, well before any delayed ground truth
could confirm the shift.  This sharpens, rather than repeats, the
deployment recommendation of monitoring singleton rate and mean set
size (Section~\ref{sec:discussion}): under this severe a shift, the
signal is not subtle.  A high SMR$_1$ itself remains a statistical
quantity requiring ground truth to compute
(Definition~\ref{def:tar}), but the empty-set rate that produces it
does not.  Recalibration on target-environment data, or a
drift-adaptive extension such as adaptive conformal
inference~\cite{gibbs2021adaptive} or conformal prediction beyond
exchangeability~\cite{barber2023conformal}, is required before
cross-environment deployment.

\subsection{Per-Configuration Breakdown}
\label{subsec:per_config}

\begin{table*}[t]
\centering
\caption{Per-configuration trajectory coverage results at $\alpha = 0.10$
($K=2$, equal allocation, 5-seed average, boosted audit scale).  1 of 12
configurations (marked $^\ddagger$) falls marginally below
$\text{TC} \geq 0.900$; see discussion below and
Section~\ref{subsec:limitations}.  Residual = plug-in IE estimate
(Eq.~\eqref{eq:ie_alpha}) $-$ empirical TMR; negative values indicate
plug-in underestimation, not a bound violation
(Remark~\ref{rem:tight}).}
\label{tab:per_config}
\begin{tabular}{llcccccc}
\toprule
\textbf{Dataset/Category} & \textbf{Model} & \textbf{TC} & \textbf{TMR} &
\textbf{SMR$_1$} & \textbf{SMR$_2$} & $\hat{\rho}_{12}$ &
\textbf{Residual} \\
\midrule
CIC-IDS-2018/DoS & Gemma-2 9B  & 0.901 & 0.100 & 0.049 & 0.052 & $-$0.027 & $+$0.002 \\
CIC-IDS-2018/DoS & LLaMA-3 8B$^\ddagger$  & 0.894 & 0.106 & 0.052 & 0.059 & 0.039 & $-$0.008 \\
CIC-IDS-2018/DoS & Mistral 7B  & 0.902 & 0.098 & 0.045 & 0.054 & $-$0.042 & $+$0.004 \\
CIC-IDS-2018/DoS & Qwen-3 8B   & 0.913 & 0.087 & 0.037 & 0.053 & 0.049 & $+$0.011 \\
CIC-IDS-2018/DoS & Qwen-3 14B  & 0.900 & 0.100 & 0.050 & 0.055 & 0.029 & $-$0.002 \\
CIC-IDS-2018/DoS & Qwen-3 32B  & 0.906 & 0.094 & 0.049 & 0.049 & 0.028 & $+$0.005 \\
\midrule
RT-IoT2022/Probe & Gemma-2 9B    & 0.954 & 0.047 & 0.033 & 0.036 & 0.645 & $+$0.023 \\
RT-IoT2022/Probe & LLaMA-3 8B    & 0.962 & 0.038 & 0.033 & 0.027 & 0.734 & $+$0.027 \\
RT-IoT2022/Probe & Mistral 7B    & 0.952 & 0.048 & 0.037 & 0.039 & 0.741 & $+$0.017 \\
RT-IoT2022/Probe & Qwen-3 8B     & 0.952 & 0.049 & 0.029 & 0.035 & 0.451 & $+$0.030 \\
RT-IoT2022/Probe & Qwen-3 14B    & 0.960 & 0.040 & 0.032 & 0.033 & 0.780 & $+$0.023 \\
RT-IoT2022/Probe & Qwen-3 32B    & 0.934 & 0.066 & 0.040 & 0.032 & 0.145 & $+$0.027 \\
\bottomrule
\end{tabular}
\end{table*}

Table~\ref{tab:per_config} provides a per-configuration breakdown at
$\alpha = 0.10$.  Several observations complement the aggregate results:
(i)~CIC-IDS-2018/DoS/Qwen-3~8B is the easiest CIC configuration
($\text{TMR}=0.087$), while CIC-IDS-2018/DoS/LLaMA-3~8B
is the only configuration to fall marginally below the nominal target
($\text{TC}=0.894$, $\text{TMR}=0.106$)---within ordinary finite-sample
calibration noise (Section~\ref{subsec:limitations}), not a systematic
violation, since Theorem~\ref{thm:bonf}'s guarantee concerns
$R_{\mathrm{marg}}$ averaged over calibration draws; (ii)~per-step
miscoverage rates $\text{SMR}_1$ and $\text{SMR}_2$ are comparable
within each configuration, consistent with the two steps having similar
difficulty; (iii)~on CIC-IDS-2018/DoS, near-zero $\hat\rho_{12}$
(including 3 negative point estimates) makes the plug-in IE residual
small and occasionally negative (LLaMA-3~8B and Qwen-3~14B), i.e.\ a
mild point-estimate underestimate, not a bound violation
(Remark~\ref{rem:tight}); (iv)~on RT-IoT2022/Probe, moderate-to-strong
$\hat\rho_{12}$ produces a consistently positive residual
(+0.017 to +0.030), meaning the plug-in estimate remains a useful,
if uncertified, operational guide there.  The Bonferroni bound
(Theorem~\ref{thm:bonf}) holds for all 12 configurations at the
marginal level without exception; the certified seed-wise union bound
(Section~\ref{subsec:hunter_empirical}) satisfies $\bar{U} < 0.10$ for
7 of 12 configurations, with the remaining 5 (all CIC-IDS-2018/DoS)
certifying a positive but small dependence gain that does not fully
close the gap to Bonferroni at this audit scale.

\section{Discussion}
\label{sec:discussion}

\subsection{Practical Deployment Guidelines}

Based on our findings, we recommend the following for deploying
multi-step LLM-based security agents with trajectory coverage
guarantees:

\begin{enumerate}[leftmargin=*]
\item \textbf{Use equal $\alpha$-allocation as the default}, unless
validation data show clear stage asymmetry---the best tested allocation
in our pipeline (Section~\ref{subsec:alpha_alloc}).

\item \textbf{Estimate inter-step error coupling conservatively} on a
validation stream; if the lower confidence bound on $\hat\rho_{12}$
remains positive, the inclusion--exclusion estimate is tighter than
Bonferroni.

\item \textbf{Monitor observable prediction-set statistics
(singleton rate, mean set size) as a proxy for periodic TMR
re-estimation}, since TMR itself requires ground truth and cannot be
observed at inference time (Definition~\ref{def:tar}); recalibrate if
the re-estimated TMR exceeds the calibrated operating range.

\item \textbf{Recalibrate, do not cross-deploy.}  Exchangeability is
essential to Theorem~\ref{thm:bonf}'s guarantee; calibrating on one
environment and deploying on another can push TMR arbitrarily high
with no warning from the threshold itself (Section~\ref{subsec:drift}).
\end{enumerate}

\subsection{Limitations}
\label{subsec:limitations}

\textbf{L1: Empirical validation limited to $K = 2$.}
The Spanning-Tree Pairwise Upper Bound
(Theorem~\ref{thm:hunter_trajectory}) remains distribution-free-certifiable
for any $K$ given real joint-overlap measurements, but Table~\ref{tab:scaling}'s
$K=3$ row is a theoretical extrapolation, not a real three-step
measurement; Theorem~\ref{thm:pairwise_nonidentifiability} shows such
an evaluation cannot rely on pairwise information alone once $K\geq3$.

\textbf{L2: Correlation estimation uncertainty.}
95\% $t$-interval widths for $\hat\rho_{12}$ range from 0.02 to 0.28
across the 12 configurations (Table~\ref{tab:correlation}), and several
CIC-IDS-2018/DoS point estimates cannot reliably distinguish sign from
zero at this scale, producing small plug-in underestimates (LLaMA-3~8B,
$-0.008$; Qwen-3~14B, $-0.002$)---not bound violations
(Remark~\ref{rem:tight}).  Section~\ref{subsec:hunter_empirical} avoids
this via seed-wise exact intervals on $q_{12}$ directly; a
finite-sample correction for $\hat\rho_{12}$ itself remains open.

\textbf{L3: Exchangeability assumption.}
The trajectory guarantee holds under exchangeability of calibration
and test data, as for single-step CP; it may fail under shift
(Section~\ref{subsec:drift}).  Combining trajectory guarantees with
adaptive CP methods~\cite{gibbs2021adaptive,barber2023conformal,farinhas2024nonexchangeable}
is natural future work.

\textbf{L4: HIKARI-2021 excluded; attack-variant classifier did not
train.}  HIKARI-2021's CredentialAccess task (Bruteforce vs.\
Bruteforce-XML) has a large raw feature gap between candidates
($\text{flow\_pkts\_per\_sec}$ differs by up to $16\times$), yet
fine-tuning did not exceed near-chance accuracy for any of the 6 LLMs.
Two candidate bugs were identified in the training-data construction
(a token-boundary tokenization artifact, a length correction applied
twice) but neither was confirmed before we excluded HIKARI-2021 rather
than report unverified data---a limitation of our fine-tuning
pipeline, not evidence the task is unlearnable.

\textbf{L5: Audit sample size determines certifiability, not just
precision.}  Theorems~\ref{thm:positive_gain_probability}
and~\ref{thm:nominal_crossing_complexity} predict, and
Section~\ref{subsec:hunter_empirical} confirms, that detecting a
positive gain needs $\Theta(1/q_{12})$ audit trajectories while beating
the nominal Bonferroni target needs the harder $\Theta(1/\gamma^2)$: at
$n_{\text{test}}=397/109$, $\bar{U}_\cup$ was worse than Bonferroni; at
$3{,}500/1{,}376$, it was $13.7\%$ better (5 of 12 configurations still
short individually).  Dependence-aware bounds should therefore report
audit sample size alongside the certified value---too small an audit
makes real dependence indistinguishable from a genuine null result.

\textbf{L6: Pairwise information is fundamentally insufficient for
$K \geq 3$.}  Theorem~\ref{thm:pairwise_nonidentifiability} proves no
amount of pairwise-only information can identify TMR once $K\geq3$;
the Spanning-Tree bound sidesteps this by only claiming an upper bound.
Certifying a tight two-sided estimate for deeper pipelines requires
third-order measurements or an explicit dependence model.

\section{Conclusion}
\label{sec:conclusion}

We formalize trajectory-level coverage for staged LLM-based security
agents as an instance of a general problem: how marginal, per-step risk
guarantees compose across a decision chain, and how much of that
composition a finite post-hoc audit can actually certify.  Beyond the
distribution-free Bonferroni bound, we correct a natural but invalid
extension of the two-step inclusion--exclusion identity to more
steps---a lower, not upper, bound---and replace it with a spanning-tree
certificate that separates the true, oracle, and certifiable dependence
gain (Theorem~\ref{thm:gain_decomposition}) and remains valid and
certifiable at any $K$.  Pairwise information alone cannot determine
trajectory risk once three or more steps are involved, yet positive
inter-step dependence is a statistical asset for a union-type
trajectory-failure criterion precisely where it would be a liability
for an intersection-type redundant system.

Empirically, a fixed coarse-to-fine label mapping mechanically nests
two stages' failure events, manufacturing near-1 measured correlation
independent of any learned behavior (Theorem~\ref{thm:label_induced_coupling});
a genuinely independent attack-variant classifier instead reveals real
but heterogeneous, often hard-to-certify coupling.  The resulting
certified bound flips from worse than Bonferroni to $13.7\%$ tighter
purely as a function of audit sample size, exactly as our
sample-complexity theory predicts, while the modular pairwise
certificate improves only marginally---quantifying the certification
cost of not having joint access to both stages.  A
same-model/cross-model/permuted-pairing test attributes the residual
coupling to shared sample difficulty, not shared model representations.
We view this methodological lesson---that a plausible task design can
manufacture the very dependence a paper reports as a discovery, and
that a certified bound can flip sign with audit size alone---as no less
a contribution than the numerical results themselves.  Separately, a
real cross-dataset shift test drives single-step miscoverage to 100\%
even at high top-1 accuracy, though the resulting empty prediction
sets are a stark, label-free symptom available before any delayed
ground truth could confirm the shift.

Future work includes resolving the training-data issue that prevented
a validated HIKARI-2021 attack-variant classifier (L4), extending
empirical validation to deeper pipelines ($K>3$), developing
higher-order bounds informative for large $K$, and combining
trajectory guarantees with adaptive CP methods for drift-robust
multi-step agents.

\bibliographystyle{IEEEtran}
\bibliography{refs}

@inbook{vovk2005algorithmic,
author = {Vovk, Vladimir and Gammerman, Alex and Shafer, Glenn},
year = {2005},
month = {01},
pages = {},
title = {Algorithmic Learning in a Random World},
journal = {Algorithmic Learning in a Random World},
doi = {10.1007/b106715}
}

@inproceedings{papadopoulos2002inductive,
author = {Papadopoulos, Harris and Proedrou, Kostas and Vovk, Volodya and Gammerman, Alex},
title = {Inductive confidence machines for regression},
year = {2002},
isbn = {3540440364},
publisher = {Springer-Verlag},
address = {Berlin, Heidelberg},
url = {https://doi.org/10.1007/3-540-36755-1_29},
doi = {10.1007/3-540-36755-1_29},
booktitle = {Proceedings of the 13th European Conference on Machine Learning},
pages = {345–356},
numpages = {12},
location = {Helsinki, Finland},
series = {ECML'02}
}

@inproceedings{angelopoulos2022conformal,
title={Conformal Risk Control},
author={Anastasios Nikolas Angelopoulos and Stephen Bates and Adam Fisch and Lihua Lei and Tal Schuster},
booktitle={The Twelfth International Conference on Learning Representations},
year={2024},
url={https://openreview.net/forum?id=33XGfHLtZg}
}

@article{sidak1967rectangular,
  title={Rectangular Confidence Regions for the Means of Multivariate Normal Distributions},
  author={Zbyn\v{e}k \v{S}id{\'a}k},
  journal={Journal of the American Statistical Association},
  year={1967},
  volume={62},
  pages={626-633},
}

@article{hochberg1988sharper,
    author = {HOCHBERG, YOSEF},
    title = {A sharper Bonferroni procedure for multiple tests of significance},
    journal = {Biometrika},
    volume = {75},
    number = {4},
    pages = {800-802},
    year = {1988},
    month = {12},
    issn = {0006-3444},
    doi = {10.1093/biomet/75.4.800},
    url = {https://doi.org/10.1093/biomet/75.4.800},
    eprint = {https://academic.oup.com/biomet/article-pdf/75/4/800/1170595/75-4-800.pdf},
}

@article{holm1979simple,
  title={A Simple Sequentially Rejective Multiple Test Procedure},
  author={Sture Holm},
  journal={Scandinavian Journal of Statistics},
  year={1979},
  volume={6},
  pages={65-70},
}

@article{benjamini1995controlling,
    author = {Benjamini, Yoav and Hochberg, Yosef},
    title = {Controlling the False Discovery Rate: A Practical and Powerful Approach to Multiple Testing},
    journal = {Journal of the Royal Statistical Society: Series B (Methodological)},
    volume = {57},
    number = {1},
    pages = {289-300},
    year = {1995},
    month = {01},
    issn = {0035-9246},
    doi = {10.1111/j.2517-6161.1995.tb02031.x},
    url = {https://doi.org/10.1111/j.2517-6161.1995.tb02031.x},
    eprint = {https://academic.oup.com/jrsssb/article-pdf/57/1/289/49173396/jrsssb_57_1_289.pdf},
}

@inproceedings{gibbs2021adaptive,
author = {Gibbs, Isaac and Cand\`{e}s, Emmanuel J.},
title = {Adaptive conformal inference under distribution shift},
year = {2021},
isbn = {9781713845393},
publisher = {Curran Associates Inc.},
address = {Red Hook, NY, USA},
booktitle = {Proceedings of the 35th International Conference on Neural Information Processing Systems},
articleno = {128},
numpages = {13},
series = {NIPS '21}
}

@article{barber2023conformal,
author = {Barber, Rina and Candès, Emmanuel and Ramdas, Aaditya and Tibshirani, Ryan},
year = {2023},
month = {04},
pages = {},
title = {Conformal prediction beyond exchangeability},
volume = {51},
journal = {The Annals of Statistics},
doi = {10.1214/23-AOS2276}
}

@article{bates2021distribution,
author = {Bates, Stephen and Angelopoulos, Anastasios and Lei, Lihua and Malik, Jitendra and Jordan, Michael},
title = {Distribution-free, Risk-controlling Prediction Sets},
year = {2021},
issue_date = {December 2021},
publisher = {Association for Computing Machinery},
address = {New York, NY, USA},
volume = {68},
number = {6},
issn = {0004-5411},
url = {https://doi.org/10.1145/3478535},
doi = {10.1145/3478535},
journal = {J. ACM},
month = sep,
articleno = {43},
numpages = {34}
}

@inproceedings{motlagh2024llm,
  title={Large Language Models in Cybersecurity: State-of-the-Art},
  author={Farzad Nourmohammadzadeh Motlagh and Mehrdad Hajizadeh and Mehryar Majd and Pejman Najafi and Feng Cheng and Christoph Meinel},
  booktitle={International Conference on Information Systems Security and Privacy},
  year={2024},
}

@article{xu2024autoattacker,
  title={AutoAttacker: A Large Language Model Guided System to Implement Automatic Cyber-attacks},
  author={Jiacen Xu and Jack W. Stokes and Geoff McDonald and Xuesong Bai and David Marshall and Siyue Wang and Adith Swaminathan and Zhou Li},
  journal={ArXiv},
  year={2024},
  volume={abs/2403.01038},
}

@inproceedings{sharafaldin2018cicids,
  title={Toward Generating a New Intrusion Detection Dataset and Intrusion Traffic Characterization},
  author={Iman Sharafaldin and Arash Habibi Lashkari and Ali A. Ghorbani},
  booktitle={International Conference on Information Systems Security and Privacy},
  year={2018},
}

@article{sharmila2023rtiot,
  title={Quantized autoencoder (QAE) intrusion detection system for anomaly detection in resource-constrained IoT devices using RT-IoT2022 dataset},
  author={B. S. Sharmila and Rohini Nagapadma},
  journal={Cybersecurity},
  year={2023},
  volume={6},
  pages={1-15},
}

@article{qwen2024qwen2,
  title={Qwen2 Technical Report},
  author={An Yang and others},
  journal={ArXiv},
  year={2024},
  volume={abs/2407.10671},
}

@article{Riviere2024Gemma2I,
  title={Gemma 2: Improving Open Language Models at a Practical Size},
  author={Gemma Team and others},
  journal={ArXiv},
  year={2024},
  volume={abs/2408.00118},
}

@misc{meta2024llama3,
      title={The Llama 3 Herd of Models}, 
      author={Aaron Grattafiori and others},
      year={2024},
      eprint={2407.21783},
      archivePrefix={arXiv},
      primaryClass={cs.AI},
      url={https://arxiv.org/abs/2407.21783}, 
}

@misc{mistral2024mistral,
      title={Mistral 7B}, 
      author={Albert Q. Jiang and others},
      year={2023},
      eprint={2310.06825},
      archivePrefix={arXiv},
      primaryClass={cs.CL},
      url={https://arxiv.org/abs/2310.06825}, 
}

@article{kwon2023vllm,
  title={Efficient Memory Management for Large Language Model Serving with PagedAttention},
  author={Woosuk Kwon and others},
  journal={Proceedings of the 29th Symposium on Operating Systems Principles},
  year={2023},
}

@article{hunter1976upper, 
title={An upper bound for the probability of a union}, 
volume={13}, 
DOI={10.2307/3212481}, 
number={3}, 
journal={Journal of Applied Probability}, 
author={Hunter, David}, 
year={1976}, 
pages={597–603}
}

@article{esary1967association,
 ISSN = {00034851},
 URL = {http://www.jstor.org/stable/2238962},
 author = {J. D. Esary and F. Proschan and D. W. Walkup},
 journal = {The Annals of Mathematical Statistics},
 number = {5},
 pages = {1466--1474},
 publisher = {Institute of Mathematical Statistics},
 title = {Association of Random Variables, with Applications},
 urldate = {2026-07-29},
 volume = {38},
 year = {1967}
}

@article{angelopoulos2025learn,
author = {Anastasios N. Angelopoulos and Stephen Bates and Emmanuel J. Cand{\`e}s and Michael I. Jordan and Lihua Lei},
title = {{Learn then test: Calibrating predictive algorithms to achieve risk control}},
volume = {19},
journal = {The Annals of Applied Statistics},
number = {2},
publisher = {Institute of Mathematical Statistics},
pages = {1641--1662},
year = {2025},
doi = {10.1214/24-AOAS1998}
}

@article{xu2024large,
  author = {Xu, Hanxiang and Wang, Shenao and Li, Ningke and Wang, Kailong and Zhao, Yanjie and Chen, Kai and Yu, Ting and Liu, Yang and Wang, Haoyu},
  title = {Large Language Models for Cyber Security: A Systematic Literature Review},
  year = {2025},
  publisher = {Association for Computing Machinery},
  address = {New York, NY, USA},
  issn = {1049-331X},
  doi = {10.1145/3769676}
}

@article{pasc2026,
  author = {Kotte, Varun},
  title = {PASC: Pipeline-Aware Conformal Prediction with Joint Coverage Guarantees for Multi-Stage {NLP} and {LLM} Pipelines},
  journal = {arXiv preprint arXiv:2605.18812},
  year = {2026}
}

@misc{mitre_t1498,
  author = {{MITRE ATT\&CK}},
  title = {T1498: Network Denial of Service},
  howpublished = {\url{https://attack.mitre.org/techniques/T1498/}},
  year = {2026}
}

@inproceedings{hu2022lora,
  author = {Hu, Edward J. and others},
  title = {{LoRA}: Low-Rank Adaptation of Large Language Models},
  booktitle = {International Conference on Learning Representations (ICLR)},
  year = {2022}
}

@inproceedings{ding2023class,
 author = {Ding, Tiffany and Angelopoulos, Anastasios and Bates, Stephen and Jordan, Michael and Tibshirani, Ryan J},
 booktitle = {Advances in Neural Information Processing Systems},
 doi = {10.52202/075280-2817},
 title = {Class-Conditional Conformal Prediction with Many Classes},
 year = {2023}
}

@article{khosravi2026csa,
  title={Conformal Selective Acting: Anytime-Valid Risk Control for {RLVR}-Trained {LLM}s},
  author={Khosravi, Hamed and Huo, Xiaoming},
  journal={arXiv preprint arXiv:2605.20270},
  year={2026}
}

@article{wald1945sequential,
  title={Sequential Tests of Statistical Hypotheses},
  author={Wald, Abraham},
  journal={Annals of Mathematical Statistics},
  volume={16},
  pages={117--186},
  year={1945}
}

@article{escudero2025conformal,
  author = {Escudero Garc{\'\i}a, David and DeCastro-Garc{\'\i}a, Noem{\'\i}},
  title = {Conformal prediction for labelling and updating online models in the presence of concept drift in cybersecurity},
  year = {2025},
  volume = {93},
  doi = {10.1016/j.jisa.2025.104120},
  journal = {Journal of Information Security and Applications}
}

@article{laxhammar2015inductive,
  author = {Laxhammar, Rikard and Falkman, G{\"o}ran},
  title = {Inductive conformal anomaly detection for sequential detection of anomalous sub-trajectories},
  year = {2015},
  volume = {74},
  number = {1--2},
  doi = {10.1007/s10472-013-9381-7},
  journal = {Annals of Mathematics and Artificial Intelligence}
}

@misc{manoharan2026audited,
  title={Audited Selective Verification for Risk-Controlled N-1 Thermal Contingency Screening under Deployment Shift},
  author={Manoharan, Jayakumar},
  year={2026},
  eprint={2607.13221},
  archivePrefix={arXiv}
}

@inproceedings{farinhas2024nonexchangeable,
  title={Non-Exchangeable Conformal Risk Control},
  author={Farinhas, Ant{\'o}nio and Zerva, Chrysoula and Ulmer, Dennis Thomas and Martins, Andre},
  booktitle={International Conference on Learning Representations (ICLR)},
  year={2024}
}

@article{howard2021timeuniform,
  title={Time-uniform, nonparametric, nonasymptotic confidence sequences},
  author={Howard, Steven R. and Ramdas, Aaditya and McAuliffe, Jon and Sekhon, Jasjeet},
  journal={Annals of Statistics},
  volume={49},
  number={2},
  pages={1055--1080},
  year={2021}
}

@article{guo2017calibration,
  title={On Calibration of Modern Neural Networks},
  author={Guo, Chuan and Pleiss, Geoff and Sun, Yu and Weinberger, Kilian Q.},
  journal={arXiv preprint arXiv:1706.04599},
  year={2017}
}

\end{document}